\newif\ifdraft
\definecolor{Darkblue}{rgb}{0,0,0.4}
\definecolor{Brown}{cmyk}{0,0.81,1.,0.60}
\definecolor{Purple}{cmyk}{0.45,0.86,0,0}
\newtheorem{theorem}{Theorem}[section]
\newtheorem{lemma}[theorem]{Lemma}
\newtheorem{observation}[theorem]{Observation}
\newtheorem{claim}[theorem]{Claim}
\newtheorem{corollary}[theorem]{Corollary}
\newtheorem{invariant}[theorem]{Invariant}
\numberwithin{algorithm}{section}
\newcommand{\junk}[1]{}
\newcommand{\ignore}[1]{}
\newcommand{\R}[0]{{\ensuremath{\mathbb{R}}}}
\newcommand{\E}[0]{{\ensuremath{\mathbb{E}}}}
\newcommand{\poly}{\operatorname{poly}}
\newcommand{\sse}{\subseteq}
\newcommand{\calC}{{\mathscr{C}}}
\newcommand{\e}{\varepsilon}
\newcommand{\eps}{\varepsilon}
\newcommand{\tsty}{\textstyle}
\newcounter{note}[section]
\renewcommand{\thenote}{\thesection.\arabic{note}}
\newcommand{\agnote}[1]{\refstepcounter{note}$\ll${\bf Anupam~\thenote:}
  {\sf \color{red} #1}$\gg$\marginpar{\tiny\bf AG~\thenote}}
\newcommand{\nbnote}[1]{\refstepcounter{note}$\ll${\bf Niv~\thenote:}
  {\sf \color{green} #1}$\gg$\marginpar{\tiny\bf NB~\thenote}}
\definecolor{purple}{rgb}{0.7,0,0.8}
\newcommand{\mmnote}[1]{\refstepcounter{note}$\ll${\bf Marco~\thenote:}
  {\sf \color{purple} #1}$\gg$\marginpar{\tiny\bf MM~\thenote}}
\newcommand{\snnote}[1]{\refstepcounter{note}$\ll${\bf Seffi~\thenote:}
  {\sf \color{red} #1}$\gg$\marginpar{\tiny\bf SN~\thenote}}
\newcommand{\agnote}[1]{}
\newcommand{\nbnote}[1]{}
\newcommand{\mmnote}[1]{}
\newcommand{\snnote}[1]{}
\newcommand{\alert}[1]{{\color{red}#1}}
\newcommand{\qedsymb}{\hfill{\rule{2mm}{2mm}}}
\newcommand{\initOneLiners}{%
    \setlength{\itemsep}{0pt}
    \setlength{\parsep }{0pt}
    \setlength{\topsep }{0pt}
%      \usecounter{myLISTctr}
}
\newenvironment{OneLiners}[1][\ensuremath{\bullet}]
    {\begin{list}
        {#1}
        {\initOneLiners}}
    {\end{list}}
\newcommand{\squishlist}{
 \begin{list}{$\bullet$}
  { \setlength{\itemsep}{0pt}
     \setlength{\parsep}{3pt}
     \setlength{\topsep}{3pt}
     \setlength{\partopsep}{0pt}
     \setlength{\leftmargin}{1.5em}
     \setlength{\labelwidth}{1em}
     \setlength{\labelsep}{0.5em} } }
\newcommand{\squishend}{
  \end{list}  }
\newcommand{\gr}{\nabla}
\newcommand{\ip}[1]{\langle #1 \rangle}
\DeclarePairedDelimiterX{\infdivx}[2]{(}{)}{%
  #1\;\delimsize\|\;#2%
}
\newcommand{\Ddiv}{D\infdivx}
\newcommand{\DXdiv}[1]{D_{#1}\infdivx}
\newcommand{\Pdiv}{\Phi\infdivx}
\DeclarePairedDelimiter{\norm}{\lVert}{\rVert}
\renewcommand{\a}{\mathbf{x}}
\renewcommand{\b}{\mathbf{y}}
\newcommand{\x}{\mathbf{x}}
\newcommand{\y}{\mathbf{y}}
\newcommand{\z}{\mathbf{z}}
\newcommand{\wlone}[1]{\norm{#1}_{\ell_1(w)}}
\newcommand{\Tlone}[1]{\norm{#1}_{\ell_t(T)}}
\newcommand{\ones}{\bm{1}}
\newcommand{\half}{\nicefrac12}
\begin{document}

\title{{\bf $k$-Servers with a Smile: \\Online Algorithms via Projections}}

\author{Niv Buchbinder\thanks{Dept. of Statistics and Operations Research, Tel Aviv University, Israel.} \and Anupam Gupta\thanks{Computer Science Department, Carnegie Mellon University,
Pittsburgh, USA. Supported in part by NSF awards CCF-1536002, CCF-1540541,
and CCF-1617790, and the Indo-US Joint Center for Algorithms Under Uncertainty.} \and Marco Molinaro\thanks{PUC-Rio, Rio de Janeiro, Brazil. Supported in part by CNPq grants Universal \#431480/2016-8 and Bolsa de Produtividade em Pesquisa \#310516/2017-0, and FAPERJ grant Jovem Cientista do Nosso Estado.} \and Joseph (Seffi) Naor\thanks{Computer Science Department, Technion, Israel.}}

\maketitle

\begin{abstract}
  We consider the $k$-server problem on trees and HSTs. We give 
  an algorithm based on 
  Bregman projections. This algorithm has a competitive ratios that
  match some of the recent results given by Bubeck et al. (STOC 2018),
  whose algorithm was based on
  mirror-descent-based continuous dynamics prescribed via a differential
  inclusion.
\end{abstract}

\section{Introduction}
\label{sec:introduction}

The $k$-server problem is one of the cornerstones of online algorithms and competitive analysis.
It captures many other classic online problems (like paging) that maintain
``feasible'' stages while satisfying a sequence of requests arriving
online. Given a metric space $(X,d)$ on
$n = |X|$ points, the input is a sequence of requests
$r_1, r_2, \ldots, r_T, \ldots$, where each request $r_t$ is a point in
the metric space. The algorithm maintains a set $A_t \sse X$ of $k$
points in the metric, which gives the locations of the $k$
\emph{servers}. We require that $r_t \in A_t$ for all $t$, i.e., at each
time-step there is a server at the requested point. The cost of the
algorithm is the sum of the (earthmover) distances between the
consecutive states of the algorithm; i.e., the total distance
traveled by the $k$ servers while occupying locations $A_t$ at time $t$.

The problem has a long rich history; we list some relevant events and
refer to~\cite{BBMN11,BCLLM17} for more references.  Manasse et
al.~\cite{MMS} introduced it and conjectured a (deterministic)
$k$-competitive algorithm for all metrics; there is a deterministic
lower bound of $k$ even for the uniform metric (which is equivalent to
the \emph{paging} problem). This conjecture technically still remains
open, though the $(2k-1)$-competitive algorithm of Koutsoupias and
Papadimitriou~\cite{KP} settled it in spirit. The focus then shifted to
the randomized $k$-server conjecture: \emph{can we get an
  $O(\log k)$-competitive randomized algorithm for general metrics?}
Such a result would be tight, since there is a lower bound of
$\Omega (\log k)$, again coming from the paging problem.  The first
non-trivial improvement over the deterministic case was a
$\poly\log(k,n)$-competitive randomized algorithm which is due to Bansal
et al.~\cite{BBMN11}. Very recently, Bubeck et al.~\cite{BCLLM17}
achieved a breakthrough, introducing several new ideas to give an
algorithm that is $O(D \log k)$-competitive on depth-$D$ trees, and
$O(\log^2 k)$-competitive on HSTs. Moreover, they gave a dynamic
tree-embedding result to show an
$O(\log^3 k \log \Delta)$-competitiveness result for general
metrics. Subsequently, Lee~\cite{Lee18} employed further new ideas to
remove the dependence on $\Delta$ and achieve an
$O(\log^6 k)$-competitive randomized algorithm for general metrics. This
is the first $\poly\log k$-competitive algorithm for general metrics.

The~\cite{BCLLM17} paper defined a differential inclusion, whose (unique)
solution gives a fractional solution to an LP relaxation for the
problem. Our first result is a different algorithm (albeit directly
inspired by theirs) with the same aymptotic guarantees. 

We first give the result for trees with small hop-diameter:
\begin{theorem}[Low-Depth Trees]
  \label{thm:main}
  There is a deterministic algorithm that outputs a fractional
  solution for the $k$-server problem when the metric $(X,d)$ is a tree
  metric, with competitive ratio $O(D \log k)$, where $D$ is the
  hop-diameter of the tree.
\end{theorem}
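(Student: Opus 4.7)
The plan is to design an online fractional algorithm via an explicit Bregman projection---a discrete counterpart of the continuous mirror-descent dynamics in~\cite{BCLLM17}.

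\textit{Formulation and algorithm.} I would use the anti-server formulation: for each vertex $v$, maintain $x_v \in [0,1]$ denoting the fractional amount of server \emph{missing} at $v$, with conservation $\sum_v x_v = n-k$. A request at $r_t$ imposes $x_{r_t}=0$. On a tree, the earthmover-distance movement cost decomposes edge-by-edge: removing an edge $e$ of length $\ell_e$ leaves a subtree $S_e$, contributing $\ell_e \cdot |x^{t+1}(S_e) - x^{t}(S_e)|$, where $x(S) := \sum_{v \in S} x_v$. Upon receiving $r_t$, set $x^{t+1}$ to be the Bregman projection of $x^t$ onto $P_t := \{x \in [0,1]^n : \sum_v x_v = n-k,\; x_{r_t}=0\}$ with respect to a weighted relative-entropy regularizer $\phi(x) = \sum_v w_v (x_v \log x_v - x_v)$, with weights $w_v$ tuned to the tree's edge lengths.

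\textit{Competitive analysis.} Track a potential $\Phi^t := \sum_v w_v \bigl[ x^*_v \log(x^*_v/x^t_v) - x^*_v + x^t_v\bigr]$ against an offline fractional optimum $x^*$. Two ingredients suffice:
\begin{itemize}
\item \emph{(Projection step.)} The generalized Pythagorean inequality for Bregman projections yields $\Phi^{t+1} \leq \Phi^{t}$ (since $x^* \in P_t$); moreover, the KKT optimality conditions for the projection (a single Lagrange multiplier producing a closed-form multiplicative update on $x^t$) let me charge the algorithm's movement across each edge $e$ against the per-edge contribution to $\Phi^{t} - \Phi^{t+1}$.
\item \emph{(OPT's move.)} When $x^*$ moves mass $\delta$ across an edge, $\Phi^t$ increases by at most $O(\log k) \cdot \delta$: the relative entropy changes by $O(\log k)$ per unit of mass movement once coordinates are rescaled to lie in a range like $[1/k,1]$.
\end{itemize}
Summing the second ingredient across the (at most $D$) edges along any path and telescoping over time then yields $\textsc{Alg} \leq O(D \log k) \cdot \textsc{Opt}$.

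\textit{Main obstacle.} The technical crux is the edge-by-edge analysis of the (global) projection: one must charge the algorithm's mass movement across a particular edge to a decrease in $\Phi$ localized to that edge. I expect to handle this by analyzing the KKT-optimal projection recursively on the tree's subtrees, using the nested structure of the subtree sums $x(S_e)$ and the single-Lagrange-multiplier form of the multiplicative update. Each level of the tree should contribute a $\log k$ factor from the entropy regularizer's logarithmic penalty at small coordinates, and summing over all $D$ levels yields the claimed $O(D \log k)$ competitive ratio.
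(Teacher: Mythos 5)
Your high-level scheme---project onto the request-constrained polytope with an entropic Bregman divergence, use the reverse-Pythagorean inequality against a KL-type potential, and read a multiplicative update off the KKT conditions---is indeed the approach taken here. But your concrete setup has a genuine gap at its core: you keep only one variable per vertex, with the single constraint $\sum_v x_v = n-k$ plus box constraints, and you regularize with a per-vertex weighted entropy. With that polytope the KKT system has one global multiplier and per-coordinate box multipliers, so the projection is exactly the weighted-paging / multiplicative-weights update and carries no information about the tree: there are no subtree constraints whose multipliers could implement the ``recursive, edge-by-edge'' charging you hope for. The cost you must pay is on subtree aggregates $|x^{t+1}(S_e)-x^{t}(S_e)|$, while your divergence and potential only see per-vertex changes, and nothing in your KKT structure localizes the potential drop to edges. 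This is precisely the obstacle you flag as the ``main obstacle,'' and it is not a technicality---it is the reason the paper (following Bubeck et al.) works with the multilevel anti-server polytope having atoms $(u,j)$, $j\in[|L_u|]$, at every internal node, the constraints $\sum_{j\le |S|} x_{u,j} \le \sum_{(v,\ell)\in S} x_{v,\ell}$ for $S\sse\chi_u$, and the shifted multiscale entropy $\sum_u w_u \sum_j \big(\tilde{x}_{u,j}\log\frac{\tilde{x}_{u,j}}{\tilde{x}'_{u,j}}-\tilde{x}_{u,j}+\tilde{x}'_{u,j}\big)$; complementary slackness of those subset constraints is what relates dual terms at consecutive levels (Lemma~\ref{lem:cs-useful}) and makes the amortization possible.

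Two further points. First, your accounting of where $D$ enters is off: in the paper OPT's movement loses only a $\log(1+\nicefrac1\delta)=O(\log k)$ factor (Lemma~\ref{lem:opt-charge}); the depth factor arises on the algorithm's side, because the potential drop controls only the root-level dual terms $\sum_j A^t_{\rootvtx,j}$ (via level-by-level cancellation plus the second lower bound of Lemma~\ref{lem:second-bound}), and the movement at depth $d$ must be charged against these root terms with weight $(D-d)$ (Lemma~\ref{lem:aux-claim}), yielding $(D+1)\sum_j A^t_{\rootvtx,j}$ plus telescoping $W(\cdot)$ terms. Summing OPT's per-edge cost along a path of $D$ edges is not the mechanism and does not by itself give a valid charging argument. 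Second, projecting onto $\{x_{r_t}=0\}$ exactly is incompatible with an entropic divergence and potential: the logarithms and gradient bounds blow up at $0$, which also contradicts your later assumption that coordinates stay in a range like $[1/k,1]$. The paper instead imposes $x_{r_t,1}\le\delta$ with $\delta\approx\frac{1}{2k}$ and additively shifts all coordinates by $\delta$ inside the divergence, which is exactly what makes the OPT-move bound $O(\log(1+\nicefrac1\delta))$ and keeps the projection well behaved (Claim~\ref{clm:box}). The shift issue is fixable; the missing multiscale structure is the essential gap.
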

Recall that the hop-diameter of a tree is the maximum number of edges on
any simple path in the tree. A useful sub-class of trees are
$\tau$-HSTs; these have a designated root, and consecutive edge-lengths decrease
by a factor of $\tau > 1$ along any root-leaf path. (We
suppress the $\tau$ and just refer to HSTs when the precise value of
$\tau$ is not important.) It is easy to transform any $\tau$-HST
into one that has depth $O(\log n)$, while changing distances by a factor
of at most $\frac{2\tau}{\tau - 1}$. Moreover, on such trees it is
possible to randomly round fractional solutions to integer ones using
ideas from~\cite{BBMN11,BCLLM17}. This implies the following result:
\begin{corollary}
  \label{cor:HSTs}
  There is a randomized algorithm that is
  $O(\min\{D,\log n\} \log k)$-competitive for the $k$-server problem
  when the metric space $(X,d)$ is induced by an HST. Again, $D$ is the
  hop-diameter of the tree.
\end{corollary}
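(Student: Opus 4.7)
The plan is to combine three ingredients: a depth-reduction for HSTs, the fractional algorithm of Theorem~\ref{thm:main}, and an online rounding for fractional $k$-server configurations on HSTs.

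First, I would normalize the tree's depth. If the hop-diameter already satisfies $D = O(\log n)$, I use the tree as is. Otherwise, I apply the standard HST depth-reduction noted in the paragraph preceding the corollary: given a $\tau$-HST, we produce a $\tau$-HST of depth at most $O(\log n)$ whose metric agrees with the original up to a multiplicative factor $\frac{2\tau}{\tau-1}$. Since competitive ratios for $k$-server are preserved up to constants under constant-factor distortions of the metric (the cost of both OPT and ALG scale by the same factor), we may henceforth assume the working depth is $D' := \min\{D, O(\log n)\}$.

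Second, I invoke Theorem~\ref{thm:main} on this depth-$D'$ tree to obtain, online and deterministically, a fractional $k$-server solution whose movement cost is at most $O(D' \log k) \cdot \mathrm{OPT}_{\text{frac}} \le O(\min\{D,\log n\} \log k) \cdot \mathrm{OPT}$, where the second inequality uses that the fractional optimum lower-bounds the integral optimum.

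Third, I convert this fractional solution into a randomized integral $k$-server trajectory using the online rounding schemes developed for HSTs in \cite{BBMN11,BCLLM17}. These schemes take as input an online fractional $k$-server solution on an HST (i.e., a feasible server-mass assignment on the leaves summing to $k$ with the requested leaf always holding unit mass) and produce a distribution over integral configurations whose expected movement cost is within a constant factor of the fractional movement. Composing with step two gives the claimed $O(\min\{D,\log n\} \log k)$ competitive ratio.

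The main obstacle I foresee is plumbing rather than conceptual: I need to verify that the output of the algorithm in Theorem~\ref{thm:main} meets the interface required by the rounding modules of \cite{BBMN11,BCLLM17} (e.g., the mass assignment is a valid fractional $k$-server configuration that changes at a bounded rate and is coupled with the adversary's request so that the requested leaf holds unit mass). If the fractional solution produced by our Bregman-projection dynamics does not literally fulfill these conditions, a light reformatting step (e.g., a projection onto the feasible polytope, or running the rounding on a slightly time-smoothed version of the trajectory) should suffice before invoking the rounding as a black box.
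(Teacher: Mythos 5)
Your proposal matches the paper's own argument: depth-reduce the HST to $O(\log n)$ hops at constant distortion, run the fractional projection algorithm of Theorem~\ref{thm:main} (i.e., Theorem~\ref{thm:main-kserver}) at depth $\min\{D, O(\log n)\}$, and round online via \cite[Lemma~3.4]{BCLLM17} and \cite[\S5.2]{BBMN11} at a constant-factor expected loss. The ``plumbing'' you flag is exactly what the paper's $\delta$-shift already handles: the translated fractional configuration $\z^t$ has $k+\nicefrac12$ servers with unit mass at the requested leaf, which is precisely the interface those rounding results expect.
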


Finally, we can improve this guarantee to get an $O(\log^2 k)$ guarantee:
\begin{theorem}[HSTs]
  \label{thm:logsq}
  There is a randomized algorithm that is
  $O(\log^2 k)$-competitive for the $k$-server problem when the metric
  space $(X,d)$ is induced by a $\tau$-HST for $\tau \leq 1/10$.
\end{theorem}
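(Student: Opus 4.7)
Plan: The goal is to improve Corollary~\ref{cor:HSTs}, which yields only an $O(\log n\,\log k)$ bound when the depth $D$ is large, to the depth-free bound $O(\log^2 k)$ for $\tau$-HSTs with $\tau \leq 1/10$. When $D = O(\log k)$ the corollary already suffices, so the task is to handle deep HSTs by exploiting the rapid geometric decay of edge lengths. The overall strategy is to pass to a truncated HST of depth $O(\log k)$ and invoke Corollary~\ref{cor:HSTs} as a black box on that smaller tree.

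The key structural observation is that in a $\tau$-HST with $\tau \leq 1/10$, any subtree $T_v$ rooted at a node $v$ has diameter at most $\frac{\tau}{1-\tau} \leq \frac{1}{9}$ times the length of the edge entering $v$; iterating, at depth $d$ below the root the diameter of $T_v$ is $O(\tau^d)$ times the top edge length, and at depth $c\log k$ (for a suitable constant $c$) it is at most $1/\operatorname{poly}(k)$ times the top edge. I would therefore construct a truncated tree $T'$ of depth $O(\log k)$ by contracting every subtree rooted at depth $c\log k$ into a single super-leaf, keeping all surviving edge lengths unchanged. Each original request $r$ is mapped to its enclosing super-leaf $r'$, the algorithm from Corollary~\ref{cor:HSTs} is run on $T'$ with request sequence $(r'_t)_t$, and to actually serve $r$ in the original tree $T$ a server is moved locally within the contracted subtree to $r$ (this server exists since the algorithm ensures one at $r'$). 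Since $T'$ has depth $O(\log k)$, the algorithm on $T'$ is $O(\log k \cdot \log k) = O(\log^2 k)$-competitive against OPT's cost on $T'$.

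The main obstacle is reconciling the fractional/randomized behavior on $T'$ with the local cost paid inside contracted subtrees in $T$. I would argue two things: (a) OPT's cost on $T'$ is at most OPT's cost on $T$, because contracting subtrees is a non-expansive quotient; and (b) the extra cost from serving requests locally inside contracted subtrees is $O(1)\cdot\mathrm{OPT}$. For (b), fix a contracted subtree $T_v$; any OPT solution must, for each request $r \in T_v$, either keep a server in $T_v$ (in which case OPT itself pays for intra-$T_v$ movements comparable to ours, since we move at most the subtree's diameter per request) or import a server from outside $T_v$ at cost at least the length of the edge entering $v$, which exceeds the diameter of $T_v$ by a factor of $9$ thanks to $\tau \leq 1/10$. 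Summing over $T_v$ and over time, the local intra-subtree cost is absorbed into $O(1)\cdot\mathrm{OPT}$, yielding the claimed $O(\log^2 k)$-competitive bound after lifting the rounding from Corollary~\ref{cor:HSTs} to $T$ in the obvious way (a rounded server sitting at super-leaf $r'$ corresponds to a server located anywhere in the contracted subtree, which is then moved locally to serve the actual request).
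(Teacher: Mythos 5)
There is a genuine gap, and it is in your step (b). Your charging argument says that if OPT keeps a server inside a contracted subtree $T_v$, then ``OPT itself pays for intra-$T_v$ movements comparable to ours.'' This is false as soon as OPT parks \emph{two or more} servers inside $T_v$. Concretely, let $T_v$ be a contracted subtree with two leaves $a,b$ at mutual distance equal to the diameter of $T_v$, and let the adversary alternate requests $a,b,a,b,\dots$ for $M$ steps. OPT places one server at $a$ and one at $b$ and pays nothing beyond a fixed setup cost, while your algorithm sees only the single super-leaf $r'$ and therefore maintains no information about the configuration inside $T_v$: it keeps (in effect) one relevant server there and must shuttle it between $a$ and $b$, paying the diameter of $T_v$ on every request, for a total of $M$ times the diameter. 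This is unbounded relative to OPT, so the intra-subtree cost cannot be absorbed into $O(1)\cdot\mathrm{OPT}$. The geometric decay of the HST does not rescue the argument, because competitiveness is measured against OPT and not against the top edge length: if the adversary confines all activity to a deep subtree, OPT's cost lives entirely at that subtree's (tiny) scale, so ``diameter $\le 1/\mathrm{poly}(k)$ times the root edge'' is irrelevant, and there is no single truncation depth that works uniformly. The alternative ``OPT must import a server at cost $9\times$ the diameter'' only bites if OPT repeatedly imports servers, which it need not do. Repairing this would essentially require running the algorithm recursively inside every contracted subtree and controlling how the number of servers allotted to each subtree changes over time, which is a substantially different (and not obviously convergent) argument.

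For contrast, the paper does not truncate at all: it keeps the same projection algorithm on the full-depth tree and replaces the depth-$D$ charging by a depth-free one (Theorem~\ref{thm:logsq-tech}). The movement at each node $u$ is weighted by a factor $q^t_u=\alpha^t_{p(u)}-\alpha^t_u$, where $\alpha^t_u$ is an average of $\ln(1+z/\delta)$ over the interval swept by the fractional server mass $\z_u$; since $\z_u\le k$, these weights telescope along any root--leaf path to only $O(\log(k/\delta))$ (Lemma~\ref{lem:logsq-part1}), and a second potential together with a flow-decomposition argument along the backbone converts the $q^t$-weighted movement back to true $w$-weighted movement (Lemma~\ref{lem:logsq-part2}). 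Setting $\delta=1/k$ gives $O(\log^2 k)$. The crucial point your reduction misses is that the bound must come from the algorithm's own fractional state at \emph{all} depths, not from discarding the deep levels.
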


\medskip\textbf{Bicriteria Problems.} Our algorithm naturally extends to the \emph{$(h,k)$-server} problem, where the algorithm has $k$
servers, but its cost is compared to the cost of the best solution with
only $h \leq k$ servers. For the weighted star metric, this
problem admits $\frac{k}{k-h+1}$-competitive
deterministic~\cite{ST85,Y94} and $O(\log \frac{k}{k-h+1})$-competitive
randomized algorithms~\cite{BBN-simple}; these guarantees
approach $1$ as $k/h \to \infty$. For more general tree metrics, such
strong guarantees are not possible.  Bansal et al.~\cite{BEJK} showed a
lower bound of $2.41$ on the competitiveness of deterministic algorithms
for the $(h,k)$-server problem on depth-$2$ HSTs, even when
$h \ll k$. They also gave a deterministic algorithm for depth-$D$ trees
with competitive ratio $D (1 - 1/((1+\e)^{1/D} - 1))^{D+1}$ where
$k/h = (1+\e)$. E.g., for $\e \in (0,1]$, this factor is about
$D (2D/\e)^{D+1}$; contrast this with the deterministic
$(1/\e)$-competitiveness for paging.

A small change to the algorithm and analysis from Theorems~\ref{thm:main}
and~\ref{thm:logsq} gives the following:

\begin{theorem}
  \label{thm:main-hk}
  There is a deterministic algorithm that outputs a fractional
  solution for the $(h,k)$-server problem when the metric $(X,d)$ is a
  tree metric, with competitive ratio $O(D \log (1/\e))$, where
  $k/h = 1+\e$. For the case of HSTs, we get a bound of
  $O(\min\{D, \log k\} \log (1/\e))$, and can round it to get a randomized
  algorithm with the same asymptotic competitive ratio.
\end{theorem}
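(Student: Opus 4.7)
The plan is to adapt the fractional algorithm of Theorem~\ref{thm:main} in the same way that one passes from the standard $k$-server (or paging) mirror-descent to its bicriteria $(h,k)$-version (cf.~\cite{BBN-simple}): replace the entropic Bregman regularizer underlying the projection step by a \emph{shifted} entropic regularizer whose shift is tuned to the slack $\e = k/h - 1$. If the $k$-server algorithm maintains, at each tree edge, a fractional vector $z$ of ``anti-server'' mass and updates it via a Bregman projection derived from $\Phi(z) = \sum_v z_v \log z_v$, the $(h,k)$-version would use
\begin{equation*}
  \Phi_\delta(z) \;=\; \sum_v (z_v + \delta)\log(z_v + \delta),
\end{equation*}
with $\delta = \Theta(\e)$ at the appropriate normalization. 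The algorithm itself is the mirror-descent rule associated with $\Phi_\delta$, so its per-step mechanics---local mass movement along tree edges, conservation, non-negativity---are essentially unchanged from Theorem~\ref{thm:main}.

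The work is in redoing the per-edge amortized analysis. The $O(\log k)$ factor per edge in Theorem~\ref{thm:main} comes from an entropy-range estimate of the form $\log(z_{\max}/z_{\min}) = O(\log k)$, since without a shift the smallest attainable entry of $z$ can be as small as $1/k$. After shifting, the analogous estimate becomes $\log((z_{\max}+\delta)/(z_{\min}+\delta)) = O(\log(1/\delta)) = O(\log(1/\e))$, because $z_{\min}+\delta \geq \delta = \Theta(\e)$. The potential-function comparison now runs between the algorithm (with $\ell_1$-mass $n-k$) and an OPT that uses only $h$ servers (with $\ell_1$-mass $n-h$); the extra $k-h = \e h$ units of mass gap interact with $\delta$ through a Kraft/Pinsker-type inequality to produce exactly the bound $O(\log(1/\e))$ on the per-edge amortized cost. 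Summing over the up-to-$D$ edges on any root-to-leaf path yields the $O(D\log(1/\e))$ competitive ratio for general tree metrics.

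For the HST part of Theorem~\ref{thm:main-hk}, both the depth-reduction to $D = O(\log n)$ used in Corollary~\ref{cor:HSTs} and the refined analysis of Theorem~\ref{thm:logsq} only use the per-edge guarantee as a black box, so they carry over once $\log k$ is replaced by $\log(1/\e)$; likewise, the rounding scheme from~\cite{BBMN11,BCLLM17} that converts the fractional solution into a randomized integral one is oblivious to the choice of regularizer, giving the randomized $O(\min\{D,\log k\}\log(1/\e))$ ratio. The main obstacle I expect is bookkeeping rather than a new idea: one has to choose the shift $\delta$ consistently at every level of the recursion and distribute the slack $\e$ among subtrees so that the shifted-entropy potentials telescope along root-to-leaf paths, and one must verify that the coupling between the algorithm and OPT---which no longer have equal total mass---still produces the clean per-edge inequality. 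This is a careful but mechanical variation on the proofs of Theorems~\ref{thm:main} and~\ref{thm:logsq}.
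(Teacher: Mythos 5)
For the tree-metric half of the statement your plan is essentially the paper's own route, and in fact less work is needed than you anticipate: the analysis in \S\ref{sec:server} is already carried out for the $(h,k)$ problem with the shifted multilevel entropy, where the shift is fixed to $\delta = \frac{k-h+\half}{k+\half} = \Theta(\e/(1+\e))$. Theorem~\ref{thm:main-kserver} is stated with the parameter $\delta$, and the only place the competitive factor enters is Lemma~\ref{lem:opt-charge}, which charges $(1+\delta)\log(1+\nicefrac1\delta)=O(\log (1/\e))$ per unit of OPT movement; so the $O(D\log(1/\e))$ bound is obtained by plugging in this $\delta$, not by redoing a per-edge analysis. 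One bookkeeping correction: in the paper ALG and OPT carry the \emph{same} anti-server mass $n-h$ (not $n-k$ versus $n-h$); the extra $k-h$ servers of the algorithm come from the per-leaf floor $\delta$ combined with the rescaling $\z_u=\frac{1-\x_{u,1}}{1-\delta}$, which costs only a factor $\frac{1}{1-\delta}=O(1)$ when $h=\Omega(k)$. So no unequal-mass coupling or ``Kraft/Pinsker-type'' interaction is needed, and the flow/level-mass conservation (Lemma~\ref{lem:flow}) is preserved exactly as in the $h=k$ case.

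The genuine gap is in the HST part. You assert that the refined analysis behind Theorem~\ref{thm:logsq} and the depth reduction ``carry over once $\log k$ is replaced by $\log(1/\e)$,'' but neither mechanism produces the claimed $O(\min\{D,\log k\}\log(1/\e))$. Depth reduction only gives depth $O(\log n)$, hence a $\log n$ factor, not $\log k$. More importantly, in \S\ref{sec:logsq} the factor $O(\log(k/\delta))$ of Theorem~\ref{thm:logsq-tech} cannot be replaced by $O(\log(1/\e))$: the weights $\alpha^t_u$ defined in \eqref{eq:alpha} genuinely range up to $\ln(1+k/\delta)$ because the $\z$-values range up to $k$ independently of $\e$, so that factor stays $\Theta(\log(k/\delta))$. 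The correct accounting is the opposite of a wholesale substitution: with $\delta=\Theta(\e)$ one uses $\e \ge (k-h)/k \ge 1/k$ (whenever $h<k$) to bound $\log(k/\delta)=O(\log k)$ for the movement-versus-shadow-cost factor, while only the OPT-charging factor $\log(1+\nicefrac1\delta)$ from Lemma~\ref{lem:opt-charge} becomes $O(\log(1/\e))$; the product of these two, taken together with the $O(D\log(1/\e))$ bound from \S\ref{sec:server}, yields $O(\min\{D,\log k\}\log(1/\e))$. As written, your argument would either promise an unobtainable $O(\log^2(1/\e))$ or a different $O(\log n\,\log(1/\e))$ guarantee, so this step of the proposal needs to be repaired.
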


We point out that the algorithms of~\cite{BCLLM17} also extend to the
$(h,k)$-server setting and give a competitive factor of $O(\log k \log
(1/\e))$~\cite{BubPC}.

\medskip\textbf{Techniques.}
The algorithm is easy to state. We use the elegant linear programming
relaxation of the $k$-server problem given by~\cite{BCLLM17}, which
defines a feasible polytope $P$ amenable to online computation. 
Let $\x^t$ denote the ``anti-server" solution at time $t$\footnote{The indicator $x^t_u=0$ means that there is a server at node $u$ and $x^t_u=1$ indicates otherwise.}.
The points that serve the request at
node $r_t$ at time $t$ are those in some subspace $P \cap \{x: x_{r_t}= 0\}$.  Now,
given the previous solution $\x^{t-1}$, getting a solution $\x^t$ for
time $t$ is easy: we project $\a^{t-1}$ onto this subspace
$P \cap \{x: x_{r_t}= 0\}$. The projection is not a Euclidean
projection, but is with respect to a ``natural'' distance function in
the context of trees---the distance corresponding to the (negative)
\emph{multiscale entropy} function $D$. Formally,
\[ \a^t \gets \arg\min_{x \in P \cap \{x: x_{r_t}=0\}}
  \Ddiv{x}{\a^{t-1}}. \] This amounts to solving a convex
program\footnote{We are glossing over an important detail---we will
  need to project on the affine slice $P \cap \{x: x_{r_t}= \delta\}$
  for some positive small $\delta$ to ensure the gradients are
  Lipschitz. See
  \S\ref{sec:server} for the full story.}. Note that this
projection-based algorithm makes a sequence of discrete jumps, one for each time step, and hence
differs from the
mirror-descent approach of~\cite{BCLLM17} which takes
infinitesimal steps and keeps using Bregman projections to get back
into $P$, until feasibility is achieved.

The analysis also draws significantly on~\cite{BCLLM17}, with some
differences because of the discrete steps. The proof of competitiveness is via a potential-function
argument. This potential is (more or less) the distance between the
optimal solution and that of the algorithm (measured according to the
distance function used for the projection):
\[ \Phi_t := \Ddiv{OPT_t}{ALG_t}. \] However, since we project at each
time using a distance that is a Bregman divergence, we use the
``reverse-Pythagorean'' property of such distance functions to relate
the distance between $\a^{t-1}$ and $\a^t$ to the drop in potential, and
thereby show
\[ \Ddiv{ALG_t}{ALG_{t-1}} + (\Phi_t - \Phi_{t-1}) \leq \alpha\cdot
  \Delta OPT. \] (To obtain this inequality we also need to relate the change in $OPT$ to the change in potential, which we do by choosing divergences $\Ddiv{\cdot}{\cdot}$ that are ``Lipschitz'' in the first argument.) The technical work is then to relate the actual
movement cost $\| ALG_t - ALG_{t-1} \|_1$ to this Bregman distance
$\Ddiv{ALG_t}{ALG_{t-1}}$. These proofs are short for set cover and
weighted paging (which we give here for completeness and intuition), and
longer for $k$-server. We also show that our algorithm
maintains many of the side invariants that hold for the
continuous process of~\cite{BCLLM17}.

We emphasize that Bregman projections are not new in the context of
online algorithms: they have been
explicitly used by, e.g.,~\cite{BCN14, BCLLM17}, and also implicitly
underlie many online primal-dual algorithms for packing and covering
problems, even though the algorithms may not be explicitly described 
in this language. Moreover, the projection method is not a
panacea: there are problems for which this approach does not seem to
give us the desired fine-grained control over the solutions; for
other problems like MTS we need to incorporate service costs. Yet, we
hope this perspective will be useful in other contexts.

Finally, the authors of~\cite{BCLLM17} inform us that their algorithm
can also be discretized, by repeatedly taking $\e = \e(n,k)$-sized
steps, and then (Bregman) projecting back onto the polytope of feasible
points~\cite{JamesPC}. Indeed, since the proof
of~\cite[Theorem~5.6]{BCLLM17} proceeds by finding a discrete
sequence of feasible points and then taking limits, this proof can be 
used to show that such a discretization process works.

\paragraph{Roadmap.}
As a warm-up we give a rephrasing of the primal-dual algorithm for the
(unweighted) set cover problem in terms of projections. This provides
the basic ingredients of the analysis: how the KKT conditions
allow us to prove useful properties of the projected points, which in turn are
used in the analysis based on the reverse-Pythagorean property of
Bregman projections. We then build on these ideas to give algorithms for
the $(h,k)$-weighted paging problem in \S\ref{sec:paging}, the
$(h,k)$-server problem on trees in \S\ref{sec:server}, and for HSTs in
\S\ref{sec:logsq}.  We defer the paging example to an appendix to get to
$k$-server earlier, but the non-expert reader may want to read the
paging example first to gain some more familiarity with the ideas.

We emphasize that the length of some of our
proofs %for set cover and paging
comes from spelling out all the details. E.g., readers familiar with
basics of convex optimization can easily compress the proofs for set
cover and paging to a page each; other proofs  also can be 
considerably shortened.

\subsection{Related Work}

The fact that many primal-dual algorithms could be viewed as mirror
descent was observed by Buchbinder, Chen, and Naor~\cite{BCN14}, who
used this viewpoint to give approximation algorithms for \emph{set cover
  with service costs}, which is a simultaneous extension of set cover
and metrical task systems (MTS) on a weighted star.
Many authors have used convex programs
to analyze and solve online problems, e.g., Devanur and Jain~\cite{DJ}, Anand et al.~\cite{AGK12},
Gupta et al.~\cite{GKP11}, Devanur and Huang~\cite{DH14}, Kim and
Huang~\cite{HuangK15}, Im et al.~\cite{IKM}, and others have developed
primal-dual and dual-fitting techniques using convex programs. However, the ideas
and techniques used there are different from the ones in this paper.

%%% Local Variables:
%%% mode: latex
%%% TeX-master: "main"
%%% End:

\subsection{Notation and Preliminaries}
\label{sec:prelims}

We merely give some definitions that we need in this paper; for details
about convexity and convex optimization, see Rockafellar~\cite{Rock},
Hiriart-Urruty and Lemarechal~\cite{HUL}, or Boyd and
Vanderberghe~\cite{BV}.

Given a convex function $h: \R^n \to \R$, the \emph{Bregman divergence}
associated with $h$ is given by
\[ \DXdiv{h}{p}{q} := h(p) - h(q) - \ip{ \gr h(q), p-q }. \] In words,
this is the amount by which the linear approximation of convex function
$h$ at point $q$ underestimates the function value at $p$.  We get an
\emph{under}estimate because $h$ is convex. Hence the divergence is
non-negative for any $p,q$, and is zero when $p=q$. (If the function $h$
is strictly convex, then the converse also holds, i.e.,
$\DXdiv{h}{p}{q} = 0 \implies p=q$.) The \emph{Bregman projection} of a
point $q$ onto a convex body $P$ is simply
$p = \arg\min_{x \in P} \DXdiv{h}{x}{q}$.

One commonly used Bregman divergence between non-negative vectors
is the \emph{(unnormalized) Kullback-Liebler divergence}:
\[ \Ddiv{p}{q} = \sum_i \Big( p_i \log \frac{p_i}{q_i} - p_i + q_i \Big)
\] which arises from the \emph{negative entropy function}
$h(p) = \sum_i p_i \log p_i$. If we consider $p$ and $q$ in the
probability simplex (or indeed, if $p$ and $q$ have the same $\ell_1$
norm), then the linear terms fall away and $\Ddiv{p}{q}$ becomes the
well-known \emph{normalized KL-divergence} (or relative entropy
function) $\sum_i p_i \log \frac{p_i}{q_i}$.  Now, projecting a point $x$
onto the probability simplex using KL-divergence, simply scales up
each coordinate by the same factor---this is the multiplicative-weight
update rule! We will use variants of KL-divergence extensively.

We need the \emph{reverse-Pythagorean property} of Bregman
divergences. Given a convex body $K$, with point $y \in K$, and the
Bregman projection $x' := \min_{z \in K} \Ddiv{z}{x}$ for some point
$x$,
\[ \Ddiv{y}{x} \geq \Ddiv{y}{x'} + \Ddiv{x'}{x}. \] The name comes from
the illuminating (visual) proof of this fact for the squared Euclidean
distance Bregman divergence
$\Ddiv{p}{q} := \frac12 \norm{p-q}_2^2$.

Using the inequality $1 + x \leq e^x$ with $x = \log (b/a)$
gives what we call the ``poor-man's Pinsker'' inequality: for all
$a, b \geq 0$,
\begin{gather}
  a - b \leq a \log \frac{a}{b}. \tag{PMP} \label{eq:pmp}
\end{gather}

%%% Local Variables:
%%% mode: latex
%%% TeX-master: "main"
%%% End:

\section{The Unweighted Set Cover Problem}
\label{sec:unweighted-set-cover}

As a warm-up, we solve and analyze (unweighted) set
cover in the projection perspective. This is to show the main steps:
the derived algorithm is essentially that from Alon et al.~\cite{AAABN},
and the projection viewpoint was already noted in, e.g.,~\cite{BCN14}.
The steps in subsequent sections are similar to those here, only more
involved.

We are given $n$ sets, and ``set
covering'' constraints $\ip{a_t, x} \geq 1$ arrive online, where each
request vector $a_{t} \in \{0,1\}^n$. This defines the set covering
polytope:
\[ P_t := \{ x \geq 0 \mid \ip{a_s, x} \geq 1 \; \forall s\leq t \}. \]
The goal is to maintain a fractional solution $\a^t \in P_t$ that is
monotone (i.e., $\a^t \geq \a^{t-1}$) and is an approximately good
solution to the linear program:
\[ \min_{x \in P_t} \ip{\mathbf{1}, x}. \] We compare ourselves to the
optimal integer solution $\b^t \in \{0,1\}^n$.  We start with the
initial solution $\a^0_i = \delta$, where $\delta = \frac{1}{n}$; hence
we start off with the fractional solution buying one set in total. As
long as the request sequence contains at least one request, this extra
(fractional) set does not affect the competitive ratio except by at most a factor of $2$.

\subsection{The Projection Algorithm}

Project the old point $\a^{t-1}$ onto the new body $P_t$
using the Bregman divergence:
\[ \Ddiv{x}{x'} := \sum_i \bigg(x_i \log \frac{x_i}{x'_i} - x_i +
  x'_i\bigg) \] I.e., set
$\a^t := \arg\min_{x \in P_t} \Ddiv{x}{\a^{t-1}}$. Since all entries
$a_{ti} \geq 0$ (i.e., the polytope is a covering polytope), we claim
this
projection is equivalent to just projecting onto the convex set
$Q_t := \R^n \cap \{ \ip{a_t, x} \geq 1\}$. To see this, define
$z := \arg\min_{z \in Q_t} \Ddiv{z}{\a^{t-1}}$. We show
\emph{monotonicity} below: that $z_i \geq \a^{t-1}_i$ for all $i$, and
hence $z \in P_{t-1}$. Since $P_{t-1} \cap Q_t = P_t$, this shows an
equivalence between the two projections, and hence $\a^t = z$.

To show monotonicity, consider the point $z$: it is the solution to
the convex program
  \begin{align*}
    \min \quad \sum_{i} \Big(z_i \log &\frac{z_i}{\a^{t-1}_i} -
                 z_i + \a^{t-1}_i\Big) \\
    \sum_i a_{ti} z_i &\geq 1
  \end{align*}
The KKT optimality conditions imply:
\begin{gather}
  \log \frac{z_i}{\a^{t-1}_i} = \lambda_t \, a_{ti}, \label{eq:kkt-sc}
\end{gather}
where the Lagrange multiplier $\lambda_t \geq 0$ satisfies the
complementary slackness condition %$\gamma_i(z_i - 1) = 0$ and
$\lambda_t(\sum_i a_{ti} z_i - 1) = 0$. (Since the constraints are
affine and only the objective function is convex, strong duality holds
as long as the problem is feasible. Hence we do not have problems with
duality gaps, and so can assume the existence of optimal duals/KKT
certificates/multipliers for the problems in this paper.)

Rewriting~(\ref{eq:kkt-sc}),
$z_i = \a^{t-1}_i e^{\lambda_t a_{ti}} \geq \a^{t-1}_i$ because the
exponent is non-negative. This proves monotonicity, and the preceding
argument then gives $\a^t = z$. Moreover, defining
$P^\delta_t = P_t \cap [\delta,1]^n$, we get that $\a^t \in P^\delta_t$
for all $t$.

\subsection{Analysis}
\label{sec:analysis}

We use a potential function to measure the ``distance'' from the optimal
solution to the algorithm's solution. Define the function
\[ \Pdiv{x}{x'} = \sum_i x_i \log \frac{x_i}{x'_i}. \]
Observe that $\Pdiv{x}{x'} = \Ddiv{x}{x'} + \ip{ \ones, x-x'}$.
Define the potential after serving the $t^{th}$ request to be:
\[ \Pdiv{\b^t}{\a^t} := \sum_i \b^t_i \log \frac{\b^t_i}{\a^t_i} =
  \sum_{i: \b^t_i = 1} \log \frac{1}{\a^t_i}. \] Here $\b^t$ is the
optimal integer solution for $P_t$.  Since the fractional
solution $\a^t \in P^\delta_t$, each term of this potential is
non-negative, and at most $\log 1/\delta$. (As an aside, we could set
$\Pdiv{\cdot}{\cdot} = \Ddiv{\cdot}{\cdot}$ with tiny changes, but we
find the current view cleaner.)

\paragraph{When OPT moves:}
We can ensure that OPT only changes entries from $0$ to
$1$. In this case the increase in potential is at most $\log 1/\delta$, so
we get
\[ \Pdiv{\b^t}{\a^{t-1}} - \Pdiv{\b^{t-1}}{\a^{t-1}} \leq (\log 1/\delta)
  \cdot \Delta OPT. \tag{$\star$}\]

\paragraph{When ALG moves:}
Since we used a Bregman divergence to project a point $\a^{t-1}$ down to
$\a^t \in P_t$ (and also because the optimal point $\b^t \in P_t$), we
can use the reverse-Pythagorean property of these projections to claim
\[ \Ddiv{\b^t}{\a^{t-1}} \geq \Ddiv{\b^t}{\a^t} + \Ddiv{\a^t}{\a^{t-1}}. \]
Substituting the definition of $\Ddiv{\cdot}{\cdot}$ and canceling linear terms from both sides, we get
\[ \Pdiv{\b^t}{\a^{t-1}} \geq \Pdiv{\b^t}{\a^t} + \Pdiv{\a^t}{\a^{t-1}}, \]
or equivalently,
\[ \underbrace{\Pdiv{\a^t}{\a^{t-1}}}_{\text{``shadow'' cost}} +
  \underbrace{\big(\Pdiv{\b^t}{\a^t} - \Pdiv{\b^t}{\a^{t-1}} \big)}_{\text{change
      in potential}} \leq 0. \tag{$\star\star$}\]
% Hence it suffices to show that the actual cost is bounded by this shadow
% cost, i.e.,
% \begin{gather}
%   \norm{\a^t - \a^{t-1}}_1 \leq \Pdiv{\a^t}{\a^{t-1}}. \label{eq:sc1}
% \end{gather}
Since all coordinates of $\a$ are non-decreasing, the actual cost for
step $t$ is
\[
  \norm{\a^t - \a^{t-1}}_1 = \sum_i (\a^t_i - \a^{t-1}_i)
  \stackrel{(\text{\ref{eq:pmp}})}{\leq} \sum_i
  \a^t_i \log \frac{\a^t_i}{\a^{t-1}_i} = \Pdiv{\a^t}{\a^{t-1}}. \tag{$\star\star\star$}
\]

Summing the starred equations shows that in serving any request,
\[ (\text{cost for } ALG) + \Delta \Phi \leq (\log 1/\delta) \cdot
  (\text{cost for } OPT).
\]
Since $\Phi_0 = 0$ and $\Phi_t \geq 0$, we get the following theorem.

\begin{theorem}[Set Cover]
  For the online (unweighted) set cover problem, the projection
  algorithm maintains a fractional solution with at most
  $(\log n) \cdot OPT + 1$ sets, where $OPT$ is the optimal integer solution.
\end{theorem}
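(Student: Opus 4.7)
The heavy lifting is already done by the three starred inequalities derived in the preceding analysis; my plan is simply to assemble them and telescope. I would proceed one time step at a time, splitting each step into (i) the OPT update, where $\b^{t-1}$ becomes $\b^t$ while $\a^{t-1}$ stays fixed, and (ii) the ALG update, where $\a^{t-1}$ is projected to $\a^t$. For (i), $(\star)$ gives a potential increase of at most $(\log 1/\delta)\cdot \Delta\mathrm{OPT}$. For (ii), chaining $(\star\star)$ and $(\star\star\star)$ yields
\[
\|\a^t - \a^{t-1}\|_1 \;\leq\; \Pdiv{\a^t}{\a^{t-1}} \;\leq\; \Pdiv{\b^t}{\a^{t-1}} - \Pdiv{\b^t}{\a^t},
\]
so the ALG movement at step $t$ is paid for by a drop in potential.

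Summing over $t=1,\ldots,T$ telescopes the potential, leaving
\[
\sum_t \|\a^t - \a^{t-1}\|_1 \;\leq\; (\log 1/\delta)\cdot \mathrm{OPT} + \Pdiv{\b^0}{\a^0} - \Pdiv{\b^T}{\a^T}.
\]
Since $\b^0$ is the all-zero vector (before any request arrives), the initial potential is zero; and since each $\a^t_i \in [\delta,1]$, every term of the final potential is of the form $\b^T_i \log (1/\a^T_i) \geq 0$, so $\Pdiv{\b^T}{\a^T}\geq 0$. Plugging in $\delta = 1/n$ then bounds the total movement by $(\log n)\cdot\mathrm{OPT}$.

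Lastly, I would account for the initial seed mass: the starting solution has $\|\a^0\|_1 = n\cdot\delta = 1$, which contributes the additive $1$ in the claimed bound. I do not anticipate a genuine obstacle at this stage; the only substantive step was already carried out in deriving $(\star\star\star)$, whose nontriviality lay in using the KKT-based monotonicity $\a^t\geq \a^{t-1}$ (from $\a^t_i = \a^{t-1}_i e^{\lambda_t a_{ti}}$) together with the poor-man's Pinsker inequality~\eqref{eq:pmp} to bound an actual $\ell_1$ move by a Bregman-type shadow cost---without monotonicity, the absolute values in the $\ell_1$ norm would have prevented a direct telescoping argument.
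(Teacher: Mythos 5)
Your proposal is correct and follows essentially the same route as the paper: it sums the three starred inequalities, telescopes the potential using $\Phi_0 = 0$ and $\Phi_T \geq 0$, and sets $\delta = 1/n$, with the additive $1$ coming from the initial seed mass $\|\a^0\|_1 = n\delta = 1$. This is precisely the paper's (very brief) argument, with the bookkeeping spelled out.
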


Observe that increasing the $\delta$ term improves the multiplicative
guarantee but worsens the additive term. By a guess-and-double approach,
we can get an $O(\log \frac{n}{OPT}) \cdot OPT$-approximation.
We chose the analysis for the unweighted case for its simplicity. To
extend to the case, e.g., where sets have costs, we need to consider
a slightly different Bregman divergence. We defer this
discussion to a later version of the paper.

%%% Local Variables:
%%% mode: latex
%%% TeX-master: "main"
%%% End:

\newcommand{\rootvtx}{\mathbf{r}}
\newcommand{\ch}{{\small\mathsf{ch}}}
\renewcommand{\d}{\delta}
\renewcommand{\t}[1]{\tilde{#1}}

\section{The $(h,k)$-Server Problem on Trees}
\label{sec:server}

We now consider the $(h,k)$-server problem on trees. For readers wishing
to gain more familiarity with projection-based algorithms, and analyses
using KKT conditions, we recommend \S\ref{sec:paging} for an analysis of
a projection-based weighted paging algorithm. However, reading
\S\ref{sec:paging} is optional for the experts, since the present section is self-contained.

Given a tree $T$, the root $\rootvtx$ is at depth $0$; the depth for a
node $u$ is the number of edges on the $\rootvtx$-$u$ path. We assume
all the leaves are at depth $D$; this is for convenience, and incurs no
significant loss of generality.  The vertices at depth $d$ are denoted
by $V_d$, and hence $V_0 = \{\rootvtx\}$, whereas $V_D$ is the set of
all leaves.  Let $w_u$ be weight of node $u$; we define the distance from $u$ to $v$ as the total sum of weights of vertices on the path from $u$ to $v$ (this is equivalent to defining suitable edge weights).
%Let $w_u$ be the length of the edge from $u$ to its parent.
Let $n = |V_D|$ be the number of
leaves of the tree, and we associate the leaves with the set $[n]$.
Without loss of generality we assume that requests only appear at the leaves.

Similarly to the paging problem, the goal in the weighted $(h,k)$-server problem
is to respond at each time $t$ to requests $r_t \in [n]$  by producing a vector
$\z^t \in \{0,1\}^n$, with $\norm{\z^t}_1 = k$, such that $\z^t_{r_t} =1$
for all $t$. The objective is to minimize the total weighted change:
\begin{gather}
  d(\z^t, \z^{t-1}) := \sum_{u \in V(T)} w_u \; | \z^t(T_u) -
  \z^{t-1}(T_u) |. \label{eq:kserver-distance}
\end{gather}
Here $T_u$ denotes the set of vertices in the subtree rooted at
$u$, and $\z^t(T_u)$ denotes the number of servers in this subtree
as defined by $\z^t$. We compare our performance to that of the optimal
solution that uses only $h$ servers.

\subsection{Atoms and the Anti-Server Polytope}

Following the paging setting, we consider an
``anti-server'' polytope, where an $x_u$-value of $0$ indicates that there is a
server at $u$, and $1$ otherwise. In particular, we use
the anti-server polytope proposed by Bubeck et al.~\cite{BCLLM17}, which has
many nice features. It will be crucial to define variables for both
leaves and internal nodes; in fact, each internal node may have
several variables corresponding to it, as we now explain.

Let $L_u$ be the set of leaves in the tree $T_u$. Each vertex $u \in V(T)$
has associated variables $x_{u,j}$ for $j \in \{1, 2, \ldots,
|L_u|\}$. We refer to pairs $(u,j)$ as \emph{atoms}. Let
$\chi_u := \{ (v, \ell) \mid v \text{ is a child of } u, \ell \in
[|L_v|]\}$ be the atoms corresponding to the children of $u$, and let
$\chi_u$ be the children of $u$ in the tree $T$. Since the
leaves are all at the same level, the total number of atoms (and hence
the number of variables) at each level of the tree is exactly $n$, the number of
leaves. Moreover, each leaf $u$ has only a single atom $(u,1)$ and hence a
single variable $x_{u,1}$. % As in the MTS case, define the ``shifted''
% variables $\xt_{u,j} := x_{u,j} + \delta$, where $\delta$ will be
% specified later.
Let $N = n(D+1)$ be the total number of atoms.

The \emph{anti-server polytope} proposed by Bubeck et al.~\cite{BCLLM17} is the following:
\begin{alignat*}{2}
P:= \quad \bigg\{ x \in [0,1]^N \;\Big| \qquad \qquad  x_{\rootvtx,j} &\geq \mathbf{1}_{(j > h)} & & \forall j\\
  \sum_{j \leq |S|} x_{u, j} &\le \sum_{(v,\ell) \in S} x_{v,\ell}
  & \qquad \qquad & \forall u, S \sse \chi_u \qquad\bigg\}
\end{alignat*}

Remember that $P$ represents solutions in the anti-server world as
follows: to encode an integer solution $B^t$, set $\y_{\ell,1} = 0$ for
leaves $\ell$ containing servers and $\y_{\ell,1} = 1$ otherwise; set $\y_{u,j} = 0$ if the subtree under $u$ contains at least $j$
servers, and $1$ otherwise. It is easy to check that the constraints are
satisfied by this integral solution.

The reader may find it convenient to think
of an ``ideal'' fractional solution for $P$ as follows: the ``ideal''
setting for a $\y_{u,\cdot}$ vector at an internal node $u$ is to take the
vectors of its children, concatenate them, and then sort the entries of
this concatenated vector in non-decreasing order. While such a sortedness
condition is not required, and may not even hold, it may be useful for intuition about the LP.

\subsubsection{Translating between Servers and Anti-Servers}

Define $\delta := \frac{k-h+\nicefrac12}{k+\nicefrac12}$, as for paging.
We define the shifted polytope
$P_\delta := P \cap \{ x \mid x_{(u,1)} \geq \delta \;\forall u \in
L_\rootvtx\}$ to be the subset of points such that all leaf atoms have
value at least $\delta > 0$. The algorithm maintains a fractional
solution $\x^t \in P_\delta$, with $\x^t_{r_t,1} = \delta$ and
$\norm{\x^t} = n-h$. We can define a norm on vectors in $\R^N$ as
follows:
\begin{gather}
  \Tlone{x} := \sum_u w_u \; \sum_{j \in [|T_u|]}
  |x_{u,j}|. \label{eq:treenorm}
\end{gather}
Note: the norm~(\ref{eq:treenorm}) is defined for vectors in $\R^N$ that
assign values to all atoms in the tree, whereas the
distance~(\ref{eq:kserver-distance}) is defined for vectors in $\R^n$
that assign values only to the leaves. However, the two distances can be
related to each other using Lemma~\ref{lem:emd-to-lone}.

Define the vector of servers $\z^t \in [0,1]^n$ by setting
$\z^t_u := \frac{\mathbf{1} - \x^t_{u,1}}{1-\delta}$ for each leaf
$u$. This gives a fractional vector with
$\norm{\z^t} = k + \nicefrac12$; Lemma~\ref{lem:emd-to-lone} shows that
\[ d(\z^t, \z^{t-1}) \leq \textstyle \frac{1}{1-\delta}\cdot \Tlone{\x^t
    - \x^{t-1}}. \] Finally, for hierarchically well-separated trees
(HSTs) we can use~\cite[Lemma~3.4]{BCLLM17} and \cite[\S5.2]{BBMN11} to
round the fractional solutions with $k+1/2$ servers to integer solutions
with just $k$ servers, so
that
\[ \E\big[\, d(\hat\z^t, \hat\z^{t-1}) \,\big] \leq O( d(\z^t,
\z^{t-1})). \]

The main result of this section is an algorithm to maintain a fractional
point $\x^t$ as follows.
\begin{theorem}[Main Theorem: $(h,k)$-server]
  \label{thm:main-kserver}
  Given a tree of depth $D$, there exists an algorithm that
  maintains fractional solutions $\x^t \in P_\delta$ with
  $\x^t_{r_t} = \delta$ and $\norm{\x^t} = n-h$, such that for any
  integer feasible solutions $\y^t$, we have
  \[ \Tlone{\x^t - \x^{t-1}} \leq O(D \log (1+\nicefrac1\delta)) \cdot \Tlone{ \y^t -
      \y^{t-1} } + C' = O(D\log (1+\nicefrac1\delta)) \cdot d(\y^t,
    \y^{t-1}) + C'. \]
  for an additive term $C'$ that depends only on $T,k,\delta$ but not on
  the input sequence.
\end{theorem}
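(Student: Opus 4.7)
The plan is to follow the template established by the set-cover warmup (and developed further in the paging section): define the algorithm as a Bregman projection onto an affine slice, define a potential function as a Bregman distance to OPT, and establish a one-step bound
\[ \Tlone{\x^t - \x^{t-1}} + (\Phi_t - \Phi_{t-1}) \;\le\; O(D \log(1+1/\delta)) \cdot \Tlone{\y^t - \y^{t-1}} \]
that telescopes to the theorem. Concretely, the algorithm would set
\[ \x^t := \arg\min_{x \in P_\delta,\; x_{r_t,1} = \delta} \Ddiv{x}{\x^{t-1}}, \]
where $\Ddiv{\cdot}{\cdot}$ is a weighted multiscale unnormalized KL divergence,
\[ \Ddiv{x}{x'} := \sum_u w_u \sum_{j \in [|L_u|]} \Big( x_{u,j} \log \frac{x_{u,j}}{x'_{u,j}} - x_{u,j} + x'_{u,j} \Big), \]
with the associated potential $\Phi_t := \Pdiv{\y^t}{\x^t}$ (the companion ``relative-entropy'' form obtained by dropping the linear terms). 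The $\delta$-shift in $P_\delta$ is essential, since it keeps every atom bounded away from $0$ and makes the gradients of $\Ddiv{\cdot}{\cdot}$ Lipschitz by $O(\log(1/\delta))$ on $P_\delta$.

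Next I would split each round into an ``OPT move'' and an ``ALG move'', as in the set-cover analysis. For the OPT move, using the Lipschitz property of the divergence in its first argument on $P_\delta$, the potential can increase by at most $O(\log(1+1/\delta))$ per unit of $\Tlone{\y^t - \y^{t-1}}$. For the ALG move, since $\x^t$ is the Bregman projection of $\x^{t-1}$ onto a convex set containing the integral point $\y^t$, the reverse-Pythagorean property gives
\[ \Ddiv{\y^t}{\x^{t-1}} \;\geq\; \Ddiv{\y^t}{\x^t} + \Ddiv{\x^t}{\x^{t-1}}, \]
and canceling the linear correction terms as in the set-cover warmup rearranges to $\Pdiv{\x^t}{\x^{t-1}} \le \Phi_{t-1} - \Phi_t$. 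Combined with the OPT-move bound, this reduces the whole theorem to relating the actual cost $\Tlone{\x^t - \x^{t-1}}$ to the ``shadow'' Bregman cost $\Pdiv{\x^t}{\x^{t-1}}$ with a multiplicative loss of at most $O(D)$.

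The main obstacle is precisely this last step. Unlike set cover, where coordinates move only upward so a single application of (PMP) gives $\|\x^t - \x^{t-1}\|_1 \leq \Pdiv{\x^t}{\x^{t-1}}$ trivially, here the projection onto the slice $\{x_{r_t,1} = \delta\}$ both decreases the mass at $(r_t,1)$ and pushes mass into sibling atoms and their ancestors, so coordinates move in both directions, and equalizing constraints in the anti-server polytope tie atoms on the same level together. I would attack this using the KKT conditions: writing $\mu_{u,S} \geq 0$ for the multiplier of the constraint $\sum_{j \le |S|} x_{u,j} \leq \sum_{(v,\ell)\in S} x_{v,\ell}$, stationarity at each atom $(u,j)$ forces $w_u \log(\x^t_{u,j}/\x^{t-1}_{u,j})$ to equal a difference between sums of active multipliers coming from the parent level and from the children level. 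This structure telescopes along root-to-leaf paths and, crucially, implies that at every level $d$ the total mass that enters some atoms equals the total mass that leaves others (since $\norm{\x^t}_1 = \norm{\x^{t-1}}_1$ and the sliding happens level by level). Applying (PMP) to each of these balanced flows and to each of the $D$ levels separately then yields
\[ \Tlone{\x^t - \x^{t-1}} \;=\; \sum_{d=0}^{D} \sum_{u \in V_d} w_u \sum_j |\x^t_{u,j} - \x^{t-1}_{u,j}| \;\le\; O(D)\cdot \Pdiv{\x^t}{\x^{t-1}}, \]
which is the missing ingredient.

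Finally I would handle boundary issues: choose the initial $\x^0$ to be a feasible fractional point inside $P_\delta$ (for instance a uniform symmetric one), which makes $\Phi_0$ finite and gives the additive constant $C'$ depending only on $T, k, \delta$; non-negativity of $\Phi_t$ then allows the telescoped inequality to yield the theorem. I also need to verify that the equivalence between projecting onto the slice and onto the full half-space-intersection indeed holds (as in the set-cover case), so that the KKT system I am analyzing is actually the relevant one; this should be a routine check using the sign structure of the anti-server constraints, analogous to the monotonicity argument in \S\ref{sec:unweighted-set-cover}. The KKT/level-by-level step in the third paragraph is the only conceptually new ingredient, and is where essentially all the technical weight of the theorem resides.
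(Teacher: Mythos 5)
Your high-level template matches the paper's (Bregman projection, reverse-Pythagorean inequality, potential $\Pdiv{\y^t}{\x^t}$), but two of your steps have genuine gaps. First, your divergence is the \emph{unshifted} multiscale KL, and you justify its Lipschitz behaviour by saying $P_\delta$ "keeps every atom bounded away from $0$". It does not: $P_\delta$ only imposes $x_{u,1}\ge\delta$ at the \emph{leaves}, while internal atoms can be arbitrarily small --- indeed the root atoms satisfy $\x^t_{\rootvtx,j}=0$ for $j\le h$ (Claim~\ref{clm:root-vals}). Consequently the OPT-move bound fails: when OPT flips an internal coordinate $\y_{u,j}$ from $0$ to $1$, the potential term $w_u\log(1/\x^{t-1}_{u,j})$ is unbounded. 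This is exactly why the paper works with the $\delta$-\emph{shifted} entropy ($\t{x}=x+\delta$ inside $\Ddiv{\cdot}{\cdot}$ and $\Pdiv{\cdot}{\cdot}$); the shift, not the polytope, is what makes Lemma~\ref{lem:opt-charge} go through.

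Second, your key step asserts a per-step inequality $\Tlone{\x^t-\x^{t-1}}\le O(D)\cdot\Pdiv{\x^t}{\x^{t-1}}$, to be proved by "PMP applied level by level". This is not what the paper proves, and as a per-step statement it is doubtful: $\Pdiv{\x^t}{\x^{t-1}}$ is not a divergence and its linear part $W(\x^t)-W(\x^{t-1})$ can be negative (mass moving from heavy to light nodes), whereas the left side is nonnegative. The paper's actual route is amortized: (i) the KKT identity plus complementary slackness (Lemma~\ref{lem:cs-useful}) collapse the shadow cost to $\sum_j A^t_{\rootvtx,j}-2\delta\gamma_t$; (ii) the negative $\gamma_t$ term forces a \emph{second} lower bound on the potential drop (Lemma~\ref{lem:second-bound}, proved by a normal-cone/first-order-optimality argument), and the two bounds are averaged; (iii) the movement is bounded by $\sum_{u,j}A^t_{u,j}$, which is charged up the tree via Lemma~\ref{lem:aux-claim} to $(D+1)\sum_j A^t_{\rootvtx,j}$ \emph{minus} linear terms $w_u(\x^t_u-\x^{t-1}_u)$ that only telescope over the whole sequence and end up in $C'$. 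Your sketch contains neither the second lower bound nor the telescoping $W$-terms, and the level-wise mass conservation you invoke in passing is itself a nontrivial consequence of tight-set uncrossing (Lemmas~\ref{lem:cross} and~\ref{lem:flow}), not of $\norm{\x^t}_1=\norm{\x^{t-1}}_1$ alone. So the step you correctly identify as carrying "all the technical weight" is missing, and the per-step form in which you state it would need to be weakened to the paper's amortized form to be provable.
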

Combining this theorem with the above chain of inequalities gives us a
randomized algorithm for the $(h,k)$-server problem with a competitive
ratio of
$O(D \, \frac{\log (1+ 1/\delta)}{1-\delta}) = O(D \log
\frac{2k-h+1}{k-h+\half})$, at least when $h \in \Omega(k)$. Note that
for $h=k$ the algorithm is $O(D\log k)$-competitive, and for $h=k/2$ the
algorithm is $O(D)$-competitive.  In the rest of this section, we
present the proof of Theorem~\ref{thm:main-kserver}.

\subsection{The Projection Algorithm}

When a request arrives at leaf $r_t$ at time $t$, we consider the polytope
\[ P_t := P \cap \{x_{r_t,1} \leq \delta\}. \] Given the previous
solution $\x^{t-1} \in P_{t-1} \cap P_\delta$, the new point $\x^t$ is the projection
of $\x^{t-1}$ onto $P_t$, with respect to the Bregman divergence for the
\emph{(shifted) multilevel entropy function}:
\[ \Ddiv{x}{x'} := \sum_u w_u \sum_j \bigg( \t{x}_{u,j} \log
  \frac{\t{x}_{u,j}}{\t{x}'_{u,j}} - \t{x}_{u,j} + \t{x}'_{u,j}
  \bigg). \] (Here the tildes over the variables denote an additive
shift by $\delta$, so that $\t{x} = (x+\d)$.)  In other words, we set
$\x^t := \arg\min_{x \in P_t} \Ddiv{x}{\x^{t-1}}$. Since
$P_t \neq P_\delta$, we need to show that $\x^t$ indeed lies in $P_\delta$;
this appears in Claim~\ref{clm:box}.

\subsubsection{The Optimality Conditions}
\label{sec:opt-cond}

The projection problem above can be written as:
\begin{alignat}{2}
  \min_{x} \sum_{u \neq r_t} w_u \sum_j & \bigg( \t{x}_{u,j} \log
    \frac{\t{x}_{u,j}}{\t{\x}^{t-1}_{u,j}} - \t{x}_{u,j} \; + \; \t{\x}^{t-1}_{u,j} \bigg)  \notag\\
s.t. \quad  x_{\rootvtx,j} &\geq \mathbf{1}_{(j > h)} & & \forall j \label{eq:ks-root} \\
  \tsty \sum_{j \leq |S|} x_{u, j} &\leq \tsty \sum_{(v,\ell) \in S} x_{v,\ell}
  & \quad & \forall \;\text{non-leaves } u, \forall \,S \sse \chi_u \label{eq:ks-subsets} \\
  x_{r_t,1} &\leq \delta \label{eq:ks-req} %\\
  %x_{u,j} &\geq 0 && \forall u \in V_1, \forall j
\end{alignat}
Each constraint of the
form~(\ref{eq:ks-subsets}) is uniquely specified by some set $S$ of atoms that
share a common parent, which we denote by $p(S)$ --- in other words,
$S \sse \chi_{p(S)}$.

The KKT optimality conditions show that for all
$u \not\in \{\rootvtx, r_t\}$,
\begin{gather}
  w_u \log \frac{\t\x^t_{u,j}}{\t\x^{t-1}_{u,j}} =
  \underbrace{\sum_{S \sse \chi_{p(u)}: (u,j) \in S} \lambda_{S}}_{a_{u,j}}  -
  \underbrace{\sum_{T \sse \chi_{u}: j \leq |T|}
    \lambda_{T}}_{b_{u,j}} %\; + \; \gamma_{u,j}.
  \tag{KKT2a} \label{eq:kkt-ks}
\end{gather}
where all the Lagrange multipliers are non-negative. While we omit the superscripts $t$ for the variables $\lambda, a_{u,j}, b_{u,j}$, we emphasize that all these terms are different for each time $t$.
Note that the
leaves $u$ have no $b_{u,j}$ terms, only the (positive) $a_{u,j}$
terms. Hence all non-$r_t$ leaves can only increase in value. (We
sometimes refer to $b_{u,j}$ for a leaf $u$, in which case imagine
$b_{u,j}=0$.)

For the root atoms, the KKT conditions have the dual variable
$\lambda_{\rootvtx,j} \geq 0$ corresponding to
constraint~(\ref{eq:ks-root}), but they do not have any $a_{\rootvtx, j}$
terms:
\begin{gather}
  w_\rootvtx \log \frac{\t\x^t_{\rootvtx,j}}{\t\x^{t-1}_{\rootvtx,j}} =
 \underbrace{\lambda_{\rootvtx,j}}_{a_{\rootvtx,j}} -
  \underbrace{\sum_{T \sse \chi_{\rootvtx}: j \leq |T|}
    \lambda_{T}}_{b_{\rootvtx,j}} .
  \tag{KKT2b} \label{eq:kktroot-ks}
\end{gather}
For the demanded vertex $r_t$, which has a single atom $(r_t, 1)$:
\begin{gather}
  w_{r_t} \log \frac{\t\x^t_{r_t,1}}{\t\x^{t-1}_{r_t,1}} =
  \underbrace{\sum_{S \sse \chi_{p(r_t)}: (r_t,1) \in S}
    \lambda_{S}}_{a_{r_t,1}} \; - \;\gamma_{t} \tag{KKT2c} \label{eq:kktprime-ks}
\end{gather}
where $\gamma_t \geq 0$ corresponds to
constraint~(\ref{eq:ks-req}). Again, we assume $b_{r_t,1} = 0$.
In the rest of the paper, we define
\begin{gather}
  A^t_{u,j} := \t\x^t_{u,j} a^t_{u,j} \qquad \text{and} \qquad B^t_{u,j} := \t\x^t_{u,j} b^t_{u,j}
\end{gather}

Finally, complementary slackness implies:
\begin{gather}
  \bigg( \lambda_{S} > 0 \implies \sum_{j \leq |S|} \x^t_{p(S), j} = \sum_{(v,\ell)
    \in S} \x^t_{v,\ell} \bigg)
  \iff \bigg( \lambda_S \sum_{j \leq |S|} \x^t_{p(S),j} = \lambda_S \x^t(S)
  \bigg). \tag{CS2} \label{eq:tight}
\end{gather}
where $p(S)$ is the ``parent'' node for the set of nodes in $S$.
The other two variables give us:
\begin{gather}
  \lambda_{\rootvtx, j} \cdot ( \x^t_{\rootvtx, j} - \mathbf{1}_{(j > h)})
  = 0, \label{eq:ks-csroot} \\
  \gamma_t \, \x^t_{r_t,1} = \gamma_t\, \delta . \label{eq:ks-req-cs}
\end{gather}

\subsection{Properties of the Projected Point}
\label{sec:prop-proj}

We prove some useful properties for the new optimal solution $\x^t$.
These properties are satisfied by $\x^0$ by construction, and we inductively assume
that they hold for $\x^{t-1}$, in order to prove them for $\x^t$.  We defer the
proofs until later; these are very similar to those
in~\cite{BCLLM17}.

\begin{restatable}[Root is Tight]{claim}{ksRoottight}
  \label{clm:root-vals}
  For the root vertex $\rootvtx$, $\x^t_{\rootvtx,j} = \mathbf{1}_{(j > h)}$.
\end{restatable}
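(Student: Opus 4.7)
The plan is to prove the claim by induction on $t$, combining the root KKT condition (KKT2b) with complementary slackness on the root constraint~\eqref{eq:ks-root}. The base case is immediate from the initial choice of $\x^0$. For the inductive step, I assume $\x^{t-1}_{\rootvtx,j}=\mathbf{1}_{(j>h)}$ and argue about $\x^t$ in two cases depending on whether $j\le h$ or $j>h$.

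First suppose $j\le h$, so the constraint is $x_{\rootvtx,j}\ge 0$. By complementary slackness, either the constraint is tight ($\x^t_{\rootvtx,j}=0$, which is what we want), or $\lambda_{\rootvtx,j}=0$. In the latter subcase, (KKT2b) reduces to $w_\rootvtx \log(\t\x^t_{\rootvtx,j}/\t\x^{t-1}_{\rootvtx,j}) = -b_{\rootvtx,j} \le 0$, since $b_{\rootvtx,j}$ is a non-negative combination of non-negative Lagrange multipliers $\lambda_T$. Using the inductive hypothesis $\x^{t-1}_{\rootvtx,j}=0$ (so $\t\x^{t-1}_{\rootvtx,j}=\delta$), this gives $\t\x^t_{\rootvtx,j}\le \delta$, i.e., $\x^t_{\rootvtx,j}\le 0$. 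Combined with the feasibility constraint $\x^t\in P$, we conclude $\x^t_{\rootvtx,j}=0$.

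The case $j>h$ is analogous: the constraint is $x_{\rootvtx,j}\ge 1$. If it is tight, we are done. Otherwise $\lambda_{\rootvtx,j}=0$, and (KKT2b) together with the inductive hypothesis $\t\x^{t-1}_{\rootvtx,j}=1+\delta$ give $\t\x^t_{\rootvtx,j}\le 1+\delta$, i.e., $\x^t_{\rootvtx,j}\le 1$; combined with the lower bound $\x^t_{\rootvtx,j}\ge 1$ from feasibility, this forces $\x^t_{\rootvtx,j}=1$.

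I do not anticipate a serious obstacle here. The key observation driving the whole argument is simply that the $b_{\rootvtx,j}$ term in (KKT2b) is a non-negative sum of $\lambda_T$'s, so whenever the root constraint is inactive at time $t$, the log-ratio of $\t\x^t_{\rootvtx,j}$ to $\t\x^{t-1}_{\rootvtx,j}$ is non-positive, which together with the inductive hypothesis keeps the root pinned to its extreme value. The only mild care needed is bookkeeping on the shift $\t{x}=x+\delta$ and on separating the two cases according to the value of $\mathbf{1}_{(j>h)}$.
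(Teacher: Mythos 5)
Your proof is correct and is essentially the paper's argument: both rest on induction over $t$, feasibility of~\eqref{eq:ks-root}, the non-negativity of $b_{\rootvtx,j}$ in (KKT2b), and complementary slackness~\eqref{eq:ks-csroot}; you merely organize it as a dichotomy on whether $\lambda_{\rootvtx,j}=0$, while the paper phrases it as ``if the coordinate increased then $\lambda_{\rootvtx,j}>0$, so slackness pins it to $\mathbf{1}_{(j>h)}$.'' No gap to report.
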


\begin{restatable}[Box Constraints]{claim}{ksBox}
  \label{clm:box}
  For all $u, j$, $0 \leq \x^t_{u,j} \leq 1$. Moreover,
  $\x^t_{r_t,1} = \delta$. Finally,
  $\x^t_{u,1} \geq \delta$ for leaves $u$, and hence $\x^t \in P_\delta$.
\end{restatable}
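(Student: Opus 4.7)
I would prove the three assertions of Claim~\ref{clm:box} in sequence, by induction on $t$, using the KKT conditions~\eqref{eq:kkt-ks}--\eqref{eq:kktprime-ks} together with complementary slackness~\eqref{eq:tight} and~\eqref{eq:ks-req-cs}. The base case is supplied by choosing the initial point in $P_\delta$, and at each step the inductive hypothesis gives $\x^{t-1} \in P_\delta$.

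For the tightness $\x^t_{r_t,1} = \delta$: constraint~\eqref{eq:ks-req} already forces $\x^t_{r_t,1} \leq \delta$, so only the reverse inequality is at stake. If the Lagrange multiplier $\gamma_t > 0$, complementary slackness~\eqref{eq:ks-req-cs} gives $\x^t_{r_t,1} = \delta$ immediately. If $\gamma_t = 0$, then~\eqref{eq:kktprime-ks} reduces to $w_{r_t}\log(\t\x^t_{r_t,1}/\t\x^{t-1}_{r_t,1}) = a_{r_t,1} \geq 0$, so $\x^t_{r_t,1} \geq \x^{t-1}_{r_t,1} \geq \delta$ by the inductive hypothesis, and combined with the constraint this again forces equality.

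For $\x^t_{u,1} \geq \delta$ at a leaf $u \neq r_t$: a leaf has no children, so there is no subset constraint~\eqref{eq:ks-subsets} with $u$ as its parent; hence $b_{u,1}=0$ in~\eqref{eq:kkt-ks}. Thus $w_u\log(\t\x^t_{u,1}/\t\x^{t-1}_{u,1}) = a_{u,1} \geq 0$, giving $\x^t_{u,1} \geq \x^{t-1}_{u,1} \geq \delta$. In effect, leaf atoms that do not serve the current request can only grow under the projection, so the $\delta$-lower bound is preserved trivially.

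For the box bound $\x^t_{u,j} \in [0,1]$: I would induct down the tree. Claim~\ref{clm:root-vals} supplies the base case, $\x^t_{\rootvtx,j} = \mathbf{1}_{(j>h)} \in \{0,1\}$. At an atom $(v,\ell)$ at depth $d+1$ with parent $u$, exponentiating~\eqref{eq:kkt-ks} gives the multiplicative update $\t\x^t_{v,\ell} = \t\x^{t-1}_{v,\ell}\exp((a_{v,\ell}-b_{v,\ell})/w_v)$, and I would combine this with the subset constraints~\eqref{eq:ks-subsets} and complementary slackness~\eqref{eq:tight} at $u$ to transfer the bound from depth $d$ to depth $d+1$. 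I expect the main obstacle to be the upper bound $\x^t_{v,\ell} \leq 1$, since~\eqref{eq:ks-subsets} supplies only lower bounds on child sums. The resolution should be to track the support of the nonzero multipliers $\lambda_S$: by~\eqref{eq:tight} the tight subset constraints collapse to equalities that carry the $\{0,1\}$ pattern at the root downward in a laminar fashion, and combined with the inductive bound $\x^{t-1}_{v,\ell} \leq 1$ and the fact that the Bregman objective penalizes increases of $\x^t_{v,\ell}$ above $\x^{t-1}_{v,\ell}$ unless forced, this rules out any value exceeding $1$ at the new step. The nonnegativity bound $\x^t_{v,\ell}\geq 0$ follows analogously by chaining the single-atom constraints $x_{u,1}\leq x_{v,\ell}$ down from $x_{\rootvtx,1}=0$.
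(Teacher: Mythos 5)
Your handling of the request coordinate and of the leaves is correct and matches the paper: the dichotomy on $\gamma_t$ (complementary slackness~(\ref{eq:ks-req-cs}) when $\gamma_t>0$, monotonicity from~(\ref{eq:kktprime-ks}) when $\gamma_t=0$) gives $\x^t_{r_t,1}=\delta$, and the absence of $b$-terms at non-$r_t$ leaves gives $\x^t_{u,1}\geq\x^{t-1}_{u,1}\geq\delta$. Likewise, chaining the singleton constraints of~(\ref{eq:ks-subsets}) down from the root constraint~(\ref{eq:ks-root}) is exactly how the paper proves $\x^t_{u,j}\geq 0$ (note these bounds genuinely need proof, since the box constraints of $P$ are not imposed in the projection program of \S\ref{sec:opt-cond}).

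The gap is the upper bound $\x^t_{u,j}\leq 1$, which is the only truly nontrivial assertion, and your sketch for it is not an argument and rests on premises that are false. The tight sets do not ``carry the $\{0,1\}$ pattern of the root downward in a laminar fashion'': only the root atoms are pinned to $\{0,1\}$ (Claim~\ref{clm:root-vals}), internal coordinates are genuinely fractional, and the constraint family $\{S\sse\chi_u\}$ is far from laminar (tight sets can cross; at best their union is tight, Lemma~\ref{lem:cross}). And ``the Bregman objective penalizes increases unless forced'' is precisely the statement that needs proof. The working argument is a highest-violator contradiction: if some coordinate exceeds $1$, pick the \emph{highest} atom $(u,j)$ with $\x^t_{u,j}>1$; by induction on $t$, $\x^{t-1}_{u,j}\leq1$, so this coordinate increased, hence by~(\ref{eq:kkt-ks}) $a_{u,j}>0$ and some $\lambda_S>0$ with $(u,j)\in S\sse\chi_{p(u)}$; complementary slackness~(\ref{eq:tight}) makes $S$ tight, and then the constraint~(\ref{eq:ks-subsets}) for $S\setminus\{(u,j)\}$ is violated, since
\[ \x^t\big(S\setminus\{(u,j)\}\big) \;=\; \x^t(S)-\x^t_{u,j} \;<\; \sum_{\ell\leq|S|}\x^t_{p(u),\ell}-1 \;\leq\; \sum_{\ell\leq|S|-1}\x^t_{p(u),\ell}, \]
where the last inequality uses $\x^t_{p(u),\ell}\leq1$, available only because $(u,j)$ was chosen as the highest violator. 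Your outline contains neither the time-induction step establishing that the offending coordinate must have \emph{increased} (which is what produces the positive multiplier), nor the key move of deleting $(u,j)$ from the tight set and comparing against the constraint for the smaller set; without these, ``this rules out any value exceeding $1$'' does not follow from what you wrote.
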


\begin{restatable}[Flow]{lemma}{ksFlow}
  \label{lem:flow}
  For each internal node $u$,
  $\sum_j \x^t_{u,j} = \sum_{(v,\ell) \in \chi_u} \x^t_{v,\ell}$.
  This implies that for any depth $d$,
  \[ \tsty \sum_{u \in V_d} \sum_j \x^t_{u,j} = n-h. \]
  So the difference between $\x^{t-1}$ and $\x^t$ can be viewed as a
  flow from $r_t$ to the other leaves in $T$.
\end{restatable}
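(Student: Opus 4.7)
The three assertions are linked, and my plan is to derive them in cascade starting from the per-node tightness. Once one shows $\sum_j \x^t_{u, j} = \sum_{(v, \ell) \in \chi_u} \x^t_{v, \ell}$ for every internal $u$, the depth-sum claim follows by top-down induction on $d$: the base case $d = 0$ is Claim~\ref{clm:root-vals} (which gives $\sum_j \x^t_{\rootvtx, j} = n-h$), and the inductive step telescopes $\sum_{u \in V_d} \sum_j \x^t_{u, j} = \sum_{u \in V_d} \sum_{(v, \ell) \in \chi_u} \x^t_{v, \ell} = \sum_{v \in V_{d+1}} \sum_\ell \x^t_{v, \ell}$. The flow interpretation then comes for free: since the depth-sum is independent of $t$, the vector $\x^t - \x^{t-1}$ has zero net mass at every depth and obeys per-node conservation at every internal $u$, so it defines a flow on $T$ whose only leaf-level imbalance is at $r_t$ (where $\x^t_{r_t, 1} = \delta \leq \x^{t-1}_{r_t, 1}$); thus mass routes out of $r_t$ to the other leaves.

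The bulk of the work is therefore in establishing per-node tightness, and I would argue this by contradiction using the KKT conditions~(\ref{eq:kkt-ks}) together with complementary slackness~(\ref{eq:tight}). Suppose for some internal $u$ the constraint~(\ref{eq:ks-subsets}) with $S = \chi_u$ were strictly slack at the projection $\x^t$; then~(\ref{eq:tight}) forces $\lambda_{\chi_u} = 0$. The pivotal observation is that $\chi_u$ is the unique $T \subseteq \chi_u$ with $|T| \geq |T_u|$ (since $|\chi_u| = |T_u|$), so $b_{u, |T_u|} = \lambda_{\chi_u} = 0$ in~(\ref{eq:kkt-ks}) applied to the top atom $(u, |T_u|)$. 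This yields $w_u \log(\t\x^t_{u, |T_u|}/\t\x^{t-1}_{u, |T_u|}) = a_{u, |T_u|} \geq 0$, strictly positive whenever $a_{u, |T_u|} > 0$. I would then exhibit a small feasibility-preserving perturbation that strictly decreases the Bregman objective, contradicting optimality. The top atom $(u, |T_u|)$ is special in two ways: it appears on the LHS of \emph{only} the (assumed-slack) $S = \chi_u$ constraint at $u$, and it appears on the RHS of $p(u)$-constraints where it can be compensated by shifts to sibling atoms. Thus one can decrease $\x^t_{u, |T_u|}$ slightly (which strictly reduces the divergence, by the KKT-derived inequality above) while compensating with an increase at an atom in $\chi_u$ whose value has grown beyond $\x^{t-1}$, contradicting optimality of $\x^t$.

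The main technical obstacle is executing this perturbation cleanly in the face of the exponentially many constraints of form~(\ref{eq:ks-subsets}) per node: the compensating increase at a child atom of $u$ may itself require cascading adjustments through the subtree rooted at that child, and one has to verify that such cascades terminate inside the polytope. This is the discrete counterpart of the mass-balance calculations in the continuous framework of~\cite{BCLLM17}, and is where the bulk of the technical work lies.
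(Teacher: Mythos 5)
Your cascade (per-node tightness $\Rightarrow$ depth sums via Claim~\ref{clm:root-vals} and telescoping $\Rightarrow$ flow interpretation) matches the paper's, and those latter steps are fine. The gap is in the heart of the matter: your proof of per-node tightness. This identity is \emph{not} a consequence of optimality/KKT alone: if $\x^{t-1}$ already lies in $P_t$, the projection returns $\x^t=\x^{t-1}$ with all multipliers zero, so any slack that $\x^{t-1}$ had at a node $u$ would persist, and no contradiction can be squeezed out of the KKT system. A correct proof must therefore argue by induction on $t$ (the paper assumes the lemma for $\x^{t-1}$, and $\x^0$ satisfies it by construction) and compare $\x^t$ against $\x^{t-1}$. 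Your sketch touches $\x^{t-1}$ only through the unproven assertion that some atom in $\chi_u$ ``has grown beyond $\x^{t-1}$''; if the constraint for $\chi_u$ is slack at $\x^t$ but tight at $\x^{t-1}$, it may instead be that some $u$-atom \emph{decreased} (possible whenever a proper subset $T\subsetneq\chi_u$ is tight, giving $b_{u,j}>0$), so the compensating atom you rely on need not exist. The local exchange itself is also shaky: when $a_{u,|L_u|}=0$ (note the top atom index is $|L_u|$, not $|T_u|$) the partial derivative at the top atom vanishes and decreasing it yields no strict improvement; and decreasing that atom threatens tight constraints at $p(u)$, in which it appears on the right-hand side---an increase at a child of $u$ cannot repair those. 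The ``cascading adjustments'' you defer to future work are exactly the missing crux, not a routine technicality.

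For comparison, the paper avoids perturbations entirely. It takes the maximal tight set $S_u\sse\chi_u$ (well-defined because of the uncrossing Lemma~\ref{lem:cross}), and reads signs off the KKT conditions: every atom $(u,j)$ with $j>|S_u|$ has $b_{u,j}=0$ and hence weakly increases, while every child atom outside $S_u$ has $a_{v,\ell}=0$ and hence weakly decreases. Comparing with $\x^{t-1}$---whose per-level masses are all $n-h$ by the induction hypothesis---this shows the level-$d$ mass of $\x^t$ is at least the level-$(d+1)$ mass; since the constraints~(\ref{eq:ks-subsets}) summed over a level give the reverse inequality, every per-node constraint must be tight. (If you prefer a per-node statement, the same inequalities run at a single node $u$ using tightness of $\x^{t-1}$ there: $\sum_{j\le|S_u|}\x^t_{u,j}=\x^t(S_u)$ and $\sum_{j>|S_u|}\x^t_{u,j}\ge\sum_{j>|S_u|}\x^{t-1}_{u,j}\ge\x^{t-1}(\chi_u\setminus S_u)\ge\x^t(\chi_u\setminus S_u)$, which together with feasibility forces equality.) Either way, the maximal tight set, the uncrossing lemma, and the comparison with $\x^{t-1}$ are the ingredients your proposal is missing; the exchange at the single top atom cannot replace them.
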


The following lemma, using complementary slackness, is crucial to
relate the dual values across levels.
\begin{restatable}[Relating Consecutive Levels]{lemma}{ksLevels}
  \label{lem:cs-useful}
  For any node $u$ in the tree
  \[ \sum_j B^t_{u,j}  = \sum_{(v,\ell) \in \chi_u} A^t_{v,\ell}
    = \sum_{T \sse \chi_u} \lambda_T \t\x^t(T). \] % Moreover, an
  % analogous bound holds with $x(\cdot)$ replaced by $\xt(\cdot)$.
\end{restatable}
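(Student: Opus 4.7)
The plan is to prove both equalities by expanding the definitions of $A^t$ and $B^t$, swapping the order of summation, and applying complementary slackness. Throughout, the key observation is that since $\t\x^t = \x^t + \delta \cdot \mathbf{1}$ and the tight constraints $\sum_{j\leq |T|} \x^t_{p(T),j} = \x^t(T)$ involve exactly $|T|$ atoms on each side, the shift by $\delta$ passes through cleanly, giving $\sum_{j\leq |T|} \t\x^t_{p(T),j} = \t\x^t(T)$ whenever $\lambda_T > 0$.

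First I would handle $\sum_j B^t_{u,j}$. Unfolding $B^t_{u,j} = \t\x^t_{u,j}\, b^t_{u,j}$ and the definition of $b^t_{u,j}$ from (\ref{eq:kkt-ks}) (or the analogous form for the root from (\ref{eq:kktroot-ks})), I swap the sum over $j$ with the sum over subsets $T \sse \chi_u$ to obtain
\[
\sum_j B^t_{u,j} \;=\; \sum_{T \sse \chi_u} \lambda_T \sum_{j \leq |T|} \t\x^t_{u,j}.
\]
Then I invoke the complementary slackness condition (\ref{eq:tight}): for every $T$ with $\lambda_T > 0$ the primal constraint is tight, so $\sum_{j\leq|T|}\x^t_{u,j} = \x^t(T)$, which after adding $|T|\delta$ on both sides becomes $\sum_{j\leq|T|}\t\x^t_{u,j} = \t\x^t(T)$. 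Inactive constraints contribute zero, so the sum collapses to $\sum_{T\sse \chi_u} \lambda_T \, \t\x^t(T)$, giving the second equality of the lemma.

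For the first equality, I expand $A^t_{v,\ell} = \t\x^t_{v,\ell} \, a^t_{v,\ell}$, where by definition $a^t_{v,\ell} = \sum_{S \sse \chi_u : (v,\ell)\in S}\lambda_S$ (since $p(v)=u$ for $(v,\ell)\in \chi_u$). Summing over $(v,\ell)\in \chi_u$ and swapping sums,
\[
\sum_{(v,\ell)\in \chi_u} A^t_{v,\ell}
\;=\; \sum_{S\sse \chi_u} \lambda_S \sum_{(v,\ell)\in S} \t\x^t_{v,\ell}
\;=\; \sum_{S \sse \chi_u} \lambda_S\, \t\x^t(S),
\]
which is precisely the same expression as above.

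Finally, I would check the boundary cases to make sure nothing is special: for $u = \rootvtx$ the $b^t$-terms take the same shape as in (\ref{eq:kkt-ks}), so the argument is unchanged; for $u$ a leaf both sides are empty sums equal to $0$; and leaves appearing as $v=r_t$ in $\chi_u$ pose no issue because the $a^t_{r_t,1}$ coming from (\ref{eq:kktprime-ks}) has the same form as the other $a^t_{v,\ell}$. I do not anticipate any serious obstacle — the entire argument is a double-counting identity — the only subtlety worth stating explicitly is that the primal tightness of $\x^t$ lifts to tightness of $\t\x^t$ because both sides of the constraint contain $|T|$ shifted atoms, so the additive constant $|T|\delta$ cancels.
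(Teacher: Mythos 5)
Your proposal is correct and follows essentially the same route as the paper's proof: expand the definitions of $A^t$ and $B^t$, swap the order of summation, and use complementary slackness~(\ref{eq:tight}) together with the observation that the tight constraint has $|T|$ atoms on each side, so adding $\lambda_T|T|\delta$ lifts tightness to the shifted variables $\t\x^t$. The boundary-case checks you add (root, leaves, the requested leaf $r_t$) are fine and only make explicit what the paper leaves implicit.
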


\subsection{The Potential Function}

If $\y^t \in \{0,1\}^N$ is the optimal solution, and
$\x^t \in [0,1]^N$ is our solution, the potential is defined
as
\[ \Pdiv{\y^t}{\x^t} \quad = \quad \sum_u w_u \sum_j \t\y^t_{u,j} \log
  \frac{\t\y^t_{u,j}}{\t\x^t_{u,j}}
  %\quad = \quad \sum_u w_u \sum_{j: \y^t_{u,j} = 1} \log
  % \frac{1}{\x^t_{u,j}}
\]
Observe that each term in the inner sum lies in the range $[- \d \log (1+\nicefrac1\d), (1+\d) \log (1+\nicefrac1\d)]$. %$[-\frac{1+\d}{e}, (1+\d) \log (1+\nicefrac1\d)]$.

\subsubsection{When OPT Moves: Upper Bounding the Potential Gain }
\label{sec:when-opt-moves}

Suppose OPT moves from $\y^{t-1}$ to $\y^t$. % changing
% the potential from $\Pdiv{\y^{t-1}}{\x^{t-1}}$ to
% $\Pdiv{\y^t}{\x^{t-1}}$.
Changing a coordinate $\y_{u,j}$ from $0$ to $1$ causes OPT to pay
  $w_u$ for such an increase---recall that it only pays for increases,
  and not decreases. Moreover, the increase in potential is
\[ w_u ((1+\d) \log (1+\d) - \d \log \d - \log \t\x^{t-1}_{u,j}) \leq
  w_u (1+\d)\log (1+\nicefrac1\d).
  \] Moreover, changing $\y_{u,j}$ from $1$ to $0$ only decreases the
  potential. Hence, we get
\begin{lemma}
  \label{lem:opt-charge}
  $\Pdiv{\y^t}{\x^{t-1}} - \Pdiv{\y^{t-1}}{\x^{t-1}} \leq (1+\d) \log (1+\nicefrac1\delta)
  \cdot \Delta OPT$.
\end{lemma}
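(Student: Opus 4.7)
The plan is to reduce the bound to a per-atom calculation by decomposing OPT's move into independent coordinate flips. Since $\y^{t-1}, \y^t \in \{0,1\}^N$, I would first write
\[
  \Pdiv{\y^t}{\x^{t-1}} - \Pdiv{\y^{t-1}}{\x^{t-1}} = \sum_u w_u \sum_j \bigg[ \t\y^t_{u,j}\log\frac{\t\y^t_{u,j}}{\t\x^{t-1}_{u,j}} - \t\y^{t-1}_{u,j}\log\frac{\t\y^{t-1}_{u,j}}{\t\x^{t-1}_{u,j}} \bigg],
\]
so the difference becomes a sum over atoms of three possible contributions: unchanged coordinates (zero), coordinates flipping $0\to 1$, and coordinates flipping $1\to 0$. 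Crucially, each term depends only on the local value $\t\x^{t-1}_{u,j}$, so we can analyze each atom in isolation.

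A $0\to 1$ flip at $(u,j)$ contributes exactly $w_u\big[(1+\d)\log(1+\d) - \d\log\d - \log\t\x^{t-1}_{u,j}\big]$, which is the expression displayed in the lemma statement. The box constraints from Claim~\ref{clm:box} give $\t\x^{t-1}_{u,j} \in [\d, 1+\d]$; the lower endpoint yields $-\log\t\x^{t-1}_{u,j} \leq -\log\d$, producing the per-flip upper bound $w_u(1+\d)\log(1+\nicefrac{1}{\d})$. A $1\to 0$ flip contributes the negation of the same expression, so I need to check that $(1+\d)\log(1+\d) - \d\log\d - \log\t\x^{t-1}_{u,j} \geq 0$; since $\t\x^{t-1}_{u,j} \leq 1+\d$, this reduces to $\log(1+\d) \leq (1+\d)\log(1+\d) - \d\log\d$, equivalently $\d\log((1+\d)/\d) \geq 0$. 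Hence the $1\to 0$ contributions are non-positive and may be dropped.

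Summing the surviving $0 \to 1$ contributions gives a total of at most $(1+\d)\log(1+\nicefrac{1}{\d}) \cdot \sum_u w_u \cdot \#\{j : (u,j) \text{ flips } 0\to 1\}$. Since the integer representation of $\y$ places zeros at small indices and ones at large indices, at each node $u$ all non-trivial flips go in a single direction---$0\to 1$ if the server count in $T_u$ drops, $1\to 0$ if it grows---with cardinality $|\y^t(T_u)-\y^{t-1}(T_u)|$. Thus $\sum_u w_u\cdot \#\{j : (u,j)\text{ flips } 0\to 1\} \leq \sum_u w_u|\y^t(T_u)-\y^{t-1}(T_u)| = d(\y^t, \y^{t-1}) = \Delta OPT$, yielding the claimed inequality. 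There is no real obstacle here---the entire argument is per-atom arithmetic at the two endpoints $\{\d, 1+\d\}$ combined with the two-sided box constraint from Claim~\ref{clm:box}. The only modeling subtlety, the informal ``OPT pays only for increases'' remark, is precisely what is formalized by dropping the $1\to 0$ contributions above.
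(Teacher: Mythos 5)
Your proof is correct and follows essentially the same route as the paper's: a per-atom decomposition in which each $0\to1$ flip raises the potential by $w_u\big((1+\d)\log(1+\d)-\d\log\d-\log\t\x^{t-1}_{u,j}\big)\le w_u(1+\d)\log(1+\nicefrac1\d)$ (using $\t\x^{t-1}_{u,j}\ge\d$, which the $\d$-shift guarantees for all atoms), while $1\to0$ flips only decrease the potential, and the weighted count of flips is charged to $\Delta OPT$. Your explicit verification that the $1\to0$ contributions are non-positive and the flip-count accounting at each node are slightly more detailed than the paper's one-line remarks, but the argument is the same.
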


An aside: to see why the variables in the Bregman divergence, and hence
the potential, are shifted by $\delta$, observe that we do not ensure
that all $\x_{u,j}$ variables are at least $\delta$---only the leaves
are at least $\delta$. However, to prove the above lemma, we need to
control the potential change and make the gradients Lipschitz even at
the non-leaf nodes, so the terms are shifted explicitly by $\delta$.

\subsubsection{When ALG Moves: Lower Bounding the Potential Drop}
\label{sec:when-alg-moves}

Next, the algorithm changes its solution from $\x^{t-1}$ to $\x^t$, and
the rest of the argument will be to bound the amortized cost.
Since we use a Bregman divergence
to project a point $\x^{t-1}$ down to $\x^t \in P_t$ (and also because
$\y^t \in P_t$), the reverse-Pythagorean property implies:
\begin{gather}
  \Ddiv{\y^t}{\x^{t-1}} \geq \Ddiv{\y^t}{\x^t} + \Ddiv{\x^t}{\x^{t-1}}
  \notag  \\
  \implies \underbrace{\Pdiv{\x^t}{\x^{t-1}}}_{\text{``shadow'' cost}} +
  \underbrace{\big(\Pdiv{\y^t}{\x^t} - \Pdiv{\y^t}{\x^{t-1}} \big)}_{\text{change
      in potential}} \leq 0.
\end{gather}
We now bound the algorithm's movement cost by some small factor times
this ``shadow'' cost $\Pdiv{\x^t}{\x^{t-1}}$.  Substituting the
expressions for $w_u \log \frac{\t\x^t_{u,j}}{\t\x^{t-1}_{u,j}}$
from~(\ref{eq:kkt-ks}-\ref{eq:kktprime-ks}) into the definition of
$\Pdiv{\cdot}{\cdot}$, and observing that
$\gamma_t \x^t_{r_t, 1} = \delta \gamma_t$, we get that the ``shadow
cost'' is:
\begin{align}
  \Pdiv{\x^t}{\x^{t-1}}
  &= \sum_{u} w_u \sum_j \t\x^t_{u,j} \log
    \frac{\t\x^t_{u,j}}{\t\x^{t-1}_{u,j}} \notag \\
  &=
    \sum_{u} \sum_j \t\x^t_{u,j} (a_{u,j} - b_{u,j}) -
    2\delta\gamma_t =
    \sum_{u} \sum_j (A^t_{u,j} - B^t_{u,j}) -
    2\delta\gamma_t
  \label{eq:1}
\end{align}

Recall that the leaf nodes (i.e., nodes at depth $D$) do not have any $b_{u,j}$ terms
in~(\ref{eq:kkt-ks}). Lemma~\ref{lem:cs-useful} now allows us to
cancel the $B^t_{u,j}$ with the $A^t_{v,\ell}$
terms on the next level, giving:
\begin{align}
 \Pdiv{\y^t}{\x^{t-1}} - \Pdiv{\y^t}{\x^t} \geq  \Pdiv{\x^t}{\x^{t-1}}
% &= \sum_{u \neq root} \sum_j \x^t_{u,j} A_{u,j} - \sum_{u
%                  \neq root} \sum_{(v,\ell) \in \chi_u} \x^t_{v,\ell}
%                  A_{v,\ell} - \delta\gamma \notag\\
               &= \sum_j A^t_{\rootvtx,j} - 2\delta\gamma_t.
               % =  \sum_{S \sse \chi_{\rootvtx}}
               %   \lambda_S \x^t(S) - \delta\gamma
                \label{eq:7}
\end{align}
It now suffices to bound the movement cost of the algorithm by some
constant factor times the expression in~(\ref{eq:7}). We will not manage
to do that; instead we give another lower bound on the drop in
potential. For some $\x$, let $W(\x) := \sum_u \sum_j w_u \x_{u,j}$.
Observe that $\Ddiv{\y}{\x} = \Pdiv{\y}{\x} - W(\y) + W(\x)$, and hence
the $W(\cdot)$ function captures the difference between $\Ddiv{\cdot}{\cdot}$ and $\Pdiv{\cdot}{\cdot}$.

\begin{restatable}[Second Lower Bound]{lemma}{lemSecond}
  \label{lem:second-bound}
  $\Ddiv{\y^t}{\x^{t-1}} - \Ddiv{\y^t}{\x^t} \geq \delta\gamma_t$. Thus,
  \[ \Pdiv{\y^t}{\x^{t-1}} - \Pdiv{\y^t}{\x^t} \geq \delta\gamma_t +
    W(\x^t) - W(\x^{t-1}). \]
\end{restatable}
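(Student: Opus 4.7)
The plan is to apply the standard three-point identity for Bregman divergences together with the KKT conditions of the projection. With $h(x) := \sum_u w_u \sum_j (\t{x}_{u,j} \log \t{x}_{u,j} - \t{x}_{u,j})$ the convex potential that induces $\Ddiv{\cdot}{\cdot}$, one has the identity
\[
\Ddiv{\y^t}{\x^{t-1}} - \Ddiv{\y^t}{\x^t} \;=\; \Ddiv{\x^t}{\x^{t-1}} \;+\; \ip{\nabla h(\x^t) - \nabla h(\x^{t-1}),\, \y^t - \x^t}.
\]
Since $\Ddiv{\x^t}{\x^{t-1}} \geq 0$, it suffices to show that the inner product above is at least $\delta \gamma_t$. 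The $(u,j)$-coordinate of $\nabla h(\x^t) - \nabla h(\x^{t-1})$ is exactly $w_u \log (\t\x^t_{u,j}/\t\x^{t-1}_{u,j})$, which the KKT equations \eqref{eq:kkt-ks}--\eqref{eq:kktprime-ks} express as a signed sum of the Lagrange multipliers $\{\lambda_S\}$, $\{\lambda_{\rootvtx,j}\}$, and $\gamma_t$.

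The key step is to exchange the order of summation and regroup the inner product by multiplier. For each subset constraint with multiplier $\lambda_S \geq 0$ (where $S \sse \chi_{p(S)}$), collecting the ``$+\lambda_S$'' pieces coming from the atoms $(v,\ell) \in S$ (from the $a$-coefficients) and the ``$-\lambda_S$'' pieces from atoms $(p(S),j)$ with $j \leq |S|$ (from the $b$-coefficients) yields the contribution
\[
\lambda_S \Big[\,\tsty\sum_{(v,\ell) \in S} (\y^t_{v,\ell} - \x^t_{v,\ell}) \;-\; \tsty\sum_{j \leq |S|} (\y^t_{p(S),j} - \x^t_{p(S),j})\,\Big].
\]
The $\y^t$-part is nonnegative by primal feasibility of $\y^t \in P$, while the $\x^t$-part vanishes whenever $\lambda_S > 0$ by complementary slackness~\eqref{eq:tight}; so this contribution is $\geq 0$. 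For each root multiplier $\lambda_{\rootvtx,j}$, the contribution is $\lambda_{\rootvtx,j}(\y^t_{\rootvtx,j} - \x^t_{\rootvtx,j}) \geq 0$, since Claim~\ref{clm:root-vals} gives $\x^t_{\rootvtx,j} = \mathbf{1}_{(j>h)}$ and feasibility of $\y^t$ gives $\y^t_{\rootvtx,j} \geq \mathbf{1}_{(j>h)}$. Finally, the demand multiplier $\gamma_t$ contributes $-\gamma_t(\y^t_{r_t,1} - \x^t_{r_t,1}) = -\gamma_t(0 - \delta) = \delta \gamma_t$, because $\y^t$ serves the request $r_t$ (so $\y^t_{r_t,1} = 0$), while $\x^t_{r_t,1} = \delta$ by Claim~\ref{clm:box}. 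Summing these nonnegative contributions proves $\ip{\nabla h(\x^t) - \nabla h(\x^{t-1}),\, \y^t - \x^t} \geq \delta \gamma_t$, hence the first inequality.

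For the second statement, the definitions give $\Ddiv{x}{x'} = \Pdiv{x}{x'} - W(x) + W(x')$ up to an additive constant that depends only on the $\delta$-shift terms (since $W$ is defined without the tildes); this constant cancels when we form $\Ddiv{\y^t}{\x^{t-1}} - \Ddiv{\y^t}{\x^t}$. The $W(\y^t)$ terms cancel as well, leaving $\Pdiv{\y^t}{\x^{t-1}} - \Pdiv{\y^t}{\x^t} + W(\x^{t-1}) - W(\x^t)$, and rearranging together with the first inequality yields the claim. The main obstacle is doing the sign and index bookkeeping carefully in the regrouping step, where one must keep straight which atoms a given $\lambda_S$ hits on the ``$a$-side'' versus the ``$b$-side''; once the multipliers are grouped correctly, each term's nonnegativity is immediate from feasibility of $\y^t$ and complementary slackness at $\x^t$.
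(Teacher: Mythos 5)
Your proof is correct and follows essentially the same route as the paper: the same three-point identity for Bregman divergences, followed by showing that the inner product $\ip{\gr h(\x^t) - \gr h(\x^{t-1}), \y^t - \x^t}$ is at least $\delta\gamma_t$ using the optimality conditions of the projection, and then the same linear-term bookkeeping to pass from $\Ddiv{\cdot}{\cdot}$ to $\Pdiv{\cdot}{\cdot}$. The only difference is presentational: the paper packages the non-$\gamma_t$ part of the gradient difference as an element $d$ of the normal cone $N_P(\x^t)$ (Claims~\ref{clm:equiv} and~\ref{clm:normals}) and invokes $\ip{d, \x^t - \y^t} \geq 0$, whereas you verify that same nonnegativity multiplier-by-multiplier from feasibility of $\y^t$ and complementary slackness, which is an explicit unpacking of the identical argument.
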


We defer the proof to \S\ref{sec:proof-secondbound}. Averaging the expression in Lemma~\ref{lem:second-bound} with~(\ref{eq:7}) gives us
\begin{gather}
  \Pdiv{\y^t}{\x^{t-1}} - \Pdiv{\y^t}{\x^t} \geq \nicefrac1{3} \sum_{j}
  A^t_{\rootvtx,j} +
  \nicefrac23 (W(\x^t) - W(\x^{t-1})). \label{eq:ks-shadow}
\end{gather}
The linear terms $W(\x^t) - W(\x^{t-1})$ will telescope over time, and hence the
interesting term is the summation $\sum_{j}
  A^t_{\rootvtx,j}$.
In the next section, we relate the movement cost of the algorithm to this
summation, which will complete the argument.

\subsection{Bounding the Movement by the Shadow Cost: Shallow Trees}

We now bound the movement cost
$\sum_t \sum_{u} w_u \sum_j (\x^t_{u,j} - \x^{t-1}_{u,j})^+$ for the
entire sequence.\footnote{Note that it suffices to bound the increase in
coordinates, since the total movement is at most twice this amount, plus
an additive constant that depends only on the instance and is
independent of the request sequence.}
First, let us record a simple observation.
\begin{lemma}
  \label{lem:bound1}
  Suppose we have values $y, y' \geq 0$ such that
  $c \log \frac{\t{y}}{\t{y}'} \leq (a-b)$ with some $a, b \geq 0$ and $c > 0$.
  Then
  \begin{gather*}
    c\cdot (y - y')^+ \leq \t{y}\cdot a. %\label{eq:bound1}
  \end{gather*}
\end{lemma}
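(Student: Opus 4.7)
\textbf{Proof plan for Lemma~\ref{lem:bound1}.} The statement is a simple monotonicity/Pinsker-type inequality, and the plan is to dispose of the two cases $y \leq y'$ and $y > y'$ separately. When $y \leq y'$ we have $(y-y')^+ = 0$, while the right-hand side $\tilde{y}\cdot a$ is non-negative (since $\tilde{y} = y + \delta \geq 0$ and $a \geq 0$), so the inequality holds trivially.

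The interesting case is $y > y'$, where $(y-y')^+ = y - y' = \tilde{y} - \tilde{y}'$. I first discard $b$ from the hypothesis: since $b \geq 0$,
\[
c \log \frac{\tilde{y}}{\tilde{y}'} \; \leq \; a - b \; \leq \; a.
\]
Now I apply the poor-man's Pinsker inequality~\eqref{eq:pmp} with the pair $(\tilde{y}, \tilde{y}')$, giving $\tilde{y} - \tilde{y}' \leq \tilde{y}\log(\tilde{y}/\tilde{y}')$. Multiplying through by $c > 0$ and chaining with the displayed inequality,
\[
c\,(y - y')^+ \;=\; c\,(\tilde{y} - \tilde{y}') \;\leq\; c\,\tilde{y}\log\frac{\tilde{y}}{\tilde{y}'} \;\leq\; \tilde{y}\cdot a,
\]
which is the desired bound. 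There is no real obstacle here; the only subtlety worth noting is that the shift by $\delta$ is what makes \eqref{eq:pmp} applicable even when $y'$ itself could be zero, so the proof goes through uniformly for all $y,y' \geq 0$.
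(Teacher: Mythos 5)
Your proof is correct and follows essentially the same route as the paper's: handle $y \leq y'$ trivially, and for $y > y'$ chain $c(\t{y}-\t{y}') \leq c\,\t{y}\log(\t{y}/\t{y}')$ via \eqref{eq:pmp} with the hypothesis (dropping the $-b$ term, using $\t{y}, b \geq 0$). No issues.
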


\begin{proof}
  If $y \leq y'$ then $c\cdot (y-y')^+ = 0 \leq
  \t{y}\cdot a$. Else, when $y > y'$,
  \[ c \cdot (y-y')^+ = c\cdot (y-y') = c\cdot (\t{y}-\t{y}') \stackrel{(\ref{eq:pmp})}{\leq} c \cdot \t{y} \log
  \frac{\t{y}}{\t{y}'} \leq
  \t{y}(a-b) \leq \t{y}\cdot a. \qedhere\]  %This proves~(\ref{eq:bound1}).
\end{proof}

Using~(\ref{eq:kkt-ks}-\ref{eq:kktprime-ks}) in conjunction with
Lemma~\ref{lem:bound1} (where we set $a = a^t_{u,j}, b = b^t_{u,j}, c = w_u$),
\begin{align}
  \sum_{u} w_u \sum_j (\x^t_{u,j} - \x^{t-1}_{u,j})^+
  &\leq
  % \sum_{u} \sum_j (A^t_{u,j} + B^t_{u,j})
  %   \stackrel{(\text{Lemma~\ref{lem:cs-useful}})}{\leq}
    \sum_{u} \sum_j A^t_{u,j}.  \label{eq:5}
  %   \notag \\
  % &= \sum_{h=1}^H \sum_{v \in V_h} \sum_j \x^t_{u,j} (2A_{u,j} + \gamma_{u,j})
\end{align}
We do not get the $\gamma_{t}$ term, because this corresponds to the
requested node $r_t$; since the $\x^t_{r_t}$ value decreases, the
corresponding $(\x^t_{r_t,1} - \x^{t-1}_{r_t,1})^+$ term is in fact
zeroed out.

Since the lower bound on OPT is just in terms of the
$A^t_{\rootvtx,j}$ terms, we want to argue that all the non-root terms
in~(\ref{eq:5}) are bounded by the terms corresponding to $\rootvtx$. That is almost what
we now show (modulo a certain additive term that telescopes over time). For
brevity, given the $\x^t_{u,j}$ values, define
\begin{gather}
  \x^t_u := \sum_j \x^t_{u,j} . \label{eq:agg}
\end{gather}

\begin{lemma}
  \label{lem:aux-claim}
  For each non-leaf node $u$, we have
  \begin{gather}
    \sum_{(v,\ell) \in \chi_u} A^t_{v,\ell} \leq \sum_j A^t_{u,j} -
    w_u(\x^{t}_u - \x^{t-1}_u). \label{eq:charge-up}
  \end{gather}
\end{lemma}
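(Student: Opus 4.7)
The plan is to read off a per-coordinate identity for $A^t_{u,j} - B^t_{u,j}$ from the KKT stationarity conditions, sum over $j$, and convert the resulting log-ratio on the left into the claimed linear difference $w_u(\x^t_u - \x^{t-1}_u)$ using the poor-man's Pinsker inequality~(\ref{eq:pmp}). Lemma~\ref{lem:cs-useful} then lets us identify $\sum_j B^t_{u,j}$ with $\sum_{(v,\ell)\in\chi_u} A^t_{v,\ell}$, which is the quantity that appears on the left of~(\ref{eq:charge-up}).

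In detail, fix a non-leaf node $u$. Since the request $r_t$ is a leaf and $u$ is not a leaf, we have $u \neq r_t$, so the relevant KKT identity is either~(\ref{eq:kkt-ks}) (for an internal $u \neq \rootvtx$) or~(\ref{eq:kktroot-ks}) (for $u = \rootvtx$); in both cases it reads $w_u \log(\t\x^t_{u,j}/\t\x^{t-1}_{u,j}) = a^t_{u,j} - b^t_{u,j}$. Multiplying by $\t\x^t_{u,j} \geq 0$ and summing over $j$ gives
\[
  \sum_j w_u\, \t\x^t_{u,j} \log \frac{\t\x^t_{u,j}}{\t\x^{t-1}_{u,j}}
  \;=\; \sum_j \bigl(A^t_{u,j} - B^t_{u,j}\bigr).
\]
Now I apply~(\ref{eq:pmp}) termwise with $a = \t\x^t_{u,j}$ and $b = \t\x^{t-1}_{u,j}$ to bound the left-hand side below by $w_u \sum_j (\t\x^t_{u,j} - \t\x^{t-1}_{u,j})$. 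Since the additive $\delta$-shifts cancel in this difference, this equals $w_u(\x^t_u - \x^{t-1}_u)$ by the aggregation~(\ref{eq:agg}). Rearranging,
\[
  \sum_j B^t_{u,j} \;\le\; \sum_j A^t_{u,j} - w_u\,(\x^t_u - \x^{t-1}_u).
\]
Finally, Lemma~\ref{lem:cs-useful} applied at node $u$ states $\sum_j B^t_{u,j} = \sum_{(v,\ell) \in \chi_u} A^t_{v,\ell}$, and substituting this into the previous inequality yields~(\ref{eq:charge-up}).

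There is no real obstacle here, only two bookkeeping points worth checking. First, that the case $u = \rootvtx$ really goes through: by Claim~\ref{clm:root-vals} the root coordinates are frozen across time, so the log-ratio terms vanish and~(\ref{eq:charge-up}) holds with equality (and indeed $\x^t_\rootvtx - \x^{t-1}_\rootvtx = 0$), matching what one gets from~(\ref{eq:kktroot-ks}) after noting $a^t_{\rootvtx,j} = \lambda_{\rootvtx,j}$. Second, that the conversion from $\t$-variables to $\x$-variables is correct: since $\t\x = \x + \delta$, the difference $\t\x^t_{u,j} - \t\x^{t-1}_{u,j}$ coincides exactly with $\x^t_{u,j} - \x^{t-1}_{u,j}$, which is precisely why the potential was defined using the shifted variables in the first place.
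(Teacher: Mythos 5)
Your proof is correct and follows essentially the same route as the paper's: both arguments combine the KKT identities~(\ref{eq:kkt-ks})/(\ref{eq:kktroot-ks}) with the poor-man's Pinsker inequality (equivalently, non-negativity of the Bregman divergence) to get $\sum_j(A^t_{u,j}-B^t_{u,j}) \geq w_u(\x^t_u-\x^{t-1}_u)$, and then invoke Lemma~\ref{lem:cs-useful} to replace $\sum_j B^t_{u,j}$ by $\sum_{(v,\ell)\in\chi_u} A^t_{v,\ell}$. The only difference is cosmetic ordering of the steps, and your aside on the root case matches the paper's remark that the root mass is unchanged.
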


\begin{proof}
  We focus on a node $u$ that is neither a leaf nor the root; the root case
  is very similar. Non-negativity of Bregman divergences (or
  equivalently,~(\ref{eq:pmp})) and $w_u \geq 0$ imply
  \[ w_u \left( \t\x^t_{u,j} \log \frac{\t\x^t_{u,j}}{\t\x^{t-1}_{u,j}} -
      \t\x^t_{u,j} + \t\x^{t-1}_{u,j} \right) \geq 0. \]
  Now applying~(\ref{eq:kkt-ks}), using definition~(\ref{eq:agg}) and
  cancelling the additive $\delta$ terms gives \[ (A^t_{u,j}
  - B^t_{u,j}) - w_u(\x^t_{u,j} - \x^{t-1}_{u,j}) \geq 0. \] Finally, summing up
  over all $j$, and using Lemma~\ref{lem:cs-useful}
  to replace $\sum_j B^t_{u,j}$ by $\sum_{(v,\ell) \in \chi_u}
  A^t_{v,\ell}$ completes the proof. The proofs for the root $\rootvtx$
  is similar,
  using~(\ref{eq:kktroot-ks}) instead. For
  the root $\rootvtx$, observe that the total mass does not change, so the
  term $w_\rootvtx(\x^t_\rootvtx - \x^{t-1}_\rootvtx) = 0$.
\end{proof}

Now we can multiply~(\ref{eq:charge-up}) by $(D-d)$ for vertices
$u \in V_d$ with $d = 0,1,2,\ldots,D-1$, sum these up, and add $\sum_j
A^t_{\rootvtx, j}$ to both sides to get
\begin{align*}
  \sum_{u,j} A^t_{u,j} \leq  (D+1) \sum_j A^t_{\rootvtx,j} - \sum_{d =
  1}^{D-1} \sum_{u \in V_d} (D-d) \cdot w_u(\x^t_{u} - \x^{t-1}_{u}).
\end{align*}
Summing up over all times $t$ and using~(\ref{eq:5}) gives us
\begin{align*}
  ALG &\leq (D+1) \sum_{t=1}^T\sum_{j} A^t_{\rootvtx,j} - \sum_{d =
  1}^{D-1} \sum_{u \in V_d} (D-d) \cdot w_u(\x^T_{u} - \x^0_{u}) \\  &\leq
  (D+1) \sum_{t=1}^T\sum_{j} A^t_{\rootvtx,j} +
  D (W(\x^0) - W(\x^T)).
\end{align*}
Combining with~(\ref{eq:ks-shadow}), this implies
\[ ALG  
  \leq 3(D+1) \sum_t \Big( \Pdiv{\y^t}{\x^{t-1}} - \Pdiv{\y^t}{\x^t} 
  \Big) + O(D)[W(\x^0) - W(\x^T)]. \]
Since $|\x^T_u - \x^0_u| \leq h$ for each node $u$, where $h$ is
the number of servers that the optimal algorithm has in the $(h,k)$
server problem, $W(\x^T) - W(\x^0) \leq h \sum_u w_u$. Now using Lemma~\ref{lem:opt-charge} to bound the change in potential due to OPT, we get
\[ ALG 
  \leq O(D \log (1+ \nicefrac1\delta)) \cdot OPT + 3(D+1)\Big(
  \Pdiv{\y^0}{\x^0} - \Pdiv{\y^T}{\x^T} \Big) + O(Dh) \sum_u w_u. \]
This proves Theorem~\ref{thm:main-kserver} with the additive term
$C' = O(Dh \sum_u w_u + D\cdot (\Pdiv{\y^0}{\x^0} - \Pdiv{\y^T}{\x^T})$.

\subsubsection{An $O(\log n \log k)$-competitive algorithm for HSTs}

Let us focus on the $k$-server problem; the extensions to $(h,k)$-server
are immediate. % Moreover, we assume that the HSTs are $\alpha$-HSTs with
% $\alpha = O(1)$; i.e., the edge lengths drop by a factor of at least
% $\alpha$ as we walk down from the root to the leaves.
For the $k$-server problem on HSTs,
Theorem~\ref{thm:main-kserver} implies an
$O(\log \Delta \log k)$-competitive algorithm, by setting
$\delta = \frac{1}{2k+1}$, and using the fact that the depth of any
HST is $O(\log \Delta)$. Here $\Delta$ is the \emph{aspect ratio} of the
tree, the ratio of the largest to smallest distance in the tree.

To get the improved result of $O(\log n \log k)$, we simply use the fact
that for any HST, there is another tree with depth $O(\log n)$ which
changes distances by at most a constant factor (see,
\cite[Theorem~5.1]{BBMN11} for a formal statement). The basic idea for
obtaining this tree is simple: for each vertex $u$, if it contains a
child $v$ such that $|L_v| \geq |L_u|/2$, i.e., the number of leaves
under $v$ is at least half the number under $u$, then we contract the
edge $(u,v)$, and make the new node have weight equal to the parent's
weight. This ensures that traversing each edge reduces the number
of leaves by a factor of $2$ and hence gives a tree with depth
$O(\log n)$; moreover it does not change distances by more than a constant
factor. This implies the following:

\begin{theorem}
  There is an $O(\log n \log k)$-competitive randomized algorithm for
  the $k$-server problem on HSTs.
\end{theorem}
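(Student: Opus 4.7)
My plan is to reduce the problem on a general HST to the shallow-tree setting of Theorem~\ref{thm:main-kserver} via a depth-reduction transformation, then combine with the HST rounding scheme to obtain an integral randomized algorithm.

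First, I would invoke the depth-reduction transformation sketched just before the theorem: given any HST $T$, produce a contracted tree $T'$ with the same leaf set such that (i) every internal node of $T'$ has the property that each child subtree contains at most half the leaves of its parent, and (ii) distances in $T'$ are within a constant factor of distances in $T$. Property (i) immediately gives $\mathrm{depth}(T') = O(\log n)$, since the leaf count drops geometrically along any root-to-leaf path. Property (ii) means that any $c$-competitive algorithm for $k$-server on $T'$ is $O(c)$-competitive on $T$, because both $\mathrm{ALG}$ and $\mathrm{OPT}$ costs change by at most constant factors under the distance distortion. The main content to verify here is the distance-preservation claim, which is the reason for the cited~\cite[Theorem~5.1]{BBMN11}; the weight-assignment rule (contracted node inherits the parent's weight) ensures no distance blows up, and the geometric decay of edge weights in an HST ensures distances are not contracted by more than a constant.

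Next, on the reduced tree $T'$, I would apply Theorem~\ref{thm:main-kserver} specialized to $k$-server (i.e., $h=k$) with the choice $\delta = \frac{1}{2k+1}$. Since this gives $1+1/\delta = 2k+2$ and $D = O(\log n)$, the theorem yields a deterministic fractional algorithm maintaining $\x^t \in P_\delta$ with competitive ratio
\[ O\bigl(D \log(1+\tfrac{1}{\delta})\bigr) = O(\log n \cdot \log k), \]
plus the additive constant $C'$ that depends only on $T',k,\delta$ and hence does not affect the asymptotic competitive ratio.

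Finally, I would convert the fractional solution into an integral randomized one. As noted in the translation between servers and anti-servers at the start of~\S\ref{sec:server}, one can define a fractional server vector $\z^t$ with $\|\z^t\|_1 = k + \tfrac12$ from $\x^t$, and then apply the HST-specific online rounding of \cite[Lemma~3.4]{BCLLM17} together with \cite[\S5.2]{BBMN11} to produce an integral server configuration $\hat{\z}^t$ with exactly $k$ servers satisfying $\E[d(\hat{\z}^t,\hat{\z}^{t-1})] = O(d(\z^t,\z^{t-1}))$ in expectation. Composing the three constant-factor losses (distance distortion from depth reduction, fractional-to-server translation, and online rounding) with the $O(\log n \log k)$ fractional guarantee gives the theorem. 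The only genuinely nontrivial step is the depth-reduction argument, which I would cite directly rather than reprove; the other two ingredients (Theorem~\ref{thm:main-kserver} and the online rounding) are already established.
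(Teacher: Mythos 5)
Your proposal is correct and follows essentially the same route as the paper: contract the HST to a depth-$O(\log n)$ tree with constant distance distortion (citing \cite[Theorem~5.1]{BBMN11}), apply Theorem~\ref{thm:main-kserver} with $h=k$ and $\delta = \frac{1}{2k+1}$ to get an $O(\log n \log k)$-competitive fractional solution, and round via \cite[Lemma~3.4]{BCLLM17} and \cite[\S5.2]{BBMN11} with only constant-factor losses. The paper's argument is exactly this composition, so there is nothing further to add.
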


In the next section, we improve this bound to get $O(\log^2 k)$-competitiveness.

%%% Local Variables:
%%% mode: latex
%%% TeX-master: "main"
%%% End:

\newcommand{\apot}{\Psi_2}

%\newpage
\section{An $O(\log^2 k)$ Bound for $k$-server}
\label{sec:logsq}

We now give the proof of Theorem~\ref{thm:logsq}. The proof here is
somewhat longer and more involved than in~\cite{BCLLM17}---while it can
conceivably be shortened, we currently believe that some of the
complexity is due to the algorithm being defined as a sequence of
discrete jumps, rather than via a continuous trajectory. That being
said, the high-level idea of the proof is simple and modular (and
parallels that in~\cite{BCLLM17}).

Recall that for a leaf $u$ we defined
$\z^t_u := \frac{1 - \x^t_{u,1}}{1-\delta}$. We extend this definition
for an internal node $u$ as $\z^t_u := \sum_{v \in leaves(T_u)}
\z^t_v$. Define $\|\mathbf{v}\|^+_w := \sum_{u} w_u
\mathbf{v}_u^+$. The main result of this section is the following:
\begin{theorem}
  \label{thm:logsq-tech}
  Let $T$ be a $\tau$-HST, where $0 < \tau \leq 1/10$. Then
  there exists a potential function $\Psi$ such that for each time $t$,
  \begin{gather*}
    \| \x^t - \x^{t-1} \|_w^+ \leq O(\log (k/\delta)) \cdot \sum_j
    A^t_{\rootvtx,j} + \Psi(\z^t) - \Psi(\z^{t-1}).
  \end{gather*}
  Moreover, $\Psi(\z^T) - \Psi(\z^0) = O(\sum_u \frac{w_u}{1-\delta} (k +
  \log\frac{k}{\delta}))$.
\end{theorem}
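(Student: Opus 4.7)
The task is to replace the depth $D$ in the charging argument of \S\ref{sec:server} by $O(\log(k/\delta))$, under the extra assumption that the tree is a $\tau$-HST with $\tau \leq 1/10$. Intuitively (and following \cite{BCLLM17}), a configuration of $k$ (fractional) servers effectively lives inside a subtree of ``active depth'' $O(\log k)$: at any moment only $O(\log k)$ levels of the tree carry a substantial amount of flowing server mass. The role of $\Psi(\z)$ will be to track this mass distribution and to pay, in amortized fashion, for any work that occurs outside this active corridor.

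Concretely, I would start from Lemma~\ref{lem:aux-claim}, which gives, for every non-leaf $u$,
\[
\sum_{(v,\ell) \in \chi_u} A^t_{v,\ell} \;\leq\; \sum_j A^t_{u,j} \;-\; w_u\,(\x^t_u - \x^{t-1}_u).
\]
In \S\ref{sec:server} this was multiplied by $(D-d)$ at depth $d$ and summed, producing the factor $(D+1)$ at the root. Here I would instead multiply by a weight $\beta^t_u$ that \emph{depends on the current server profile} $\z^t$: roughly, $\beta^t_u$ should be of order $\log(k/\delta)$ only along the active corridor and should decay rapidly elsewhere. The natural candidate is $\beta^t_u = \Gamma'(\z^t_u)$ for a suitable non-negative ``profile function'' $\Gamma : [0, k/(1-\delta)] \to [0,\infty)$ with $\|\Gamma\|_\infty = O(k + \log(k/\delta))$, and then to define
\[
\Psi(\z) \;:=\; \sum_u \frac{w_u}{1-\delta}\,\Gamma(\z_u).
\]
Summing the reweighted inequality over all non-leaves, the child terms $\beta^t_u \sum_{(v,\ell)\in\chi_u}A^t_{v,\ell}$ telescope against the parent terms $\beta^t_{p(v)}\sum_\ell A^t_{v,\ell}$ all the way up to the root (where we incur $\beta^t_\rootvtx \sum_j A^t_{\rootvtx,j} = O(\log(k/\delta))\sum_j A^t_{\rootvtx,j}$). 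The residual $\sum_u \beta^t_u w_u (\x^t_u - \x^{t-1}_u)$ term is exactly what the change $\Psi(\z^t) - \Psi(\z^{t-1})$ is designed to absorb, via a first-order Taylor expansion of $\Gamma$ combined with the identity $\z^t_u - \z^{t-1}_u = -(\x^t_u - \x^{t-1}_u)/(1-\delta)$ (the sign flip because $\x$ is the anti-server variable).

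The hard part will be choosing $\Gamma$ so that two competing demands are met simultaneously. First, $\beta^t_u = \Gamma'(\z^t_u)$ must decay sufficiently rapidly with $\z^t_u$ that the up-the-tree telescoping yields $O(\log(k/\delta))$ at the root regardless of $D$, rather than $D \cdot \log(k/\delta)$. Second, $\Gamma$ must be smooth enough that $\sum_u \beta^t_u w_u (\x^t_u - \x^{t-1}_u)$ is dominated by $\Psi(\z^t) - \Psi(\z^{t-1})$ plus manageable second-order slack: essentially a global first-order Taylor inequality must hold for $\Gamma$. These two requirements pull in opposite directions, and this is precisely where the HST hypothesis $\tau \leq 1/10$ should enter essentially: because $w_{p(u)} \geq w_u/\tau$ with small $\tau$, losses at one level are dominated by a small constant times the weight at the level immediately above, providing the geometric room needed to absorb the Taylor slack. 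Once $\Gamma$ is fixed, the bound $\Psi(\z^T) - \Psi(\z^0) = O\!\left(\sum_u \tfrac{w_u}{1-\delta}(k+\log(k/\delta))\right)$ is immediate from $\|\Gamma\|_\infty = O(k + \log(k/\delta))$ together with the two-sided estimate $|\Gamma(\z^T_u) - \Gamma(\z^0_u)| \leq 2\|\Gamma\|_\infty$.
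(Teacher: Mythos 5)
There is a genuine gap, and it is precisely the part of the proof that carries most of the weight in the paper. Your reweighting of Lemma~\ref{lem:aux-claim} by $\beta^t_u=\Gamma'(\z^t_u)$ with $\Gamma'(z)\approx\ln(1+z/\delta)$, together with $\Psi(\z)=\sum_u\frac{w_u}{1-\delta}\Gamma(\z_u)$, is essentially the paper's choice of $\alpha^t_u$ and $\Psi_1$ (the paper takes the integral average of $\ln(1+z/\delta)$ over $[\z^{t-1}_u,\z^t_u]$ exactly so that the potential step is an identity and no Taylor slack appears at all). But look at what the telescoped inequality actually gives after invoking Lemma~\ref{lem:bound1}: the coefficient of the movement at a node $u$ that survives on the left-hand side is $w_u\bigl(\beta^t_{p(u)}-\beta^t_u\bigr)$, i.e.\ the parent--child \emph{difference} of the $\beta$'s --- the quantity $q^t_u$ of Lemma~\ref{lem:logsq-part1} --- not $w_u$ itself. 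These differences can be arbitrarily small, indeed zero, whenever a child carries (almost) all of its parent's server mass, $\z^t_u\approx\z^t_{p(u)}$, and this happens for \emph{any} profile function $\Gamma$, since $\Gamma'(\z^t_{p(u)})-\Gamma'(\z^t_u)$ vanishes as the two arguments coincide. So no choice of $\Gamma$ can make your left-hand side dominate $\|\x^t-\x^{t-1}\|^+_w$; what your plan proves is a bound on $\|\x^t-\x^{t-1}\|^+_{q^t}$, which is Lemma~\ref{lem:logsq-part1}, not the theorem. The obstruction is a first-order deficiency in the coefficient, not a second-order ``Taylor slack'', so the role you assign to the HST hypothesis does not address it.

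What is missing is the paper's Lemma~\ref{lem:logsq-part2} (with Lemmas~\ref{lemma:first}, \ref{lem:main-tech}, \ref{lem:onetop}, \ref{lem:onestep}): a comparison of $\|\x^t-\x^{t-1}\|^+_w$ with $\|\x^t-\x^{t-1}\|^+_{q^t}$. This requires a \emph{second} potential, $\Psi_2(\z)=\sum_u w_u\bigl(\z_u-(2/3)\z_{p(u)}\bigr)^+$, plus a flow decomposition: one first passes to an intermediate solution $\z'$ that discharges the flow inside the light subtree $T_a$ below the highest backbone node $a$ with $\z^{t-1}(T_a)\le 1/10$, then splits the remaining change $\z'\to\z^t$ into single leaf-to-$r_t$ flow paths ordered by their topmost backbone vertex. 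For each path there is a dichotomy: if the top off-backbone child has $\z$-value at least $2/3$ of its backbone parent's, that edge is ``heavy'' and the drop in $\Psi_2$ pays for the $w$-weighted movement, using the geometric decay $\tau\le 1/10$; otherwise the backbone parent's value being at least $1/10$ forces the relevant $q$-value to be $\Omega(1)$, so the $q^t$-weighted norm pays. This is where the HST assumption is genuinely used, and there is additional bookkeeping because $q$ itself changes as flow is discharged. Your proposal contains no mechanism of this kind, so as written it cannot yield the stated bound on $\|\x^t-\x^{t-1}\|^+_w$ for deep trees.
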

The above left hand side is the algorithm's (positive)
movement. Combining with~(\ref{eq:ks-shadow}) and
Lemma~\ref{lem:opt-charge}, the same arguments give
\[ ALG \leq O(\log \nicefrac{k}{\delta} \log \nicefrac1\delta) \cdot OPT
  + \Psi(\z^T) - \Psi(\z^0) + O(\log
  \nicefrac{k}{\delta})(\Pdiv{\y^0}{\x^0} - \Pdiv{\y^T}{\x^T} + O(h \sum_u
  w_u). \]
Setting $\delta = 1/k$ gives the $O(\log^2 k)$-competitiveness, and
hence the  proof of Theorem~\ref{thm:logsq}.

\subsection{Proof of Theorem~\ref{thm:logsq-tech}}

The proof contains many ingredients in common with that
of~\cite{BCLLM17}, but the projection-based approach means we need some
further ideas (such as the potential $\apot$ below). As in their work, we show the result in two steps. We
first define values $\alpha^t_u \geq 0$ for each vertex. Let $q^t_u :=
\alpha^t_{p(u)} - \alpha^t_u \geq 0$ for each non-root vertex, and
$q^t_\rootvtx := 0$. For a given vector $q$, define $\|\mathbf{v}\|^+_q
:= \sum_{u} q_u w_u \mathbf{v}_u^+$.
\begin{lemma}
  \label{lem:logsq-part1}
  There exists choices of $\alpha_u^t$ and a potential function $\Psi_1$ such that for each time $t$,
  \begin{gather*}
    \| \x^t - \x^{t-1} \|_{\alert{q^t}}^+ \leq O(\log (k/\delta)) \cdot \sum_j
    A^t_{\rootvtx,j} + \Psi_1(\z^t) - \Psi_1(\z^{t-1}).
  \end{gather*}
\end{lemma}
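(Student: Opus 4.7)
The plan is to design the $\alpha^t_u$ as a gradient-type function of the server occupancy, so that the $q^t$-weighted dual mass telescopes cleanly from leaves to root via Lemma~\ref{lem:aux-claim}. Specifically, I would take $\alpha^t_u := \psi'(\z^{t-1}_u)$ for a convex function $\psi : [0, k+1] \to \R$ with $\psi'(k+\nicefrac12) = O(\log(k/\delta))$; a natural concrete choice is $\psi(z) := (z+\delta)\log((z+\delta)/\delta) - z$, giving $\psi'(z) = \log((z+\delta)/\delta)$. Since $\psi'$ is increasing and the parent occupancy dominates the child ($\z^{t-1}_{p(u)} \geq \z^{t-1}_u$), the resulting $q^t_u = \alpha^t_{p(u)} - \alpha^t_u$ are automatically non-negative, and $\alpha^t_\rootvtx = \psi'(k+\nicefrac12) = O(\log(k/\delta))$ as needed.

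For the main chain of inequalities, I would first apply Lemma~\ref{lem:bound1} atom-by-atom (using the KKT relations~(\ref{eq:kkt-ks})--(\ref{eq:kktprime-ks})) and the superadditivity of positive parts to get $w_u (\x^t_u - \x^{t-1}_u)^+ \leq \sum_j A^t_{u,j}$; multiplying by $q^t_u$ and summing yields $\|\x^t-\x^{t-1}\|^+_{q^t} \leq \sum_{u \neq \rootvtx} q^t_u \sum_j A^t_{u,j}$. Next, I expand $q^t_u = \alpha^t_{p(u)} - \alpha^t_u$ and re-index the $\alpha^t_{p(u)}$ portion as $\sum_{u \text{ non-leaf}} \alpha^t_u \sum_{(v,\ell)\in\chi_u} A^t_{v,\ell}$. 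Then Lemma~\ref{lem:aux-claim} pushes each child sum one level up at the cost of a $-w_u(\x^t_u - \x^{t-1}_u)$ residual. After the telescoping cancellation (and dropping the non-positive leaf correction), this yields
\[
  \|\x^t-\x^{t-1}\|^+_{q^t} \;\leq\; \alpha^t_\rootvtx \sum_j A^t_{\rootvtx,j} \;-\; \sum_{u\text{ non-leaf}} \alpha^t_u\,w_u\,(\x^t_u - \x^{t-1}_u).
\]

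The final step converts the residual into a telescoping potential in $\z$. Since $\x^t_u - \x^{t-1}_u = -(1-\delta)(\z^t_u - \z^{t-1}_u)$, the residual equals $(1-\delta)\sum_u w_u\, \psi'(\z^{t-1}_u)(\z^t_u - \z^{t-1}_u)$; convexity of $\psi$ (subgradient inequality evaluated at $\z^{t-1}_u$) upper-bounds each term by $\psi(\z^t_u) - \psi(\z^{t-1}_u)$, so the whole residual is $\leq \Psi_1(\z^t) - \Psi_1(\z^{t-1})$ with $\Psi_1(\z) := (1-\delta)\sum_u w_u \psi(\z_u)$. Plugging $\alpha^t_\rootvtx = O(\log(k/\delta))$ gives the claimed bound.

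I expect the main obstacle to be the \emph{direction} of the subgradient inequality: the telescoping requires $\psi'$ to be evaluated at $\z^{t-1}_u$ (not $\z^t_u$), forcing $\alpha^t_u$ to depend on the \emph{previous} step's occupancy even though it controls the \emph{current} movement. A companion subtlety is the joint calibration of $\psi$ so that simultaneously (i) $\psi'(k+\nicefrac12) = O(\log(k/\delta))$, (ii) $\psi$ is monotone/convex for the $q \geq 0$ property, and (iii) $\Psi_1$ over the whole run is $O(\sum_u w_u (k + \log(k/\delta))/(1-\delta))$ as demanded by Theorem~\ref{thm:logsq-tech}; the entropy-like $\psi$ above threads all three conditions, but any deviation either destroys monotonicity of $\alpha$ or inflates the additive potential budget.
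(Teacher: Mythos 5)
Your main chain is the same as the paper's: bound $\|\x^t-\x^{t-1}\|^+_{q^t}$ by $\sum_u q^t_u\sum_j A^t_{u,j}$ via Lemma~\ref{lem:bound1} and the KKT relations, then telescope up the tree using Lemma~\ref{lem:aux-claim} (equivalently, per-node Bregman non-negativity plus Lemma~\ref{lem:cs-useful}), leaving $\alpha^t_\rootvtx\sum_j A^t_{\rootvtx,j}$ plus a residual that is absorbed into a potential. Where you genuinely differ is the choice of $\alpha^t_u$. The paper takes the \emph{interval average} $\alpha^t_u=\frac{1}{\z^t_u-\z^{t-1}_u}\int_{\z^{t-1}_u}^{\z^t_u}\ln(1+z/\delta)\,dz$ (equation~(\ref{eq:alpha})) and the potential $\Psi_1(\z)=\sum_u\frac{w_u}{1-\delta}\int_{\z^0_u}^{\z_u}\ln(1+z/\delta)\,dz$, so the residual telescopes \emph{exactly}, with no convexity step; the ``direction of the subgradient'' obstacle you worry about simply does not arise there. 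Your point evaluation $\alpha^t_u=\psi'(\z^{t-1}_u)=\ln(1+\z^{t-1}_u/\delta)$ plus the convexity inequality does prove the displayed inequality as literally stated (it is existential), up to one bookkeeping slip: your telescoped residual runs only over internal nodes, while your $\Psi_1$ sums over all nodes, and for non-requested leaves $\psi(\z^t_u)-\psi(\z^{t-1}_u)\le 0$; either restrict $\Psi_1$ to internal nodes or keep the leaf residual terms (they satisfy the same convexity bound), an easy fix.

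The more serious issue is downstream: Lemma~\ref{lem:logsq-part2} is stated ``for the choice of $\alpha^t_u$ from Lemma~\ref{lem:logsq-part1},'' and its proof really uses the averaged form. In Case~2 of Lemma~\ref{lem:onestep} one must show $q(\z^{t-1},\z^t)_v=\Omega(1)$, and the argument bounds $\alpha(\z^{t-1}_v,\z^t_v)\le\alpha(\tfrac23\bar\z_\ell,\bar\z_\ell)$ precisely because $\alpha$ averages over the whole interval down to the \emph{post-move} value $\z^t_v$. With your $\alpha^t_v=\ln(1+\z^{t-1}_v/\delta)$ one can have $q^t_v=0$ --- e.g.\ when essentially all the mass under $p(v)$ initially sits in $T_v$, so $\z^{t-1}_v\approx\z^{t-1}_{p(v)}$, and a large amount then flows out of $T_v$ during the step; once the edge $(p(v),v)$ stops being heavy, the resulting ``Case 2'' movement at $v$ can be charged neither to $c\,w_v q^t_v\e$ (which is zero) nor to $\apot$. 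So your choice proves this lemma in isolation but cannot be plugged into the proof of Theorem~\ref{thm:logsq-tech} without reworking part~2; this is exactly why the paper picks the two-endpoint average rather than a gradient evaluated at $\z^{t-1}$.
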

Observe that this lemma bounds $\|\cdot\|_{q^t}$ rather than $\|\cdot\|_w$,
so we relate these two norms next:
\begin{lemma}
  \label{lem:logsq-part2}
  For the choice of $\alpha_u^t$ from Lemma~\ref{lem:logsq-part1}, there
  exist universal constants $c, c'$ and another potential function $\apot$
  such that for each time $t$,
  \begin{gather*}
    \| \x^t - \x^{t-1} \|_{\alert{w}}^+ \leq c\cdot \left( c'\cdot \| \x^t - \x^{t-1}
    \|_{\alert{q^t}}^+ - 2[\apot(\z^t) - \apot(\z^{t-1})] \right).
  \end{gather*}
\end{lemma}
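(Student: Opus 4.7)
\textbf{Proof plan for Lemma~\ref{lem:logsq-part2}.} The approach is to classify each node $u$ at time $t$ as \emph{heavy} or \emph{light} depending on whether the multiplier $q^t_u = \alpha^t_{p(u)} - \alpha^t_u$ exceeds a small constant threshold $\beta = \Theta(1)$, and handle the two cases by very different arguments. Heavy nodes are absorbed directly into the $\|\cdot\|_{q^t}^+$ term; light nodes are paid for by the potential $\apot$. The HST separation $\tau \leq 1/10$ is used essentially at one place: to make the geometric decay of the weights $w_u$ down the tree strong enough that local movements charge cleanly to their ancestors.

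\emph{Heavy nodes.} For every node $u$ with $q^t_u \geq \beta$, the pointwise inequality
\[
w_u (\x^t_u - \x^{t-1}_u)^+ \;\leq\; \tfrac{1}{\beta}\, q^t_u\, w_u (\x^t_u - \x^{t-1}_u)^+
\]
holds trivially. Summing over heavy nodes bounds their total contribution to $\|\x^t - \x^{t-1}\|_w^+$ by $\tfrac{1}{\beta}\,\|\x^t - \x^{t-1}\|_{q^t}^+$, so one takes $c' = 1/\beta$.

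\emph{Light nodes.} If $q^t_u < \beta$ then $\alpha^t_{p(u)} \approx \alpha^t_u$, which, for the choice of $\alpha$ made in Lemma~\ref{lem:logsq-part1} (a monotone function of the fractional server count in $T_u$), means that essentially all servers under $p(u)$ already lie inside $T_u$. I would design $\apot$ to penalize exactly this ``concentration'' configuration, as a sum of the form $\apot(\z) = \sum_u w_u\, g(\z_{p(u)}, \z_u)$, where $g$ is a convex function chosen so that $\partial g/\partial \z_u$, evaluated at a light vertex $u$, is at least a universal constant. Then for each light vertex $u$, redistributing $(\x^t_u - \x^{t-1}_u)^+$ units of anti-server mass forces $\z^t_u$ to decrease (by Lemma~\ref{lem:flow}) and produces a drop in $\apot$ of size at least a constant times $w_u(\x^t_u - \x^{t-1}_u)^+$. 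Summing, the contribution of light nodes is bounded by a constant multiple of $\apot(\z^{t-1}) - \apot(\z^t) = -[\apot(\z^t) - \apot(\z^{t-1})]$.

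\emph{The role of $\tau \leq 1/10$.} In the HST, $w_u \leq \tau\, w_{p(u)} \leq w_{p(u)}/10$, so weights decay geometrically. This is needed to keep the charging self-consistent: when several adjacent light descendants simultaneously redistribute mass, the geometric decay ensures that the cumulative $w_u (\x^t_u - \x^{t-1}_u)^+$ terms along any light chain are dominated (up to a universal constant) by the single potential drop obtained at the top of the chain, preventing any double counting. Combining the heavy and light contributions, and choosing $c$ large enough to absorb the leading constants, gives the claimed inequality.

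\emph{Main obstacle.} The delicate step is pinning down $\apot$ so that three things hold simultaneously: (i) the local inequality holds at every time $t$, including times when a request triggers a large redistribution of mass (unlike in~\cite{BCLLM17}, where the evolution is continuous and one can reason infinitesimally); (ii) the convex function $g$ interacts correctly with the specific $\alpha$ chosen in Lemma~\ref{lem:logsq-part1}, so that the derivative bound needed at light vertices really follows from $q^t_u < \beta$; and (iii) the aggregate $\apot(\z^T) - \apot(\z^0)$ is controlled as stated in Theorem~\ref{thm:logsq-tech}, which forces $g$ to be uniformly bounded over the feasible range of $(\z_{p(u)},\z_u)$. The discrete-jump nature of the algorithm is what makes (i) and (iii) interact nontrivially, and handling this is the main technical hurdle.
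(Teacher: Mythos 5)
Your high-level plan (absorb nodes with large $q^t_u$ into the $\|\cdot\|_{q^t}^+$ term, pay for the rest with a potential of the form $\sum_u w_u\, g(\z_{p(u)},\z_u)$) is in the right family---the paper's potential $\apot(\z)=\sum_u w_u(\z_u-\tfrac23 \z_{p(u)})^+$ is an instance of your $g$---but the per-node heavy/light dichotomy based on $q^t_u$ alone breaks exactly where the real difficulty lies. The implication ``$q^t_u<\beta$ $\Rightarrow$ most of the mass under $p(u)$ sits in $T_u$'' is only valid when $\z_{p(u)}$ is bounded away from $0$ relative to $\delta$: since $\alpha$ is an average of $\ln(1+z/\delta)$, one can have $\z_{p(u)}=O(\delta)$ and $\z_u\approx 0$ while $q^t_u$ is tiny, so the node is ``light'' in your sense yet there is no concentration to penalize and no potential drop available to pay for its movement. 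This low-mass regime is unavoidable (it occurs in the subtrees near the requested leaf, which is precisely where anti-server mass moves), and the paper devotes a separate mechanism to it: the highest backbone node $a$ with $\z^{t-1}(T_a)\le \nicefrac1{10}$, the intermediate solution $\z'$, and Lemma~\ref{lemma:first}, which charge all movement inside $T_a$ to the large flow ($\ge \nicefrac9{10}$ units) that must cross $w_a$. Your proposal has no analogue of this step, and without it the light-node charging cannot be made to work for any choice of $g$ with the boundedness you need for Theorem~\ref{thm:logsq-tech}.

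A second gap is in the light-node accounting itself. Decreasing $\z_u$ at a light vertex does not only decrease terms of $\apot$: the same flow increases $\z$ along the backbone and increases the terms $w_v(\z_v-\tfrac23\z_u)^+$ at children $v$ of every decreasing vertex, so the potential change at time $t$ is a sum of competing contributions that must be netted out. The paper does this by decomposing $\z'\to\z^t$ into individual leaf-to-$r_t$ flow paths ordered by their topmost backbone vertex (Lemmas~\ref{lem:onestep}--\ref{lem:onetop} and~\ref{lem:main-tech}), charging each path's entire weighted movement to $\frac{w_v\eps}{1-\tau}$ at its top off-backbone child $v$, and then doing a two-case analysis at that single edge: if $(\ell,v)$ is heavy for $\apot$ the net potential drop pays (here $\tau\le 1/10$ bounds the gains at deeper and backbone vertices by a geometric series), and otherwise one proves $q(\bar\z,\bar\z''')_v=\Omega(1)$ using $\bar\z_\ell\ge\nicefrac1{10}$---which is guaranteed only because of the preprocessing step above. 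Your use of $\tau\le 1/10$ (``preventing double counting along light chains'') addresses neither the backbone-side gains nor why the aggregate $q^t$ (an average over the whole discrete jump, possibly crossing the heavy/light boundary) lower-bounds the per-path quantities; the paper needs the monotonicity $\z^t_u\le\z'_u\le\z^{t-1}_u$ and explicit comparisons like $q(\z_i,\z_{i+1})_u\le q(\z_0,\z_m)_u$ to handle this. So while the skeleton is reasonable, the proposal as written is missing the two ideas that carry the proof: the intermediate solution for the low-mass subtree, and the path-by-path case analysis that couples the potential with a constant lower bound on $q$ at the top of each path.
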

Combining these two results and setting $\Psi(\z) = c \cdot c' \cdot \Psi_1(\z) -
2c \cdot \apot(\z)$  immediately gives
Theorem~\ref{thm:logsq-tech}. The proofs of these two lemmas appear
in the following sections.
% \agnote{Changed sign of potential above,
%   pls check and  remove this comment}

\subsubsection{Proof of Lemma~\ref{lem:logsq-part1}}

Recall $\x^t_u := \sum_j \x^t_{u,j}$. Following the proof of \eqref{eq:5}, using~(\ref{eq:kkt-ks}-\ref{eq:kktprime-ks}) in conjunction with
Lemma~\ref{lem:bound1}, % from \S\ref{sec:when-alg-moves},
\begin{gather}
  \sum_u w_u q^t_u (\x^t_u - \x^{t-1}_u)^+ \leq \sum_u w_u q^t_u \sum_j
  (\x^t_{u,j} - \x^{t-1}_{u,j})^+
  \le \sum_u q^t_u \sum_j
  A^t_{u,j}. \label{eq:3}
\end{gather}
The non-negativity of Bregman divergences and the fact that $\alpha^t_u
\geq 0$ implies
\begin{align}
  0 &\leq  w_u \alpha_u^t \sum_{j} \left( \t\x_{u,j}^t
    \log \frac{\t\x_{u,j}^{t}}{\t\x_{u,j}^{t-1}} - \t\x_{u,j}^t + \t\x_{u,j}^{t-1}
           \right)  \notag \\
  &\leq \alpha_u^t \sum_{j} \left(
    A^t_{u,j} - B^t_{u,j} \right) +  \alpha^t_u w_u \sum_j (\x_{u,j}^{t-1} -
    \x_{u,j}^{t})  \tag{by~(\ref{eq:kkt-ks})-(\ref{eq:kktprime-ks})}
    \notag \\
  &= \alpha_u^t \sum_{j}
    A^t_{u,j} - \alpha_u^t \sum_{(v,\ell) \in \chi_u}
    A^t_{v,\ell} +  \alpha^t_u w_u (\x_{u}^{t-1} -
    \x_{u}^{t}). \tag{by Lemma~\ref{lem:cs-useful}} \notag \\
  \intertext{Summing over all
  $u$, using $\x^t_u - \x^{t-1}_u = \frac{\z^{t-1}_u - \z^t_u}{1-\delta}$, and rearranging,}
  &
     \sum_{u \neq \rootvtx} \sum_{j} A^t_{u,j} (\alpha_{p(u)}^t -
    \alpha_u^t) \leq  \sum_{j} \alpha_\rootvtx^t A^t_{\rootvtx,j}  +
    \frac{1}{1-\delta} \sum_{u} \alpha^t_u w_u (\z_{u}^{t} -
    \z_{u}^{t-1}). \label{eq:4}
  % &\implies
  %    \sum_{u \neq \rootvtx} \sum_{j} x_{u,j}^t A^t_{u,j} (\alpha_{p(u)}^t -
  %   \alpha_u^t) \leq  \sum_{j} \alpha_\rootvtx^t x_{\rootvtx,j}^t A^t_{\rootvtx,j}  + \sum_{u} \alpha^t_u w_u (\alert{z_{u}^{t} -
  %   z_{u}^{t-1}})
\end{align}
Since $q_u^t = \alpha_{p(u)}^t - \alpha_u^t$, the left hand side of~(\ref{eq:4})
equals the right hand side of~(\ref{eq:3}). Therefore, to prove
Lemma~\ref{lem:logsq-part1} we need to:
\begin{OneLiners}
\item[(i)] choose $\alpha_u^t$ so that $\alpha_\rootvtx^t \leq O(\log
  k/\delta)$ and $q^t_u \geq 0$; and
\item[(ii)] choose a
  potential $\Psi_1(\cdot)$ such that
  $\frac{1}{1-\delta} \sum_{u}
  \alpha^t_u w_u (\z_{u}^{t} - \z_{u}^{t-1}) \leq \Psi_1(\z^t) - \Psi_1(\z^{t-1})$.
\end{OneLiners}
(Of course, we want this choice of $\alpha_u^t$ to
allow us to prove Lemma~\ref{lem:logsq-part2} as well.)

To this end, we define
\begin{align}
  \alpha_u^t & := \frac{1}{\z_u^t-\z_u^{t-1}}\int_{\z_{u}^{t-1}}^{\z_u^t}\ln\big(1+\frac{z}{\delta}\big)\,
               dz. \label{eq:alpha}
  % \\ &=
  %              \left.\frac{1}{\z_u^t-\z_u^{t-1}}\left((x+\delta)\ln(1+\frac{x}{\delta})-x\right)\right|_{\z_{u}^{t-1}}^{\z_u^t} \notag
  % \\
  %            & = \frac{(\z_u^t+\delta)\ln(1+\frac{\z_u^t}{\delta}) -
  %              (\z_u^{t-1}+\delta)\ln(1+\frac{\z_u^{t-1}}{\delta})}{\z_u^t-\z_u^{t-1}}-1 \notag
\end{align}
Since $0\leq \z_u^t \leq k$, the value
$\alpha_u^t \in [0, \ln(1+\frac{k}{\delta}) ]$. Moreover,
$\alpha_{p(u)}^t \geq \alpha_u^t$, since $\alpha_{(.)}^t$ is the average
value of an increasing function between two endpoints, and the
corresponding endpoints in $\alpha_{p(u)}^t$ are at least those in $\alpha_{u}^t$ (since $\z^{t-1}_{p(u)} \ge \z^{t-1}_u$ and $\z^t_{p(u)} \ge \z^t_u$).
% since the parent takes the average of
% the monotone function $\ln(1 + x/\delta)$ over the interval
% $[\z_{p(u)}^{t-1}, \z_{p(u)}^{t}]$ where $\z_{p(u)}^{t-1} \geq
% \z_{u}^{t-1}$ and $\z_{p(u)}^{t-1} \geq \z_{u}^{t-1}$.
Thus, the first condition above is satisfied.

Satisfying the potential condition is easy. Define
$$ \Psi_1(\z) = \sum_u \frac{w_u}{1-\delta}
\int_{\z_{u}^{0}}^{\z_u}\ln(1+\frac{z}{\delta})\, dz.$$ Since
$$\alpha^t_u (\z_{u}^{t} - \z_{u}^{t-1}) =
\int_{\z_{u}^{t-1}}^{\z_u^t}\ln(1+\frac{z}{\delta})\, dz$$ we
immediately get
$$\Psi_1(\z^{t-1}) + \sum_u \frac{w_u}{1-\delta} \alpha^t_u (\z_{u}^{t} -
\z_{u}^{t-1}) = \Psi_1(\z^{t}),$$ as we want. Putting these together
proves Lemma~\ref{lem:logsq-part1}.

\subsubsection{Proof of Lemma~\ref{lem:logsq-part2}}

\renewcommand{\top}{{\small\mathsf{top}}}

We now turn our attention to Lemma~\ref{lem:logsq-part2} and show the
choice of $\alpha^t$ from~(\ref{eq:alpha}) suffices. We consider
  a $\tau$-HST, for $\tau \leq 1/10$. We consider the potential
function $\apot(\z) := \sum_u w_u (\z_u - (2/3) \z_{p(u)})^+$, and want
to show
\begin{align}
	\frac{1}{c}\,\|\z^{t-1} - \z^t\|_w^+ \le c' \cdot \|\z^{t-1} -
\z^t\|_{q^t}^+ - 2 \left[\apot(\z^t) - \apot(\z^{t-1}) \right]. \label{eq:needPart2}
\end{align}

\begin{observation} \label{obs:subadd}
  For any $x,y$, $(x + y)^+ \le x^+ + y^+$.
\end{observation}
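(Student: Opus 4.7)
The statement is the elementary subadditivity of the positive-part function $z^+ := \max(z,0)$, so the plan is just a short case analysis (or equivalently, an appeal to the fact that $z \mapsto z^+$ is convex with $0^+ = 0$, hence subadditive). I would first record the two defining inequalities $x^+ \ge x$ and $x^+ \ge 0$, which hold termwise for both $x$ and $y$, and then split on the sign of $x+y$.

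In the case $x + y \le 0$, the left-hand side is $(x+y)^+ = 0$, which is trivially bounded above by $x^+ + y^+ \ge 0$. In the complementary case $x + y > 0$, we have $(x+y)^+ = x + y$, and then the bounds $x \le x^+$ and $y \le y^+$ give $(x+y)^+ = x + y \le x^+ + y^+$, as required. Combining the two cases finishes the proof.

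There is essentially no obstacle here; the only thing to be careful about is not over-engineering the argument. I would write it as a single displayed line of inequalities after the case split, or even inline, since it is used downstream only as a black-box tool for manipulating expressions of the form $(\mathbf{v}_u + \mathbf{w}_u)^+$ that arise in the proof of \eqref{eq:needPart2}.
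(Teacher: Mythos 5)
Your case analysis is correct and is exactly the standard argument: the paper states Observation~\ref{obs:subadd} without proof, treating it as elementary, and your split on the sign of $x+y$ together with $x \le x^+$, $y \le y^+$ is precisely the justification one would supply. Nothing further is needed.
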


The \emph{backbone} (at time $t$) is the path from the root $\rootvtx$
to the requested node $r_t$. By the definition of $\z^t_u$ for internal
nodes $u$, we can visualize the difference between $\z^{t-1}$ and $\z^t$
as a \emph{flow} over $T$, where a total $1 - \z^{t-1}_{r_t}$ amount of
flow is sent from the non-$r_t$ leaves to $r_t$. For each leaf
$u \neq r_t$, there is a flow path $P_u$, and $\z^{t-1}_u - \z^{t}_u$
flow is sent along this path, reducing the $\z$-value at $u$ and
increasing it at $r_t$; the $\z$ values at internal nodes are obtained
by summing over all leaves in their subtree.
To track the difference between solutions $\z^{t-1}$ and $\z^t$, we
introduce an intermediary solution $\z'$ obtained by sending just the
flows from leaves lying within a ``$\z^{t-1}$-light'' subtree in the
backbone. 

More precisely, let $a$ be the highest node on the
backbone, where 
$$\z^{t-1}(T_{a}) \le \frac{1}{10}.$$ 
If $a = r_t$, or no
such node exists, simply define $\z' = \z^{t-1}$. Else, consider the flow
defined above, and let $\z'$ be obtained by applying to $\z^{t-1}$ all
the flows on paths contained within $T_a$. Since all these flows stay
within the subtree $T_a$, the value of its root $a$ remains
unchanged---i.e., $\z'_a = \z^{t-1}_a$.

It is technically simpler to track the change from the intermediate
solution $\z'$ to $\z^t$, instead of tracking it from $\z^{t-1}$ to $\z^t$. By the
next lemma, the $\|.\|^+_w$ movement from $\z^{t-1}$ to $\z^t$ can be
bounded in terms of the movement starting from the intermediate solution
$\z'$.

\begin{lemma} \label{lemma:first} There is a constant $c$ such that
  \begin{gather}
    \|\z^{t-1} - \z^t\|^+_w \le c \cdot \left(\|\z' - \z^t\|^+_w - 2
      \left[\apot(\z') - \apot(\z^{t-1})\right]\right). \label{eq:6}
  \end{gather}
\end{lemma}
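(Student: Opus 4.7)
\textbf{Proof plan for Lemma~\ref{lemma:first}.} The plan is to split the movement via a triangle inequality for $\|\cdot\|_w^+$, and then bound each piece using the $\tau$-HST geometric decay ($\tau \le 1/10$) together with the defining lightness property $\z^{t-1}(T_a) \le 1/10$ and the fact that $r_t \in T_a$.

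First, if $a = r_t$ or no such $a$ exists on the backbone, then by definition $\z' = \z^{t-1}$ and $\apot(\z') = \apot(\z^{t-1})$, so \eqref{eq:6} is trivial with any $c \ge 1$. Henceforth assume $\z^{t-1}(T_a) \le 1/10$ and $\z^{t-1}(T_{p(a)}) > 1/10$. Applying Observation~\ref{obs:subadd} coordinate-wise with weights $w_u$ yields
\[
 \|\z^{t-1} - \z^t\|_w^+ \le \|\z^{t-1} - \z'\|_w^+ + \|\z' - \z^t\|_w^+,
\]
so it suffices to show $\|\z^{t-1} - \z'\|_w^+ + 2\bigl[\apot(\z') - \apot(\z^{t-1})\bigr] \le (c-1) \cdot \|\z' - \z^t\|_w^+$ for an absolute constant $c$.

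Next, I would upper bound $\|\z^{t-1} - \z'\|_w^+$ and $|\apot(\z') - \apot(\z^{t-1})|$, and lower bound $\|\z' - \z^t\|_w^+$, by quantities proportional to $w_a$. For the first: the flows defining $\z' - \z^{t-1}$ stay inside $T_a$, so the only nodes at which $(\z'_v - \z^{t-1}_v)^+ > 0$ lie on the backbone strictly between $r_t$ and $a$ (the mass consolidates along this path), with each increase bounded by $\z^{t-1}(T_a) \le 1/10$. Summing weights along this backbone by the HST geometric series gives $\|\z^{t-1} - \z'\|_w^+ \le \tfrac{1}{10} \cdot \tfrac{\tau}{1-\tau} w_a = O(w_a)$. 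For the potential: both $\apot_v$ contributions and $\apot_{p(v)}$ contributions can change only when $v \in T_a$ (since $\z_{p(a)}$ and higher are unchanged and off-backbone subtrees of $T_a$ keep their mass); a similar HST-based telescoping bounds $|\apot(\z') - \apot(\z^{t-1})| = O(w_a)$. For the lower bound: at node $a$ we have $\z'_a = \z^{t-1}_a \le 1/10$ while $\z^t_a \ge \z^t_{r_t} = 1$, so the contribution of $a$ alone gives $\|\z' - \z^t\|_w^+ \ge (9/10) w_a$.

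Putting the three bounds together, $\|\z^{t-1} - \z'\|_w^+ + 2|\apot(\z') - \apot(\z^{t-1})| = O(w_a) \le O(1) \cdot \|\z' - \z^t\|_w^+$, which finishes the argument with an appropriate absolute constant $c$.

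The main obstacle will be the careful accounting for the potential change, since $\apot_v = w_v(\z_v - (2/3)\z_{p(v)})^+$ depends on both $\z_v$ and $\z_{p(v)}$. Consolidating mass inside $T_a$ toward $r_t$ simultaneously makes some subtrees more saturated relative to their parent (increasing their $\apot$) while emptying sibling subtrees (decreasing theirs), so $\apot(\z') - \apot(\z^{t-1})$ decomposes into contributions of both signs. The work is to identify exactly which nodes are affected, bound each $|\Delta \apot_v|$ by $w_v$ times a coordinate change of size at most $\z^{t-1}(T_a) \le 1/10$, and sum them using the HST decay to confirm the $O(w_a)$ bound. The other minor subtlety is verifying (from the flow structure) that $(\z'_v - \z^{t-1}_v)^+$ is supported only on the backbone strictly below $a$, ensuring the geometric series bound above is valid.
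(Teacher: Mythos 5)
Your skeleton is the same as the paper's: split via Observation~\ref{obs:subadd} into $\|\z^{t-1}-\z'\|^+_w + \|\z'-\z^t\|^+_w$, show the two correction terms are $O(w_a)$, and show $\|\z'-\z^t\|^+_w \ge (9/10)\,w_a$. However, there is a genuine gap in your justification of the crucial lower bound, caused by reading the norm with the wrong sign. By definition $\|\mathbf{v}\|^+_w=\sum_u w_u\mathbf{v}_u^+$, so $\|\z'-\z^t\|^+_w=\sum_u w_u(\z'_u-\z^t_u)^+$ records only the coordinates where the $\z$-value \emph{decreases} in passing from $\z'$ to $\z^t$ (this is the intended convention: the algorithm's movement is $\|\x^t-\x^{t-1}\|^+_w$, and increases of $\x$ correspond to decreases of $\z$). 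At the node $a$ the value \emph{increases} ($\z'_a\le 1/10$ while $\z^t_a\ge \z^t_{r_t}=1$), so $(\z'_a-\z^t_a)^+=0$ and "the contribution of $a$ alone" is zero, not $(9/10)w_a$. The correct argument is the flow one: at least $9/10$ units of flow must still be pushed into $r_t$ when going from $\z'$ to $\z^t$, and all of it originates at leaves outside $T_a$; each unit decreases the $\z$-value of the off-backbone child of the topmost (backbone) node of its path, and since that topmost node lies strictly above $a$, this child sits at depth at most that of $a$ and hence has weight at least $w_a$. Summing these decreases over the flow gives $\|\z'-\z^t\|^+_w\ge(9/10)w_a$.

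The same sign confusion affects your bound on $\|\z^{t-1}-\z'\|^+_w$: that norm is supported where $\z^{t-1}_u>\z'_u$, i.e.\ at the nodes of $T_a$ that \emph{lose} mass (off the downward backbone portion), not "on the backbone strictly between $r_t$ and $a$," where mass increases. The bound is still recoverable as in the paper: for $u\in T_a\setminus\{a\}$ use $(\z^{t-1}_u-\z'_u)^+\le \z^{t-1}_u+\z'_u$ and sum level by level of $T_a$ with the HST decay, giving at most $2\,\frac{\tau}{1-\tau}\,\z^{t-1}_a\,w_a\le(2/90)w_a$. The same device also settles the potential term you flag as the "main obstacle": only an upper bound on $\apot(\z')-\apot(\z^{t-1})$ is needed, and by subadditivity, with $\y:=\z'-\z^{t-1}$ and $\y_a=0$, one gets $\apot(\z')-\apot(\z^{t-1})\le\apot(\y)\le 2\sum_{u\in T_a\setminus\{a\}}w_u\y_u^+\le(4/90)w_a$, with no node-by-node case analysis. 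So your route is the paper's route, but as written the key lower-bound step fails and must be replaced by the flow argument, and the two $O(w_a)$ bounds need the correct support/sign bookkeeping.
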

	
\begin{proof}[Proof]
  If $\z'= \z^{t-1}$ the claim is vacuous, so assume the backbone node
  $a$ exists. By Observation~\ref{obs:subadd},
  \begin{align}
    \|\z^{t-1} - \z^t\|^+_w \le \|\z^{t-1} - \z'\|^+_w + \|\z' - \z^t\|^+_w. \label{eq:triangle}
  \end{align}
		
  We first claim that $\|\z' - \z^t\|^+_w \ge (9/10) w_a$, i.e., the
  remaining flow is large after we apply the flow paths contained in $T_a$. To see this,
  notice $\z'_{r_t} \le \z'_a = \z^{t-1}_a \le \frac{1}{10}$, and by
  feasibility of $\z^t$ we have $\z^t_{r_t} = 1$. Thus, to move from
  solution $\z'$ to $\z^t$ at least $9/10$ units of flow need to be sent
  into $r_t$. Moreover, since all this flow comes from leaves outside
  $T_a$, each unit of this flow pays at least $w_a$ when it ``enters the
  backbone''.
		
  Next, we claim $\|\z^{t-1}-\z'\|^+_w \le (2/90) w_a$. Using Observation \ref{obs:subadd},
  \begin{align}
    \|\z^{t-1}-\z'\|^+_w &= \sum_{u \in T_a \setminus a} w_u (\z^{t-1}_u - \z'_u)^+ \le \sum_{u \in T_a \setminus a} w_u (\z^{t-1}_u + \z'_u) \notag\\
    & = w_a \sum_{\ell \ge 1} \tau^{\ell} \left[\sum_{u \in \textrm{level $\ell$ of $T_a$}} \z^{t-1}_u + \sum_{u \in \textrm{level $\ell$ of $T_a$}} \z'_u\right] \notag\\
    & = w_a \sum_{\ell \ge 1} \tau^{\ell} (\z^{t-1}_a + \z'_a) \notag\\
			& \le (\nicefrac{2}{90}) w_a, \label{eq:first1}
  \end{align}
  where the last inequality uses the fact that we have a $\tau$-HST with $\tau \le 1/10$, and that by the definition of $a$ we have $\z^{t-1}_a = \z'_a \le 1/10$. %\mmnote{The third equality uses Lemma \ref{lem:flow} applied to $\z^{t-1}$ and the fact $\z'$ also satisfies it}
  By the previous paragraph, we now get $\|\z^{t-1}-\z'\|^+_w \le \frac{2}{81} \|\z'- \z^t\|_w^+$.

  Finally, we claim $\apot(\z') - \apot(\z^{t-1}) \le (4/90)  w_a$ (which is $\le \frac{4}{81} \|\z'-\z^t\|_w^+$): the subadditivity of $(.)^+$ implies subadditivity of $\apot$, hence (let $\y := \z' - \z^{t-1}$)
  \begin{align*}
    \apot(\z') - \apot(\z^{t-1}) &\le \apot(\z' - \z^{t-1}) = \sum_{u \in T_a \setminus a} w_u (\y_u - (2/3) \y_{p(u)})^+ \\
    & \le \sum_{u \in T_a \setminus a} w_u \left(\y_u^+ + (2/3) \y_{p(u)}^+\right) \le 2 \sum_{u \in T_a \setminus a} w_u \y_u^+,
  \end{align*}
  where the last inequality uses the fact $\y_a = 0$. Again by
  Observation \ref{obs:subadd} $\y_u^+ \le \z'_u + \z^{t-1}_u$, so part of
  inequality \eqref{eq:first1} gives $\sum_{u \in T_a \setminus a} w_u
  \y_u^+ \le (2/90) w_a$. This proves the claim. Moreover,
  % Using this last bound we can lower bound the right-hand side
  % in~(\ref{eq:6}) as:
  %
  \begin{align*}
    RHS(\ref{eq:6}) = c \cdot \left(\|\z'-\z^t\|_w^+ - 2 \left[\apot(\z') - \apot(\z^{t-1}) \right]\right) \ge c \cdot (\nicefrac{73}{81}) \|\z'-\z^t\|_w^+.
  \end{align*}
  By~\eqref{eq:triangle} and the consequence of \eqref{eq:first1}, %we can upper-bound the left-hand side in the statement of the lemma as
  $LHS(\ref{eq:6}) = \|\z^{t-1} - \z^t\|_w^+ \le (1 + \frac{2}{81})
  \|\z' - \z^t\|_w^+$. Now we have $LHS \le RHS$ for $c \ge \nicefrac{73}{83}$, concluding the proof.
\end{proof}

	Next we track the changes from the intermediate solution $\z'$ to $\z^t$. Since the value of $q^t$ is changing during this process, we need some notation to track it carefully. Given two solutions $\bar{\z}$ and $\bar{\z}'$, we define 
$$ q(\bar{\z}',\bar{\z})_u := \alpha(\bar{\z}'_{p(u)}, \bar{\z}_{p(u)}) - \alpha(\bar{\z}'_{u}, \bar{\z}_{u}),$$ 
where $$\alpha(z',z) := \frac{1}{z'-z} \int^{z'}_z \ln(1 + \nicefrac{x}{\delta})dx.$$ Notice that $q^t_u = q(\z^{t-1},\z^t)_u.$ The main technical part of this section will be to prove the following.
\begin{lemma}
	The following holds:
  \label{lem:main-tech}
  \begin{align}
    \|\z'-\z^t\|_w^+ \le c \cdot \|\z' -\z^t\|_{q(\z',\z^t)}^+ - 2
    \left[\apot(\z^t) - \apot(\z') \right]. \label{eq:need}
  \end{align}
\end{lemma}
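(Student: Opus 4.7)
My plan is to analyze the residual flow structure of $\z^t - \z'$, which sends at least $9/10$ units of mass from sources (leaves outside $T_a$) into the requested leaf $r_t$. First, for any node $v$ that is not an ancestor of $r_t$, no flow enters the subtree $T_v$ (the only destination $r_t$ lies outside $T_v$), so $\z^t_v \leq \z'_v$ and $(\z^t_v - \z'_v)^+ = 0$. Hence $\|\z'-\z^t\|_w^+$ reduces to a sum only over the backbone $v_0 = \rootvtx, v_1, \ldots, v_D = r_t$, and it suffices to charge each backbone level separately.

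I would then bound each term $w_{v_i}(\z^t_{v_i} - \z'_{v_i})$ via a case split on the shrinkage ratio $\z^t_{v_i}/\z^t_{v_{i-1}}$. In the \emph{fast-shrinkage} case $\z^t_{v_i} \leq \tfrac{2}{3}\z^t_{v_{i-1}}$, the monotonicity of $\log(1+x/\delta)$ together with the gap between the two integration intervals forces $\alpha(\z'_{v_{i-1}},\z^t_{v_{i-1}}) - \alpha(\z'_{v_i},\z^t_{v_i}) = \Omega(1)$, so $q(\z',\z^t)_{v_i} = \Omega(1)$, and the $w$-movement at $v_i$ is charged to $O(1)$ times the $q$-weighted movement at $v_i$. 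In the \emph{slow-shrinkage} case $\z^t_{v_i} > \tfrac{2}{3}\z^t_{v_{i-1}}$, the potential $\apot$ receives the positive contribution $w_{v_i}(\z^t_{v_i} - \tfrac{2}{3}\z^t_{v_{i-1}})$ at $v_i$; combined with the $\tau$-HST weight decay $w_{v_i} \leq \tau\, w_{v_{i-1}} \leq w_{v_{i-1}}/10$, runs of consecutive slow-shrinkage nodes telescope as a geometric sum and are charged to the $\apot$-gain at the run's bottom level (with any leftover absorbed by a neighboring fast-shrinkage level, via the $2/3$ threshold matching the $\apot$ definition).

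Summing over backbone levels, the fast-shrinkage levels contribute at most $O(1)\cdot\|\z'-\z^t\|_{q(\z',\z^t)}^+$ and the slow-shrinkage levels contribute at most $O(1)\cdot(\apot(\z^t)-\apot(\z'))$. The non-backbone contributions to $\apot(\z^t)-\apot(\z')$ are not automatically non-positive (since $(x-(2/3)y)^+$ is not monotone in both arguments when both $\z_u$ and $\z_{p(u)}$ decrease), but a quick bound using Observation~\ref{obs:subadd} and HST decay, analogous to the estimate in Lemma~\ref{lemma:first}, shows their total magnitude is small compared to the main terms. The factor $2$ in front of $\apot$ in~\eqref{eq:need}, combined with the slack from $\tau \leq 1/10$, is what makes the constants close up.

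The main obstacle, I expect, will be the slow-shrinkage bookkeeping: a single level's $\apot$-gain need not cover the associated $w$-movement when $\z'$-values are not negligible, so runs of slow-shrinkage backbone nodes must be handled collectively using the HST geometric decay, and the $2/3$-threshold has to be balanced carefully against the factor $2$. A secondary delicate point is the transition between the backbone above $a$ and below $a$: above $a$ the $\z^t_{v_i}$ are moderate while below $a$ they are all near the total mass of flow, so the fast/slow dichotomy must be verified robustly on both sides of this threshold.
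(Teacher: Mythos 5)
There is a genuine gap, and it starts at your very first structural step. The norm in the statement is $\|\z'-\z^t\|_w^+=\sum_u w_u(\z'_u-\z^t_u)^+$, i.e.\ it charges nodes where the $\z$-value \emph{drops}. Since the residual flow sends mass from leaves outside $T_a$ into $r_t$, the $\z$-values drop exactly at the \emph{non-backbone} nodes and rise on the backbone; so the backbone contributes zero to this norm and all the mass sits off the backbone. Your claim that the movement ``reduces to a sum only over the backbone'' evaluates $(\z^t_v-\z'_v)^+$ instead of $(\z'_v-\z^t_v)^+$ and gets the support exactly backwards. (One could re-express the left-hand side as a backbone sum of increases using per-depth mass conservation and the depth-uniform HST weights, but that is not the argument you give, and it does not rescue the rest.) The same sign issue is fatal for your charging scheme: the right-hand term $\|\z'-\z^t\|_{q(\z',\z^t)}^+$ is also supported only off the backbone, so ``the $q$-weighted movement at $v_i$'' for a backbone node $v_i$ is identically zero, and your fast-shrinkage case charges against nothing. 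Finally, the potential enters the inequality as $-2[\apot(\z^t)-\apot(\z')]$, so an \emph{increase} of $\apot$ (which is what happens at backbone nodes as their $\z$-values rise) consumes budget rather than providing it; your slow-shrinkage case, which charges movement ``to the $\apot$-gain,'' has the direction of the potential reversed.

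The paper's proof works where the mass actually moves: it decomposes $\z'-\z^t$ into $\eps$-sized flow pushes along single leaf-to-$r_t$ paths, ordered so that $\top(P)$ climbs the backbone, and for each push with top $\ell$ and off-backbone child $v$ it splits on whether the edge $(\ell,v)$ is heavy. In the heavy case the \emph{drop} of $\z_v$ makes $\apot$ decrease by $w_v\eps$, which (after subtracting the geometrically small increases further down, using $\tau\le 1/10$) pays for the movement; in the light case one shows $q(\cdot,\cdot)_v$ is bounded below by a constant, using $\z_\ell\ge 1/10$ and $\z'_v\le\frac23\z'_\ell$, so the $q$-norm pays. The remaining technical work—keeping heavy-edge sets consistent along a block of pushes, and the monotonicity argument showing $q(\z_i,\z_{i+1})_u\le q(\z_0,\z_m)_u$ so that the per-push $q$'s can be replaced by $q(\z',\z^t)$—is precisely the bookkeeping your static per-level case split does not engage with. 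As written, your outline cannot be repaired without redoing the argument along these lines.
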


Before we prove Lemma~\ref{lem:main-tech}, let us prove the main result using the above lemmas.

%\begin{lemma}
%  \label{lemma:second} $\frac{1}{c}\,\|\z^{t-1} - \z^t\|_w^+ \le c' \cdot \|\z^{t-1} - \z^t\|_{q(\z^{t-1},\z^t)}^+ - 2 \left[\apot(\z^t) - \apot(\z^{t-1}) \right].$
%\end{lemma}
	
\begin{proof}[Proof of Lemma \ref{lem:logsq-part2}]
	We prove inequality \eqref{eq:needPart2}. Putting Lemmas~\ref{lemma:first} and~\ref{lem:main-tech} together we
  get
  \begin{align*}
    \frac{1}{c}\,\|\z^{t-1} - \z^t\|_w^+ \le c' \cdot \|\z' - \z^t\|_{q(\z',\z^t)}^+ - 2 \left[\apot(\z^t) - \apot(\z^{t-1}) \right].
  \end{align*}
  This is almost what we wanted to prove, except that we have
  $\|\z' - \z^t\|_{q(\z',\z^t)}^+$ instead of
  $\|\z^{t-1} - \z^t\|_{q(\z^{t-1},\z^t)}^+$ on the RHS. But this is easily
  handled. First we change the $q$ and claim that
  $\|\z' - \z^t\|_{q(\z',\z^t)}^+ = \|\z' -
  \z^t\|_{q(\z^{t-1},\z^t)}^+$. Indeed, $(\z'_u - \z^t_u)^+ > 0$ only for
  nodes $u \not \in T_a$; since $\z'_u = \z^{t-1}_u$ for those nodes, we
  immediately get
  % so in $\|z' - z^t\|_{q}^+$ it
  % only matters the $q_u$'s for $u \notin T_a$. But we claim that for
  % such $u$'s $q(z',z^t)_u = q(z^{t-1},z^t)_u$:
  %       	%
  \begin{align*}
    q(\z',\z^t)_u = \alpha(\z'_{p(u)},\z^t_{p(u)}) - \alpha(\z'_u,\z^t_u) =
    \alpha(\z^{\alert{t-1}}_{p(u)},z^t_{p(u)}) - \alpha(\z^{\alert{t-1}}_u,\z^t_u) =
    q(\z^{t-1},\z^t).
  \end{align*}
  % where ``$*$'' follows because $z'$ and $z^{t-1}$ are the same
  % outside of $T_a$ (and notice that if $u \notin T_a$, then
  % $p(u) \notin T_a$).
		
  Finally, using the fact that we send flows from other leaves to the
  requested node $r_t$, the solutions satisfy $\z^t_u \le \z'_u \le
  \z^{t-1}_u$ for all nodes outside the backbone,
   we have
  \begin{align*}
  	\|\z' - \z^t\|_{q(\z^{t-1},\z^t)}^+ &= \sum_{u \notin \textrm{backbone}} w_u q(\z^{t-1},\z^t)_u \cdot (\z'_u - \z^t_u)^+ \\
  	&\le \sum_{u \notin \textrm{backbone}} w_u q(\z^{t-1},\z^t)_u (\z^{\alert{t-1}}_u - \z^t_u)^+ = \|\z^{t-1} - \z^t\|_{q(\z^{t-1},\z^t)}^+ .
 	\end{align*}
 	Putting these together concludes the proof.
\end{proof}

\subsubsection{Proof of Lemma~\ref{lem:main-tech}}

In this section we give the proof of Lemma~\ref{lem:main-tech}. Given a path $P$ from aמy leaf in the 
tree to $r_t$, we denote by $\top(P)$ its topmost vertex, which is always on the backbone. The
proof takes the residual flow $\z' - \z^t$, decomposes it as ``small''
flows on paths from leaves outside $T_a$ to the request node $r_t$, 
discharging them iteratively. We order these paths $P$ so that their
vertex $\top(P)$ becomes higher on the backbone over time, in order to
control the change in value of the $q$'s. (Observe that by pushing $\eps$
flow
over a path $P = u \leadsto \top(P) \leadsto r_t$ for a leaf
$u \neq r_t$ causes the $\z$-value of all but the last node in the
subpath $u \leadsto \top(P)$ to drop by $\eps$, and the $\z$-value of all
but the first node in the subpath $\top(P) \leadsto r_t$ to increase by
$\eps$.)

Since several flow paths may have the same $\top(\cdot)$ node, and the
$q$ function changes with each flow discharge, we handle this carefully. Consider a
sequence of flow paths that all have $\top(\cdot) = \ell$ that
transform $\bar{\z} \to \ldots \to \bar{\z}' \to \bar{\z}'' \to \ldots \to
\bar{\z}'''$, and we are presently concerned with a particular flow path
$P$ that takes us from $\bar{\z}'$ to $\bar{\z}''$. We say that an edge $(p(u),u)$ is \textbf{heavy for $\z$} if $\z_u > (2/3) \z_{p(u)}$, i.e., it contributes to the potential $\apot$. (If we have equality $\z_u = (2/3) \z_{p(u)}$, we
  may or may not consider this edge heavy).

\begin{lemma}[One Path]
  \label{lem:onestep}
  Consider LP solutions $\bar{\z},\bar{\z}',\bar{\z}'', \bar{\z}'''$ and a
  backbone node $\ell$ (with a child $v'$ on the backbone, and some child
  $v$ outside the backbone) such that:
  \begin{OneLiners}
  \item[(i)] $\bar{\z}_\ell \ge \frac{1}{10}$,
  \item[(ii)] $\bar{\z}_\ell = \bar{\z}'_\ell = \bar{\z}''_\ell = \bar{\z}'''_\ell$ and
    $\bar{\z}'''_v \le \bar{\z}''_v \le \bar{\z}'_v \le \bar{\z}_v$,
    % is obtained from $\bar{z}$ by \alert{pushing flow along multiple
    % paths with $\top(.) = \ell$}
  \item[(iii)] $\bar{\z}''$ is obtained from $\bar{\z}'$ by pushing $\eps$
    flow over a leaf-to-$r_t$ path $P$ that has $\top(P) = \ell$ and that passes through $v$ and $v'$, and
  \item[(iv)] the heavy edges in $\bar{\z}'$ and $\bar{\z}''$ are the
    same, where edges with $\z_u = (2/3) \z_{p(u)}$ may be considered
    heavy or not, as needed.
  \end{OneLiners}
  Then
  \begin{gather}
    \|\bar{\z}' - \bar{\z}''\|_w^+ \le c \cdot \|\bar{\z}' -
    \bar{\z}''\|^+_{q(\alert{\bar{\z}},\alert{\bar{\z}'''})} -
    2\left[\apot(\bar{\z}'') - \apot(\bar{\z}')\right]. \label{eq:2}
  \end{gather}
\end{lemma}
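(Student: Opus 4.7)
The plan is to evaluate each of the three quantities in the claimed inequality explicitly, by tracking how pushing $\eps$ units of flow along $P$ changes the $\z$-values, and then to match the pieces using the $\tau$-HST weight decay together with the heaviness-preservation hypothesis~(iv). First, $\|\bar{\z}'-\bar{\z}''\|_w^+$ picks up exactly $\eps w_u$ from each node $u$ on the ``decreasing'' subpath of $P$ (from the source leaf up to and including $v$), since these are the only nodes whose $\z$-value strictly drops. Because weights along a root-to-leaf path in a $\tau$-HST decrease geometrically, this sum is at most $\eps w_v/(1-\tau) \le (\nicefrac{10}{9})\tau w_\ell\eps$.

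Next, I would expand $\apot(\bar{\z}'')-\apot(\bar{\z}')$ as a sum over the edges of $P$. By~(iv) every non-heavy edge contributes $0$ (it stays at $0$ on both sides), so only heavy edges matter. The two edges incident to $\ell$, if heavy, contribute $-w_v\eps$ (for $(\ell,v)$) and $+w_{v'}\eps$ (for $(\ell,v')$); each other heavy edge on the decreasing (respectively increasing) side of $P$ contributes $-w_u\eps/3$ (respectively $+w_u\eps/3$), because both its endpoints' $\z$-values shift by the same $\eps$. Using $w_v = w_{v'} = \tau w_\ell$ and geometric decay along the two halves of $P$ gives $\apot(\bar{\z}'')-\apot(\bar{\z}') \le \tau w_\ell\eps + O(\tau^2 w_\ell\eps)$. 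Combined with Step~1, the left-hand side of the lemma is at most $C\tau w_\ell\eps$ for a universal constant $C$.

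For the $q$-norm on the right, only the same decreasing-side nodes contribute, so the dominant summand is $\eps w_v\,q(\bar{\z},\bar{\z}''')_v$. Using~(ii) so that $\bar{\z}_\ell = \bar{\z}'''_\ell$, this equals
\[
\eps w_v\,\Bigl[\ln\!\bigl(1+\tfrac{\bar{\z}_\ell}{\delta}\bigr) \;-\; \tfrac{1}{\bar{\z}_v-\bar{\z}'''_v}\!\int_{\bar{\z}'''_v}^{\bar{\z}_v}\!\ln\!\bigl(1+\tfrac{x}{\delta}\bigr)\,dx\Bigr].
\]
The main obstacle is to bound this quantity from below by a universal positive multiple of $\tau w_\ell\eps$. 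I would case-split on whether the edge $(\ell,v)$ is heavy in $\bar{\z}$. In the non-heavy case $\bar{\z}_v \le (2/3)\bar{\z}_\ell$, so over the entire averaging interval the integrand $\ln\frac{\delta+\bar{\z}_\ell}{\delta+x}$ is at least $\ln\frac{\delta+\bar{\z}_\ell}{\delta+(2/3)\bar{\z}_\ell}$, which by condition~(i) ($\bar{\z}_\ell\ge \nicefrac{1}{10}$) and the standing choice $\delta=\nicefrac1k$ is $\Omega(1)$; this closes the inequality immediately for a suitable $c$. In the heavy case $\bar{\z}_v>(2/3)\bar{\z}_\ell$, $q_v$ may be small; but then the monotonicity $\bar{\z}'''_v\le\bar{\z}''_v\le\bar{\z}'_v\le \bar{\z}_v$ keeps the edge $(\ell,v)$ heavy in $\bar{\z}'$ as well, so its $-w_v\eps$ contribution cancels the $+w_{v'}\eps$ term from $(\ell,v')$ in Step~2, shrinking the left-hand side by exactly the missing amount. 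If even that cancellation is insufficient (because $\bar{\z}'_v$ has already dropped below $(2/3)\bar{\z}_\ell$), then $\bar{\z}_v-\bar{\z}'_v$ is large, forcing $\bar{\z}_v-\bar{\z}'''_v$ to be large and pushing the averaging interval below $(2/3)\bar{\z}_\ell$, which again recovers a constant lower bound on $q_v$.

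Once this pairing between $q_v$ and the $\apot$-cancellation is in hand, the remaining loose ends---adding back the geometrically smaller contributions from nodes strictly below $v$ on the decreasing side, and absorbing the $O(\tau^2)$ lower-order terms into the constants $c$ and $2$ of the claimed inequality---are routine. The central difficulty is precisely the interplay above: a ``poor-man's Pinsker'' bound on $q_v$ alone does not always suffice, and one must draw the missing slack from the heaviness-preservation hypothesis~(iv) whose very purpose in the chain of flow pushes from Lemma~\ref{lem:main-tech} is to make this compensation clean.
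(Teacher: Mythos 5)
Your overall plan (bound the movement by $w_v\eps/(1-\tau)$, expand the change in $\apot$, and case-split on heaviness of $(\ell,v)$ so that either the potential drop or a constant lower bound on $q(\bar{\z},\bar{\z}''')_v$ pays for it) is the same as the paper's, but the heavy case as you have written it does not close. First, the split should be on heaviness with respect to $\bar{\z}'$ (equivalently $\bar{\z}''$, by hypothesis~(iv)), since that is what determines the potential change; splitting on $\bar{\z}$ is what forces your awkward sub-cases. More importantly, in the sub-case where $(\ell,v)$ is heavy for $\bar{\z}'$, you argue that its $-w_v\eps$ contribution ``cancels the $+w_{v'}\eps$ term from $(\ell,v')$, shrinking the left-hand side by exactly the missing amount.'' A cancellation to zero proves nothing here: in this sub-case $q(\bar{\z},\bar{\z}''')_v$ can be arbitrarily small (take $\bar{\z}'''_v$ and $\bar{\z}_v$ both close to $\bar{\z}_\ell$), so the $q$-norm term contributes essentially nothing, and after your cancellation the right-hand side is only $-O(\tau^2 w_\ell \eps)$, which cannot dominate the left-hand side of order $w_v\eps=\tau w_\ell\eps$. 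The fact that rescues this case (and is the content of the paper's Case~1) is that $(\ell,v)$ and $(\ell,v')$ can never both be heavy: the children's $\z$-values sum to $\z_\ell$, so at most one child can exceed $(2/3)\z_\ell$. Hence when $(\ell,v)$ is heavy there is no $+w_{v'}\eps$ term at all, and the drop at $v$, doubled by the factor $2$ in the statement, is a \emph{net gain} $2w_v\eps$ that by itself dominates $w_v\eps/(1-\tau)$ plus all geometrically smaller positive contributions from deeper levels (for which one again needs the at-most-one-heavy-child observation, since your expansion over ``edges of $P$'' omits the off-path children of the decreasing path, whose terms can increase).

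In your intermediate sub-case ($\bar{\z}_v>(2/3)\bar{\z}_\ell$ but $\bar{\z}'_v\le(2/3)\bar{\z}_\ell$) the conclusion $q(\bar{\z},\bar{\z}''')_v=\Omega(1)$ is true, but the stated reason --- that the averaging interval is ``pushed below $(2/3)\bar{\z}_\ell$'' --- is not: the interval $[\bar{\z}'''_v,\bar{\z}_v]$ still extends up to $\bar{\z}_v$, which may equal $\bar{\z}_\ell$, so a pointwise bound on the integrand fails. One needs the paper's averaging step: by monotonicity of $\alpha$ in both endpoints, $\alpha(\bar{\z}'''_v,\bar{\z}_v)\le\alpha(\bar{\z}'_v,\bar{\z}_v)\le\alpha\big((2/3)\bar{\z}_\ell,\bar{\z}_\ell\big)$, and then a median (or concavity) argument bounds this by $\tfrac12 f\big((5/6)\bar{\z}_\ell\big)+\tfrac12 f(\bar{\z}_\ell)$ with $f(x)=\ln(1+x/\delta)$, giving $q(\bar{\z},\bar{\z}''')_v\ge\tfrac12\big(f(\bar{\z}_\ell)-f((5/6)\bar{\z}_\ell)\big)=\Omega(1)$ by property~(i). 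Your genuinely non-heavy case, where the whole interval lies below $(2/3)\bar{\z}_\ell$, is correct and in fact slightly simpler than the paper's estimate; the gap is confined to the heavy case and to the missing off-path terms noted above.
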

	
\begin{proof}
  Observe that the left hand side of~(\ref{eq:2}) is at most
  $w_{v} \eps (1 + \tau + \tau^2 + \ldots) = \frac{w_v \e}{1-\tau}$. To
  bound the right hand side from below, we consider two cases, based on which of
  the two topmost edges in $P$ may be heavy.
		
  \emph{Case 1: $\bar{\z}'_v \ge \frac{2}{3} \bar{\z}'_\ell$.} Hence the
  edge $(\ell,v)$ is heavy. Since the first term on the right hand side is
  non-negative, it is at least
  $-2 \left[\apot(\bar{\z}'') - \apot(\bar{\z}') \right] = 2 \sum_u
  -\left[\apot(\bar{\z}'')_u - \apot(\bar{\z}')_u \right]$, where
  $\apot(y)_u = w_u (y_u - (2/3) y_{p(u)})^+$. Let $P_+$ and $P_-$ be the
  subpaths of $P$ from $r_t$ to $\top(P)$, and down from there, such
  that $z$ increases on the former and decreases on the latter, as we
  push flow.

  Now $-\left[\apot(\bar{z}'')_v - \apot(\bar{z}')_v \right] = w_v \e$;
  this potential change for vertices $u$ on $P_+$ (apart from vertex
  $v'$) is $-w_u \e/3$, and this change for vertices whose parents lie
  on $P_-$ is $-2w_u \e/3$. Since all these vertices $u$ are at a
  greater depth than $v$ is, using the HST property means that the total
  change in potential is at least
  $2w_v \e(1 - \tau - \tau^2 - \ldots) = 2w_v\e
  \frac{1-2\tau}{1-\tau}$. Since $\tau \leq 1/10$, the right hand side is at least
  $\frac{w_v \e}{1-\tau}$ and hence the left hand side in this case.

  \medskip \emph{Case 2: $\bar{\z}'_v \le \frac{2}{3} \bar{\z}'_\ell$.}
  In this case the edge $(\ell,v)$ is not heavy, but $(\ell,v')$ may be
  heavy. Hence the expression
  $-\left[\apot(\bar{\z}'')_u - \apot(\bar{\z}')_u \right]$ is at least
  $-w_v \e$ for $v'$. Again the same calculations as in the previous
  case imply
  $- 2\left[\apot(\bar{\z}'') - \apot(\bar{\z}') \right] \geq - 2\frac{w_v
    \eps}{1-\tau}$. Hence, for the right hand side to be larger than the left hand side, we
  need to take the first term on the right hand side into account. This is
  $c \cdot \|\bar{\z}' - \bar{\z}''\|^+_{q(\bar{\z},\bar{\z}''')} \ge c
  \cdot w_v \cdot q(\bar{\z},\bar{\z}''')_v \cdot (\bar{\z}'_v -
  \bar{\z}''_v)^+ = c \cdot w_v \cdot q(\bar{\z},\bar{\z}''')_v \cdot
  \eps$. For this to be at least $3\frac{w_v \eps}{1-\tau}$, it suffices
  to show $q(\bar{\z},\bar{\z}''')_v$ is at least a constant, since we can
  then set $c$ large enough.
		
	To simplify the notation, let $f(x) = \ln(1+ \nicefrac{x}{\delta})$. Using property~(ii), $\bar{z}_\ell = \bar{z}'''_\ell$ and so we have
  \begin{align}
    q(\bar{\z},\bar{\z}'')_v = \alpha(\bar{\z}_\ell,\bar{\z}'''_\ell) -
    \alpha(\bar{\z}_v,\bar{\z}'''_v) = f(\bar{\z}_\ell) -
    \alpha(\bar{\z}_v,\bar{\z}'''_v),   \label{eq:q}
  \end{align}
  and we now want to upper-bound the last term. Using again property~(ii),
  $\bar{\z}'''_v \le \bar{\z}''_v \le \bar{\z}'_v \le \bar{\z}_v$, so
  $$\alpha(\bar{\z}_v,\bar{\z}'''_v) = \alpha(\bar{\z}'''_v,\bar{\z}_v) \le
  \alpha(\bar{\z}'_v, \bar{\z}_v).$$ Moreover, in the current case~2,
  $\bar{\z}'_v \le \frac{2}{3} \bar{\z}'_\ell = \frac{2}{3} \bar{z}_\ell$
  and $\bar{\z}_v \le \bar{\z}_\ell$,
  so $\alpha(\bar{\z}'_v, \bar{\z}_v) \le \alpha(\frac{2}{3}\bar{\z}_\ell ,
  \bar{\z}_\ell)$. Now letting $X$ be a random variable uniformly
  distributed in $[\frac{2}{3} \bar{\z}_\ell, \bar{\z}_\ell]$ (with median
  $med(X) = \frac{5}{6} \bar{\z}_\ell$) we have
  \begin{align*}
    \alpha((2/3)\bar{\z}_\ell , \bar{\z}_\ell) = \E f(X) &= \nicefrac{1}{2} \E[f(X) \mid X \le med(X)] + \nicefrac{1}{2} \E[f(X) \mid X \ge med(X)]\\
                                                  &\le \nicefrac{1}{2} f(med(X)) + \nicefrac{1}{2} f(\bar{\z}_\ell)
                                                  = \nicefrac{1}{2} f((\nicefrac56) \bar{\z}_\ell) + \nicefrac{1}{2} f(\bar{\z}_\ell).
  \end{align*}
  Putting everything together and applying it to \eqref{eq:q} we get
  \begin{align*}
    q(\bar{\z},\bar{\z}''')_v \ge f(\bar{z}_\ell) - \left(\nicefrac{1}{2}
    f((5/6) \bar{\z}_\ell) + \nicefrac{1}{2} f(\bar{\z}_\ell)\right) =
    \nicefrac{1}{2} \bigg(f(\bar{\z}_\ell) - f((5/6) \bar{\z}_\ell)\bigg) =
    \nicefrac{1}{2} \bigg(\ln \left(\frac{1 + \bar{\z}_\ell/\delta}{1 + (5/6)
    \bar{\z}_\ell /\delta}\right)\bigg).
  \end{align*}
  This final expression is at least a constant, because $\bar{\z}_\ell \ge
  \nicefrac{1}{10}$ by property~(i). This concludes the proof.
\end{proof}

Using Lemma~\ref{lem:onestep} repeatedly, we can understand the process
of going from  $\bar{\z} \to \ldots \to \bar{\z}'$ via pushing flow
on multiple paths, as long as all these paths share the same top node.
\begin{lemma}[All Paths with $\top(P) = \ell$]
  \label{lem:onetop}
  Consider LP solutions $\bar{\z},\bar{\z}'$ and a node $\ell$ on the
  backbone such that: (i)~$\bar{\z}_\ell \ge \frac{1}{10}$,
  and~(ii)~$\bar{\z}'$ is obtained from $\bar{\z}$ by pushing flow along
  multiple paths with $\top(.) = \ell$.
  Then
  $$\|\bar{\z} - \bar{\z}'\|_w^+ \le c \cdot \|\bar{\z} -
  \bar{\z}'\|^+_{q(\bar{\z},\bar{\z}')} - 2\left[\apot(\bar{\z}') -
    \apot(\bar{\z})\right].$$
\end{lemma}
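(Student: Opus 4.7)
The plan is to reduce Lemma~\ref{lem:onetop} to repeated applications of Lemma~\ref{lem:onestep} by decomposing the multi-path flow into atomic single-path sub-pushes. First I would fix an arbitrary order on the flow paths $P$ with $\top(P)=\ell$ that take $\bar{\z}$ to $\bar{\z}'$, and push them one at a time. Within each such path push, I would further subdivide so that between consecutive atomic steps no edge's heaviness status changes in $\apot$. Since $\z_u - (2/3)\z_{p(u)}$ varies piecewise linearly as flow is pushed continuously along a fixed path, each individual path push has finitely many threshold crossings, so the overall decomposition $\bar{\z}=\z^{(0)} \to \z^{(1)} \to \cdots \to \z^{(M)}=\bar{\z}'$ is finite.

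For each atomic step $\z^{(i-1)} \to \z^{(i)}$ I apply Lemma~\ref{lem:onestep}, with the outer endpoints $\bar{\z}, \bar{\z}'$ of Lemma~\ref{lem:onetop} playing the roles of the first and fourth solutions in that lemma, and $\z^{(i-1)}, \z^{(i)}$ playing the roles of the second and third. Condition (i), $\bar{\z}_\ell \ge 1/10$, is inherited. For condition (ii), every push has $\top=\ell$ so $\z_\ell$ is invariant throughout the process; and for the child $v$ of $\ell$ along the path used at step $i$, the sequence of $\z_v$-values is non-increasing across the entire process (pushes through $v$ decrease $\z_v$ and pushes along other paths leave it unchanged), so the required chain $\bar{\z}_v \ge \z^{(i-1)}_v \ge \z^{(i)}_v \ge \bar{\z}'_v$ holds. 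Condition (iii) is built into the construction, and condition (iv) is enforced by the refinement.

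Summing the resulting inequalities telescopes each term to the desired form. The key observation is that for every vertex $u \ne \ell$, the increments $\z^{(i)}_u - \z^{(i-1)}_u$ all have the same sign (or are zero) across $i$: vertices on ascending portions of some path only ever see $\z_u$ decrease, the fixed descending path $\ell\to r_t$ only ever sees $\z_u$ increase, and no vertex can lie simultaneously on an ascending and a descending segment of two paths sharing $\top=\ell$ (since $r_t$ is fixed). Hence $\sum_i (\z^{(i-1)}_u - \z^{(i)}_u)^+ = (\bar{\z}_u - \bar{\z}'_u)^+$, which telescopes $\sum_i \|\z^{(i-1)}-\z^{(i)}\|_w^+$ to $\|\bar{\z}-\bar{\z}'\|_w^+$. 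Because the weights $q(\bar{\z},\bar{\z}')$ do not depend on $i$, the same argument telescopes the second term to $\|\bar{\z}-\bar{\z}'\|_{q(\bar{\z},\bar{\z}')}^+$, and the $\apot$ terms telescope directly to $-2[\apot(\bar{\z}')-\apot(\bar{\z})]$.

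The main obstacle will be making sure the atomic decomposition actually enforces condition (iv) of Lemma~\ref{lem:onestep} at every step. This is where the flexibility built into Lemma~\ref{lem:onestep}---that edges with $\z_u = (2/3)\z_{p(u)}$ may be classified as heavy or not, as needed---is crucial: it lets us place sub-push boundaries exactly at the parameter values where an edge crosses the heaviness threshold, so the interior of each atomic step preserves the classification. A secondary subtlety, already touched on above, is the monotonicity of $\z_v$ across the entire process for each atomic step's relevant off-backbone vertex $v$, which is why I stressed that no two paths sharing $\top=\ell$ can traverse the same vertex in opposite directions.
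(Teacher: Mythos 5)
Your proposal is correct and follows essentially the same route as the paper: decompose the multi-path flow into single-path pushes (refined so that the heavy-edge classification is unchanged within each step, using the slack in condition~(iv) of Lemma~\ref{lem:onestep}), apply Lemma~\ref{lem:onestep} with the outer endpoints $\bar{\z},\bar{\z}'$ as the first and fourth arguments and the consecutive intermediates as the second and third, and telescope using the per-vertex sign consistency of the increments. Your write-up merely spells out details the paper leaves implicit (the threshold-crossing refinement and the verification of condition~(ii)), so there is no substantive difference.
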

	
\begin{proof}
  Consider a sequence of solutions
  $\bar{\z} = \z_0, \z_1,\ldots, \z_m = \bar{\z}'$, where $\z_{i+1}$ is
  obtained from $\z_i$ by pushing flow over a leaf-to-$r_t$ path $P_i$ with
  $\top(P_i) = \ell$; we construct this sequence of flows such that the
  set of heavy edges with respect to each consecutive pair $\z_i$ and
  $\z_{i+1}$ is the same, for all $i$.
		
  Applying Lemma~\ref{lem:onestep} with $\bar{\z} = \z_0$,
  $\bar{\z}' = \z_i$, $\bar{\z}'' = \z_{i+1}$, and $\bar{\z}''' = \z_m$, we
  get
  \begin{align*}
    \|\z_i - \z_{i+1}\|_w^+ \le c \cdot \|\z_i - \z_{i+1}\|^+_{q(\z_0,\z_m)} -
    2\left[\apot(\z_{i+1}) - \apot(\z_i)\right].
  \end{align*}
  Summing this up over all indices $i$, and observing that
  $\sum_i \|\z_i - \z_{i+1}\|^+_w = \|\z_0 - \z_m\|^+_w$ (since for each
  vertex $u$, every flow either increases its value or decreases it),
  concludes the proof.
\end{proof}
	
Now we can finally give the proof of Lemma~\ref{lem:main-tech}.	
\begin{proof}[Proof of Lemma~\ref{lem:main-tech}]
  Recall the node $a$ was defined to be the highest node on the backbone
  with $\z^{t-1}$-value at most $1/10$. Let $a, v_1, v_2, \ldots, v_m$ be
  the vertices on the backbone starting from $a$, going from bottom to top. In
  order to change the solution $\z'$ into $\z^t$, push the remaining flow
  on paths ``from bottom to top'', i.e., those with $\top(\cdot) = v_1$
  first, then those with $\top(\cdot) = v_2$ next, etc., to get a
  sequence of solutions
  $\z' = \z_0 \rightarrow \z_1 \rightarrow \ldots \rightarrow \z_m = \z^t$.
	
  Since all these remaining flows have $\top(.)$ higher than node $a$, we have
  $z_{\top(.)} \ge \frac{1}{10}$ at all times during this process, so
  Lemma~\ref{lem:onetop} applied to each choice of $\top(\cdot)$ gives
  \begin{align}
    \sum_i \|\z_i- \z_{i+1}\|_w^+ \le c \cdot \sum_i \|\z_i -\z_{i+1}\|_{q(\z_i,\z_{i+1})}^+ - 2 \left[\apot(\z_m) - \apot(\z_0) \right]. \label{eq:last}
  \end{align}
  Again each intermediate step causes the $\z$-value on the backbone to   rise, and off the backbone to fall: this monotonicity means the left
hand side
  equals $\|\z_0 - \z_m\|_w^+$.
		
  Now we claim that
  $$\|\z_i - \z_{i+1}\|_{q(\z_i,\z_{i+1})}^+ \le \|\z_i
  -\z_{i+1}\|_{q(\alert{\z_0,\z_m})}^+.$$Since $(\z_i)_u \ge (\z_{i+1})_u$
  only in the subforest $T_{v_i} \setminus (\{v_i\} \cup T_{v_{i-1}})$,
  it suffices to show that $q(\z_i,\z_{i+1})_u \le q(\z_0,\z_m)_u$ for nodes
  $u$ in this set. In fact, the flows that convert $\z_i$ to $\z_{i+1}$
  are the only ones that change the $z$-value of these nodes, so we have
  $(\z_0)_u = (\z_i)_u$ and $(\z_{i+1})_u = (\z_m)_u$. Now since the
  $q$-value at node $u$ depends on the $\z$-values at the node and its
  parent, we have equality $q(\z_i,\z_{i+1})_u = q(\z_0,\z_m)_u$ for all
  nodes in $T_{v_i} \setminus (\{v_i\} \cup T_{v_{i-1}})$, except perhaps when $u$ is a child of $v_i$. For
  $u$ being a child of $v_i$ (and not on the backbone), we have
  \begin{align*}
    q(\z_i,\z_{i+1})_u &= \alpha((\z_i)_{v_i}, (\z_{i+1})_{v_i}) - \alpha((\z_i)_u,
                       (\z_{i+1})_u) \\ &= \alpha((\z_i)_{v_i}, (\z_{i+1})_{v_i}) -
                       \alpha((\z_{\alert{0}})_u, (\z_{\alert{m}})_u)\\
    &= \alpha((\z_{\alert{0}})_{v_i},
  (\z_{i+1})_{v_i}) - \alpha((\z_{0})_u, (\z_{m})_u) \tag{since $(\z_0)_{v_i} = (\z_i)_{v_i}$} \\
    &\leq \alpha((\z_{0})_{v_i},
  (\z_{\alert{m}})_{v_i}) - \alpha((\z_{0})_u, (\z_{m})_u) = q(\z_0,\z_m)_u ,
  \end{align*}
  where the inequality uses the fact that
  $(\z_m)_{v_i} \ge (\z_{i+1})_{v_i} \ge (\z_0)_{v_i}$.  This proves the claim.

  Thus, the right hand side of \eqref{eq:last} is at most
  $$c \cdot \sum_i \|\z_i -\z_{i+1}\|_{q(\z_0,\z_m)}^+ - 2 \left[\apot(\z_m) -
    \apot(\z_0) \right].$$ Again, since there is no cancellation in the
  flows, the sum equals $\|\z_0 -\z_m\|_{q(\z_0,\z_m)}^+$. This concludes
  the proof.
\end{proof}

%%% Local Variables:
%%% mode: latex
%%% TeX-master: "main"
%%% End:

\subsection*{Acknowledgments}

We thank Nikhil Bansal, S\'ebastien Bubeck, Amit Kumar, James Lee,
Aleksander Madry, and Harry R\"{a}cke for helpful discussions. In
particular we thank S\'ebastien, James, and Aleksander for generously
explaining their work % at the Simons Institute for Theoretical Computer
% Science
to audiences containing different subsets of the authors, for clarifying
issues related to discretization of their algorithm, and for 
pointing out a gap in Lemma~\ref{lem:opt-charge}. Part of this work was done when the authors were
visiting the \emph{Algorithms and Uncertainty}
and \emph{Bridging Discrete and Continuous Optimization} programs at the
Simons Institute for the Theory of Computing.

{\small
\bibliography{bib}
\bibliographystyle{amsalpha}
}

\newpage
\appendix

\section{The Weighted $(h,k)$-Paging Problem}
\label{sec:paging}

In this section we consider the weighted $(h,k)$-paging problem, where we maintain a
fractional solution with $k$ pages, and compare to an optimal solution
that maintains $h \leq k$ pages.  The goal is to respond to requests
$r_t \in [n]$ at each time $t$ by producing a vector $\z^t \in \{0,1\}^n$
with $\norm{\z^t}_1 = k$, such that $\z^t_{r_t} = 1$ for all $t$. Define
the \emph{weighted $\ell_1$ norm} for $x \in \R^n$ by
\[ \wlone{x} := \sum_i w_i |x_i|. \] The
objective is to minimize the total weighted movement cost:
\[ \sum_t \wlone{\z^t - \z^{t-1}}. \]
This problem is equivalent to the $(h,k)$-server problem on a weighted
star metric.

\subsection{Polytopes and Solutions}

As is common for fractional paging, consider the \emph{anti-paging} polytope
\[ P := \{ x \in [0, 1]^n \mid \sum_i x_i \geq n-h \}. \] For
$\delta := \frac{k-h+\nicefrac12}{k+\nicefrac12}$, define the \emph{shifted} polytope
\[ P_\delta := \{ x \in [\delta, 1]^n \mid \sum_i x_i \geq n-h \}. \]
We maintain the following invariant:

\begin{invariant}
  \label{inv:paging}
  The algorithm's solutions are fractional vectors $\a^t \in P_\delta$
  with $\|\a^t\|_1 = n-h$, and the optimal solutions are Boolean vectors
  $\b^t \in P \cap \{0,1\}^n$, again with $\|\b^t\|_1 = n-h$. Moreover,
  for $t \geq 1$, $\a^t_{r_t} = \delta$ and $\b^t_{r_t} = 0$.
\end{invariant}

At the beginning, if the optimal servers are at some set $B^0 \sse [n]$
with $|B_0| = h$, define $\b^0_i = \mathbf{1}_{i \not\in B^0}$. If the
algorithm's servers are initially at $A^0$ with $|A^0| = k$, define
$\a^0_i = \delta$ for all $i \in A^0$, and
$\a^0_i = \frac{n-h - \delta k}{n-k}$ for $i \not\in A^0$. It is easy to
verify that $\a^0, \b^0$ satisfy the invariant.

Interpreting the Boolean vector $\b^t$ is simple: the adversary has
servers exactly at the $h$ locations $i \in [n]$ where $\b^t_i = 0$,
i.e., its paging solution is given by $\mathbf{1} - \b^t$. On the other
hand, converting the algorithm's fractional solution $\a^t \in P_\delta$
from the shifted anti-paging polytope to a fractional paging solution
$\z^t$ requires handling this shift: define
$\z^t_i := \frac{1-\a^t_i}{1-\delta}$ for each $i \in [n]$. This new
fractional solution $\z^t$ has $\z^t_{r_t} = 1$ since
$\a^t_{r_t} = \delta$, and it uses
\[ \|\z^t\|_1 = \frac{n-\norm{\a^t}_1}{1-\delta} = \frac{n-(n-h)}{h/(k+\nicefrac12)} = k+\nicefrac12 \]
servers. It satisfies $\wlone{\z^t - \z^{t-1}} = \wlone{\a^t - \a^{t-1}}
\cdot \nicefrac{1}{(1-\delta)}$.

We finally use the following result that combines~\cite[Lemma~3.4]{BCLLM17} with
\cite[\S5.2]{BBMN11} to round fractional solutions $\z^t$ to integer ones:
\begin{theorem}[Rounding Theorem]
  \label{thm:page-round}
  There exists an absolute constant $C > 1$ and an efficient randomized
  algorithm that takes a sequence of fractional solutions
  $\z^t = \frac{\mathbf{1} - \a^t}{1-\delta}$ to the weighted paging problem, each with
  $\norm{\z^t}_1 = k+\nicefrac12$ pages and with $\z^t_{r_t} = 1$, and
  rounds them to integer solutions $\hat{\z}^t$ each with
  $\norm{\hat\z^t} = k$ pages and $\hat\z^t_{r_t} = 1$, so that the
  expected movement cost is
  \[ \E\Big[ \wlone{\hat\z^t - \hat\z^{t-1}} \Big] \leq C\, \wlone{\z^t -
      \z^{t-1}} \leq \frac{C}{1-\delta}\, \wlone{\a^t - \a^{t-1}}. \]
\end{theorem}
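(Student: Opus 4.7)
The plan is to combine the two online rounding results cited in the theorem statement. Overall, the scheme maintains at each time $t$ a coupled probability distribution $\mu^t$ over integer configurations $\hat{\z}^t \in \{0,1\}^n$ with $\norm{\hat{\z}^t}_1 = k$ and $\hat{\z}^t_{r_t} = 1$. The two requirements on $\mu^t$ are: single-coordinate marginals tied to $\z^t$, and couplings across consecutive time steps realizing an $\ell_1$-earth-mover transport whose expected cost is $O(\wlone{\z^t - \z^{t-1}})$. The overall movement bound then follows by linearity of expectation once we have control over each step.

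The first step is to invoke the online rounding machinery of \cite[Section~5.2]{BBMN11}. That scheme takes a trajectory of fractional vectors in $[0,1]^n$ with integer $\ell_1$-norm $m$ and a distinguished coordinate pinned to $1$, and produces a coupled trajectory of random integer vectors with the same norm and the same distinguished coordinate equal to $1$, such that the expected weighted $\ell_1$-movement is within an absolute constant of the fractional weighted $\ell_1$-movement. Intuitively, this is a Markov-chain coupling that transports probability mass between integer states along the fractional change, and the weighted version reduces to the unweighted one after reweighting coordinates.

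The main obstacle is that the trajectory $\z^t$ we need to round has non-integer $\ell_1$-norm $k+\nicefrac12$, so the \cite{BBMN11} scheme does not apply directly, and furthermore we need to force $\hat{\z}^t_{r_t} = 1$ despite this non-integrality. This is exactly the role of the ``half-server'' argument of \cite[Lemma~3.4]{BCLLM17}: the extra $\nicefrac12$ unit of mass is used as slack that is dedicated to the requested page, so that $r_t$ can be deterministically installed in the integer solution, while the remaining coordinates (carrying total mass $k-\nicefrac12$ inside $[0,1]^n$) can be fed into the \cite{BBMN11} rounding after a cosmetic shift that makes their norm the integer $k-1$. Each cosmetic modification increases the per-step $\ell_1$-movement by only a constant multiplicative factor, so chaining the two ingredients yields the desired constant $C$. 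The final inequality in the theorem statement then follows immediately from the algebraic identity $\z^t - \z^{t-1} = -(\a^t - \a^{t-1})/(1-\delta)$ implied by $\z^t = (\mathbf{1} - \a^t)/(1-\delta)$.
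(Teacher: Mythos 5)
Your proposal takes essentially the same route as the paper: the paper does not prove Theorem~\ref{thm:page-round} itself but simply imports it as a combination of the online rounding of \cite[\S5.2]{BBMN11} with the half-server reduction of \cite[Lemma~3.4]{BCLLM17}, which is exactly the combination you describe. Your added gloss on how the extra $\nicefrac12$ unit of mass lets $r_t$ be installed deterministically while the rest is rounded at integer norm is a reasonable account of what those cited results provide, so there is nothing to flag.
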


Henceforth, we only consider the problem of maintaining the fractional
solution $\a^t \in P_\delta$. Our main theorem for computing
fractional solutions for weighted paging is the
following:
\begin{theorem}[Main Theorem: Weighted $(h,k)$-Paging]
  \label{thm:page-cr}
  There is an algorithm that maintains a sequence of fractional
  solutions $\a^t \in P_\delta$ with $\a^t_{r_t} = \delta$ and
  $\norm{\a^t}_1 = n-h$, such that
  \[ \wlone{\a^t - \a^{t-1}} \leq (\log 1/\delta) \cdot \wlone{\b^t -
      \b^{t-1}} + C', \] for any sequence of feasible solutions $\b^t$
  where $\b^t \in P \cap \{0,1\}^n$ and $\b^t_{r_t} = 0$. Here $C'$ is a
  constant that depends on the weights $w_i$ and the values of $k$ and
  $h$, but is independent of the request sequence.
\end{theorem}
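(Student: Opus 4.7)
The plan is to adapt the set cover template from Section 2 directly, treating weighted paging as a one-level instance of the tree-based algorithm of Section~\ref{sec:server}. The algorithm at each time $t$ sets $\a^t := \arg\min_{x \in P_t}\Ddiv{x}{\a^{t-1}}$, where $P_t := P \cap \{x : x_{r_t} \leq \delta\}$ encodes the request constraint and
\[ \Ddiv{x}{x'} := \tsty\sum_i w_i\bigl(x_i \log \tfrac{x_i}{x'_i} - x_i + x'_i\bigr) \]
is the weighted unnormalized KL divergence. Introducing Lagrange multipliers $\lambda \geq 0$ for $\sum_i x_i \geq n-h$, $\mu_i \geq 0$ for $x_i \leq 1$, and $\gamma \geq 0$ for $x_{r_t} \leq \delta$, KKT stationarity gives $w_i \log(\a^t_i/\a^{t-1}_i) = \lambda - \mu_i$ for $i \neq r_t$ and $w_{r_t}\log(\a^t_{r_t}/\a^{t-1}_{r_t}) = \lambda - \gamma$. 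From these I would derive monotonicity ($\a^t_i \geq \a^{t-1}_i$ for $i \neq r_t$), box feasibility ($\a^t_i \in [\delta,1]$), tightness at $r_t$ ($\a^t_{r_t} = \delta$), and mass conservation ($\|\a^t\|_1 = n-h$), thereby re-establishing Invariant~\ref{inv:paging} inductively.

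With potential $\Phi_t := \Pdiv{\b^t}{\a^t} = \sum_{i:\b^t_i = 1} w_i \log(1/\a^t_i)$, which lives in $[0, \log(1/\delta)\sum_{i:\b^t_i=1} w_i]$, the analysis splits into the standard two phases. For the OPT phase, each $0\to 1$ flip of $\b_i$ raises $\Phi$ by at most $w_i \log(1/\delta)$ and costs OPT $w_i$, while $1 \to 0$ flips only decrease $\Phi$; hence $\Pdiv{\b^t}{\a^{t-1}} - \Pdiv{\b^{t-1}}{\a^{t-1}} \leq \log(1/\delta)\cdot\wlone{\b^t - \b^{t-1}}$. For the ALG phase, reverse-Pythagorean $\Ddiv{\b^t}{\a^{t-1}} \geq \Ddiv{\b^t}{\a^t} + \Ddiv{\a^t}{\a^{t-1}}$ together with cancellation of linear terms (using $\|\a^t\|_1 = \|\a^{t-1}\|_1 = n-h$ and $\|\b^t\|_1$ fixed) yields the shadow-cost inequality
\[ \Pdiv{\a^t}{\a^{t-1}} + \bigl(\Pdiv{\b^t}{\a^t} - \Pdiv{\b^t}{\a^{t-1}}\bigr) \;\leq\; 0, \]
exactly as in the set cover case.

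The remaining step is to bound the actual movement $\wlone{\a^t - \a^{t-1}}$ by a constant times $\Pdiv{\a^t}{\a^{t-1}}$ plus terms that telescope. For the ``eviction'' coordinates $i \neq r_t$, poor-man's Pinsker gives $w_i(\a^t_i - \a^{t-1}_i) \leq w_i \a^t_i \log(\a^t_i/\a^{t-1}_i)$, which sums to $\Pdiv{\a^t}{\a^{t-1}}$ minus the (non-positive) $r_t$ contribution $w_{r_t}\delta\log(\delta/\a^{t-1}_{r_t})$. The ``bring-in'' movement $w_{r_t}(\a^{t-1}_{r_t} - \delta)$ at $r_t$ is controlled via mass conservation $\a^{t-1}_{r_t} - \delta = \sum_{i \neq r_t}(\a^t_i - \a^{t-1}_i)$ together with the KKT identity $w_{r_t}\log(\a^{t-1}_{r_t}/\delta) = \gamma - \lambda$, which via PMP gives $w_{r_t}(\a^{t-1}_{r_t} - \delta) \leq \a^{t-1}_{r_t}\gamma \leq \gamma$. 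One then couples this with the explicit form $\Pdiv{\a^t}{\a^{t-1}} = \lambda(n-h) - \sum_i \mu_i \a^t_i - \delta\gamma$ to absorb the $\gamma$ contribution. Summing over $t$ and telescoping $\Phi_t$ yields the claimed bound with additive constant $C' = O(\Phi_0 + \sum_i w_i)$ depending only on the instance.

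The main obstacle is this last step. Unlike set cover, the algorithm's movement is not monotone: there is exactly one decreasing coordinate, namely $r_t$, and its weight $w_{r_t}$ need not match the per-coordinate weights $w_i$ of the coordinates that compensate by mass conservation. The dual multiplier $\gamma$ of the request constraint is the bridge between the two, and the technical heart of the argument is verifying that $\gamma$ interacts favorably with the shadow-cost expression (in which it appears with coefficient $-\delta$) rather than forcing an additional logarithmic factor. This accounting of the ``bring-in'' cost against the dual variables is what stretches the proof to a page, even though the algorithm itself is a single line.
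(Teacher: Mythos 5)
Your proposal tracks the paper's own proof through most of its length: the same projection with the weighted unnormalized KL divergence, the same KKT/complementary-slackness derivation of Invariant~\ref{inv:paging}, the same potential $\Pdiv{\b^t}{\a^t}$, the same OPT-move bound, the reverse-Pythagorean shadow-cost inequality, and the PMP bound on the increasing coordinates (the paper routes this through the auxiliary divergence $\Ddiv{\cdot}{\cdot}^{wo}$, which is only a bookkeeping difference from your "drop the non-positive $r_t$ term" step; also, the cancellation of linear terms there is structural and does not actually need $\norm{\a^t}_1=\norm{\a^{t-1}}_1$). The genuine gap is exactly where you flagged it: the per-step charging of the bring-in cost $w_{r_t}(\a^{t-1}_{r_t}-\delta)$ via the dual multiplier $\gamma$. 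Your bound $w_{r_t}(\a^{t-1}_{r_t}-\delta)\le \gamma$ is fine, but the absorption step cannot work: in the identity $\Pdiv{\a^t}{\a^{t-1}}=\lambda(n-h)-\sum_i\mu_i\a^t_i-\delta\gamma$ the multiplier enters the shadow cost with coefficient $-\delta$, i.e.\ it \emph{weakens} the available potential drop rather than supplying a $+\Omega(\gamma)$ term, and the best additional per-step lower bound on the potential drop in terms of $\gamma$ (the analogue of Lemma~\ref{lem:second-bound}, proved via the normal-cone argument) is only $\delta\gamma$. Charging a full $\gamma$ against it loses a factor $1/\delta=\Theta(\frac{k+\half}{k-h+\half})$, which is $\Theta(k)$ when $h=k$, not a constant.

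Moreover, no per-step inequality of the form $\wlone{\a^t-\a^{t-1}}\le O(\log 1/\delta)\cdot\wlone{\b^t-\b^{t-1}}+(\Phi_{t-1}-\Phi_t)$ can hold: take a heavy requested page $r_t$ (weight $W$) with $\a^{t-1}_{r_t}\approx 1$ while OPT has held $r_t$ all along ($\b^t_{r_t}=0$, OPT does not move), and let all other pages have negligible weight. Then the algorithm pays $\approx W$ in bring-in cost at this step, OPT pays nothing, and the potential (which only involves coordinates with $\b^t_i=1$, all of negligible weight) barely moves. The mass-conservation identity you invoke, $\a^{t-1}_{r_t}-\delta=\sum_{i\ne r_t}(\a^t_i-\a^{t-1}_i)$, is in unweighted mass and does not transfer the weight $w_{r_t}$, so it cannot rescue the per-step charge either. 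Some amortization across time is therefore unavoidable, and this is what the paper does: it bounds only $\sum_i w_i(\a^t_i-\a^{t-1}_i)^+$ per step by the shadow cost, sums over $t$, and then uses that each coordinate is a sequence in $[0,1]$, so its total decrease is at most its total increase plus $1$; this converts one-sided movement into full $\ell_1$ movement at the cost of a factor $2$ and the additive term, yielding $C'=O(\sum_i w_i\log(1/\delta))$. Replacing your $\gamma$-absorption step by this aggregation argument (which is a one-line fix consistent with your "terms that telescope" remark) completes the proof.
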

Combining Theorems~\ref{thm:page-round} and~\ref{thm:page-cr}, and using
our choice of $\delta$, the competitive factor of our randomized
algorithm is $O(\log \frac{k+\half}{k - h + \half})$ for
$h = \Omega(k)$.  Note that for $h=k$ the algorithm is $O(\log k)$-competitive, and for $h=k/2$ the algorithm is $O(1)$-competitive.
In the rest of this section, we prove
Theorem~\ref{thm:page-cr}.

\subsection{The Projection Algorithm}

Given a request at location $r_t$, define the body
$P_t = P \cap \{x_{r_t} \leq \delta\}$.  Project the old point
$\a^{t-1} \in P_\delta$ onto the new body $P_t$ using a weighted form of the
\emph{unnormalized KL divergence}:
\[ \Ddiv{x}{x'} := \sum_i w_i \bigg(x_i \log \frac{x_i}{x'_i} -
  x_i + x'_i\bigg) \] I.e., set
$\a^t := \arg\min_{x \in P_t} \Ddiv{x}{\a^{t-1}}$. That's the entire algorithm.

Since $P_t$ is not contained within $P_\delta$, we must prove that the
new point $\a^t$  lies in the polytope $P_\delta$.

\begin{lemma}
  \label{lem:page-props}
  For each $t$, the solution $\a^t$ satisfies Invariant~\ref{inv:paging}.
\end{lemma}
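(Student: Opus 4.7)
The plan is to prove Lemma~\ref{lem:page-props} by induction on $t$. The base case $t=0$ is a direct calculation from the explicit formula for $\a^0$: one checks $\a^0_i \in [\delta, 1]$ (using that $\delta \le 1 - h/n$, which follows from $\delta = \frac{k-h+1/2}{k+1/2}$ and $h \le k$) and $\|\a^0\|_1 = k\delta + (n-k)\cdot \frac{n-h-\delta k}{n-k} = n-h$. The inductive step is the substantive part, and it will be driven by the KKT conditions of the projection convex program.

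For the inductive step, write the projection problem as minimizing the Bregman divergence over $x \in [0,1]^n$ subject to $\sum_i x_i \ge n-h$ and $x_{r_t} \le \delta$, and introduce non-negative multipliers $\mu$ (for the mass constraint), $\nu_i$ (for $x_i \le 1$), and $\xi$ (for the demand constraint). The problem is feasible (as $P_t$ is non-empty when $h \ge 1$), strictly convex in $x$ with affine constraints, so strong duality holds and the KKT stationarity condition is
\[
w_i \log \frac{\a^t_i}{\a^{t-1}_i} \;=\; \mu \;-\; \nu_i \;-\; \xi\,\mathbf{1}_{i = r_t}.
\]
I will extract the three components of the invariant from this equation in turn. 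First, for $i \ne r_t$, if $\a^t_i < 1$ then $\nu_i = 0$ and $\a^t_i = \a^{t-1}_i \cdot e^{\mu/w_i} \ge \a^{t-1}_i$; otherwise $\a^t_i = 1 \ge \a^{t-1}_i$. So off-request coordinates are non-decreasing, and by the inductive hypothesis $\a^t_i \ge \a^{t-1}_i \ge \delta$. Combined with the constraint $\a^t_i \le 1$ from $P_t$, this gives the box-constraint part of $\a^t \in P_\delta$.

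Second, to show $\a^t_{r_t} = \delta$, note that the inductive hypothesis gives $\a^{t-1}_{r_t} \ge \delta$. If $\a^{t-1}_{r_t} = \delta$, then $\a^{t-1} \in P_t$, so it minimizes the divergence trivially and $\a^t = \a^{t-1}$. If $\a^{t-1}_{r_t} > \delta$, suppose for contradiction $\a^t_{r_t} < \delta$: then complementary slackness forces $\xi = 0$, and since $\a^t_{r_t} < \delta \le 1$, also $\nu_{r_t} = 0$. The KKT equation for $i = r_t$ then reduces to $w_{r_t}\log(\a^t_{r_t}/\a^{t-1}_{r_t}) = \mu \ge 0$, giving $\a^t_{r_t} \ge \a^{t-1}_{r_t} > \delta$, contradicting the assumption. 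Hence $\a^t_{r_t} = \delta$. Third, to show $\|\a^t\|_1 = n-h$, suppose the mass constraint is slack: then $\mu = 0$, and for every $i \ne r_t$ with $\a^t_i < 1$ the KKT equation gives $w_i \log(\a^t_i/\a^{t-1}_i) = -\nu_i \le 0$, so $\a^t_i \le \a^{t-1}_i$; for $i$ with $\a^t_i = 1$ we trivially have $\a^{t-1}_i \le 1 = \a^t_i$, but those $i$ contribute no slack since $\a^{t-1}_i$ is already at most $1$. Including $\a^t_{r_t} \le \delta \le \a^{t-1}_{r_t}$, this would force $\|\a^t\|_1 \le \|\a^{t-1}\|_1 = n-h$, contradicting slackness. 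So the constraint is tight.

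The main obstacle I anticipate is being scrupulous about the edge cases in the KKT analysis: separately handling the saturated coordinates ($\a^t_i = 1$), the already-feasible case $\a^{t-1}_{r_t} = \delta$, and the sign bookkeeping for $\mu, \nu_i, \xi$. These are routine once the multipliers are laid out correctly, but it is easy to conflate the direction of inequalities. A minor subsidiary point is to justify strong duality, which follows because the objective is convex and the constraints are affine with a strictly feasible point (so Slater's condition holds). Once these pieces are in place, the three bullet points combine to give $\a^t \in P_\delta$ with $\a^t_{r_t} = \delta$ and $\|\a^t\|_1 = n-h$, closing the induction.
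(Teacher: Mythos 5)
Your proof follows essentially the same route as the paper's: induction on $t$, then the KKT stationarity and complementary slackness conditions of the projection program, yielding (a) $\a^t_{r_t}=\delta$, (b) monotonicity of the off-request coordinates (hence $\a^t\in P_\delta$), and (c) tightness of the mass constraint --- the paper proves (c) by showing the mass multiplier must be strictly positive, while you argue the contrapositive, which is equivalent. One small slip in your step (c): for saturated coordinates ($\a^t_i=1$) the phrase ``those $i$ contribute no slack'' does not establish the inequality $\a^t_i\le\a^{t-1}_i$ that your mass comparison needs; the fix is immediate from your own stationarity equation, since $\mu=0$ gives $w_i\log(\a^t_i/\a^{t-1}_i)=-\nu_i\le 0$ for every $i\ne r_t$ regardless of saturation, so $\a^t_i\le\a^{t-1}_i$ (indeed $\a^t_i=1$ forces $\a^{t-1}_i=1$), and the case split is unnecessary.
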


\begin{proof}
  We assume that $\a^{t-1}$ satisfies the invariant, and then prove it
  for $\a^t$. As the base case, the invariant holds for $\a^0$.  Denote
  $\a := \a^{t-1}$ and $\a' := \a^t$, to avoid visual clutter. The
  projection operation that defines $\a'$ can be written as follows:
  \begin{align*}
    \min \quad &\sum_{i} w_i \Big(x_i \log \frac{x_i}{\a_i} -
                 x_i + \a_i\Big) \\
    \sum_i x_i &\geq n-h \\
    x_i &\leq 1 \qquad \forall i \neq r_t\\
    x_{r_t} &\leq \delta
  \end{align*}
  The KKT optimality conditions give us dual multipliers
  $\lambda, \gamma_i \geq 0$ that for all $i$,
  \begin{gather}
    w_i \log \frac{\a'_i}{\a_i} = \lambda - \gamma_i. \label{paging:kkt}
  \end{gather}
  Moreover, by complementary slackness, $\lambda > 0$ implies
  $\sum_i \a'_i = n-h$, and $\gamma_i >0$ implies $\a'_i = 1$ for
  $i \neq r_t$, and that $\a'_i = \delta$ for $i = r_t$.

  If $\a$ already belongs to $P_t$ then the projection returns
  $\a' = \a$, in which case the proof trivially follows. Hence assume
  that $\a_{r_t} > \delta$.  It now follows that:
  \begin{OneLiners}
  \item[(a)] $\a'_{r_t} = \delta$. Indeed, observe that
    $\a'_{r_t} \leq \delta < \a_{r_t}$, and by~(\ref{paging:kkt}) this
    decrease can only come about if $\gamma_{r_t} > 0$. Now
    complementary slackness for $\gamma_{r_t}$ implies
    $\a'_{r_t} = \delta$.
  \item[(b)] All other coordinates rise, i.e., $\a'_i \geq \a_i$ for all
    $i \neq r_t$. Indeed, if $\gamma_i > 0$, we have
    $\a'_i = 1 \geq \a_i$, else $\a'_i = \a_i e^\lambda \geq \a_i$ since
    $\lambda \geq 0$. In particular, if we start off with
    $\a_i \geq \delta$, the final solution also satisfies
    $\a'_i \geq \delta$.
  \item[(c)] $\|\a'\|_1 = n-h$. Indeed, inductively assume
    $\| \a \|_1 = n-h$. Then either $\a_{r_t} = \delta$ already, and
    hence there is no change, so $\a' = \a$. Else setting
    $\a'_{r_t} \gets \delta$ means we must increase some coordinates, so
    $\lambda > 0$ and hence $\| \a'\|_1 = n-h$. \qedhere
  \end{OneLiners}
\end{proof}

Since $\a^t_{r_t} = \delta$ from Lemma~\ref{lem:page-props}, let us give
an equivalent view of our algorithm. Define the \emph{auxiliary Bregman
  divergence}
\begin{gather}
  \Ddiv{x}{x'}^{wo} := \sum_{i \neq r_t} w_i \Big(x_i \log
  \frac{x_i}{x'_i} - x_i + x'_i\Big). \label{eq:aux-breg}
\end{gather}
Our algorithm is equivalent to first setting $\a^t_{r_t} \gets \delta$,
and then projecting the rest of the coordinates of $\a^{t-1}$ onto $P_t$
by $\min_{x \in P_t} \Ddiv{x}{\a^{t-1}}^{wo}$.

\subsection{Bounding the Movement Cost}

\paragraph{The Potential.} We use a KL-divergence-type potential to
measure the ``distance'' from the optimal solution $y^t \in P$ to $x^t$:
\begin{gather}
  \Pdiv{\b^t}{\a^t} := \sum_i w_i \, \b^t_i\, \log \frac{\b^t_i}{\a^t_i}
  = \sum_{i: \b^t_i = 1} w_i \log \frac{1}{\a^t_i}. \label{eq:page-pot}
\end{gather}
Observe that $\Phi \geq 0$ as long as $\y \in \{0,1\}^n$ and $\x \in (0,1]^n$.
Again, $\Ddiv{\b^t}{\a^t}$ could be used as the potential, but removing
the linear terms makes the arguments cleaner. Recall that this potential
has already been used in a potential function proof by Bansal et
al.~\cite{BBN-simple}; it arises naturally given our algorithm.

\paragraph{When OPT moves:}
Say OPT pays for moving pages out of the cache, i.e., for increasing
$\b$ from $0$ to $1$, it pays $w_i$. In this case the potential
increases by at most $w_i \log (1/\delta)$, because the denominator
$\a^{t-1}_i \geq \delta$. This gives us:
\begin{gather}
 \Delta \Phi_t^{OPT} := \Pdiv{\b^t}{\a^{t-1}} - \Pdiv{\b^{t-1}}{\a^{t-1}} \leq \log (1/\delta)
  \cdot \sum_i w_i (\b^t_i - \b^{t-1}_i)^+. \label{eq:page-opt-move}
\end{gather}

\paragraph{When ALG moves:} Recall the auxiliary Bregman
divergence~\eqref{eq:aux-breg}. The equivalent view of the algorithm
(discussed above) shows
\[ \a^t \in \arg\min_{x \in P_t} \Ddiv{x}{\a^{t-1}}^{wo}. \] Hence
$\a^t$ is a projection of $\a^{t-1}$ onto $P_t$ with respect to a
Bregman divergence $\Ddiv{}{}^{wo}$. Since the optimal solution $\b^t$
lies in $P_t = P\cap\{x_{r_t} \leq \delta\}$, the reverse-Pythagorean
property gives
\begin{gather}
  \Ddiv{\b^t}{\a^{t-1}}^{wo} \geq \Ddiv{\b^t}{\a^t}^{wo} + \Ddiv{\a^t}{\a^{t-1}}^{wo} \notag
  \\
  \implies \underbrace{\Pdiv{\a^t}{\a^{t-1}}^{wo}}_{\text{shadow cost}} +
  \underbrace{(\Pdiv{\b^t}{\a^t} - \Pdiv{\b^t}{\a^{t-1}})}_{\Delta \Phi_t^{ALG}} \leq
  0. \label{eq:shady}
\end{gather}
Here, the \emph{``without-$r_t$'' potential} is defined much as you would
expect: $\Pdiv{x}{x'}^{wo} := \sum_{i \neq r_t} w_i x_i \log
\frac{x_i}{x'_i}$. The second line in~(\ref{eq:shady}) follows from the
first by using the definition~(\ref{eq:aux-breg}) and canceling linear
terms on both sides. Moreover,
$\Pdiv{\b^t}{\cdot}^{wo} = \Pdiv{\b^t}{\cdot}$ since $\b^t_{r_t} = 0$.

Since all coordinates except for $r_t$ increase, we get
\begin{gather}
  \sum_i w_i (\a^t_i - \a^{t-1}_i)^+ = \sum_{i \neq r_t} w_i ( \a^t_i -
  \a^{t-1}_i ) \stackrel{\text{(\ref{eq:pmp})}}{\leq} \sum_{i \neq r_t}
  w_i\; \a^t_i \log \frac{\a^t_i}{\a^{t-1}_i} =
  \Pdiv{\a^t}{\a^{t-1}}^{wo}. \label{eq:paging-final}
\end{gather}

\subsubsection{Wrapping Up}

Combining~(\ref{eq:page-opt-move}),~(\ref{eq:shady})
and~(\ref{eq:paging-final}) gives us
\[ \sum_i w_i(\a^t_i - \a^{t-1}_i)^+  + \Big(\Pdiv{\b^t}{\a^t} -
  \Pdiv{\b^{t-1}}{\a^{t-1}}\Big) \leq (\log 1/\delta)\cdot \sum_i w_i(\b^t_i -
  \b^{t-1}_i)^+.\]
Summing up over all times $t$, and using the property that the final potential is
non-negative, we have
\begin{gather}
  \sum \sum_i w_i(\a^t_i - \a^{t-1}_i)^+ \leq (\log 1/\delta)\cdot
  \sum_t \sum_i w_i(\b^t_i - \b^{t-1}_i)^+ + \Pdiv{\b^0}{\a^0} . \label{eq:paging-abs-diff}
\end{gather}
We would like to translate the cost in terms of the weighted $\ell_1$
metric. For this, observe that for any sequence of numbers
$p_0, p_1, \ldots, p_T \in [0,M]$, we have
\[ \sum_t (p_t - p_{t-1})^+ \leq \sum_t |p_t - p_{t-1}| \leq 2 \sum_t (p_t
  - p_{t-1})^+ + M. \]
Applying this to each term in the summations from~(\ref{eq:paging-abs-diff}), we get
\[ \sum_t \wlone{\a^t_i - \a^{t-1}_i} \leq 2 \log(1/\delta) \cdot \sum_t \wlone{\b^t_i -
    \b^{t-1}_i} + O\Big(\sum_i w_i\Big) + \Pdiv{\b^0}{\a^0} .\]
Observe that the last term is at most $$\sum_{i \in A^0\setminus B^0} w_i
\log (1/\delta) \leq \sum_{i \in [n]} w_i \log (1/\delta),$$ and hence
setting $C' = O(\sum_{i \in [n]} w_i \log (1/\delta))$ completes the proof of Theorem~\ref{thm:page-cr}. \qed

An aside: while the above proof for paging proceeded via
the $\Ddiv{}{}^{wo}$ divergence, it could have instead
followed the arguments in \S\ref{sec:when-alg-moves} using
Lemma~\ref{lem:second-bound}, which would give very similar results.

%%% Local Variables:
%%% mode: latex
%%% TeX-master: "main"
%%% End:

\section{Proofs from Section~\ref{sec:server}}

\subsection{Omitted Proofs from Section~\ref{sec:prop-proj}}

We now give proofs of properties we claimed in \S\ref{sec:prop-proj},
as well as some supporting claims.

\ksRoottight*

\begin{proof}
  Constraint~(\ref{eq:ks-root}) yields
  $\x^t_{r,j} \geq \mathbf{1}_{(j > h)}$, so it remains to prove this is
  an equality. The previous solution $\x^{t-1}$ satisfies the equality (by
  induction on $t$),
  and~(\ref{eq:kktroot-ks}) implies that for
  $\x^t_{\rootvtx, j} > \x^{t-1}_{\rootvtx,j}$ we must have
  $\lambda_{\rootvtx, j} > 0$. But then complementary
  slackness~(\ref{eq:ks-csroot}) implies
  $\x^t_{\rootvtx, j} = \mathbf{1}_{(j > h)}$.
\end{proof}

\ksBox*

\begin{proof}
  %\cite[Lemma~3.5]{BCLLM17}
  The lower bound is by induction on the levels of the tree. For the
  base case, the root satisfies
  $\x^t_{\rootvtx, j} \geq 0$. For $\x^t_{u,j}$ with $u$ at depth $d$,
  its parent $p(u)$ satisfies $\x^t_{p(u),1} \geq 0$ by the induction hypothesis. Now
  constraint~(\ref{eq:ks-subsets}) for $S = \{(u,j)\}$ completes the
  inductive step.

  For the upper bound, suppose it does not hold. Then, choose the highest node $u$ in
  the tree for which some $\x^t_{u,j} > 1$. Since $\x^{t-1}_{u,j} \leq 1$ (by induction), this coordinate of $\x^t$ has increased. Hence
  $a_{u,j} > 0$, which means by~(\ref{eq:kkt-ks}) there exists some
  $S \sse \chi_{p(u)}$ such that $(u,j) \in S$ for which
  $\lambda_S > 0$. Now~(\ref{eq:tight}) implies that
  $\x^t(S) = \sum_{\ell \leq |S|} \x^t_{p(u),\ell}$. Thus
  \[ \x^t(S - \{(u,j)\}) = \x^t(S) - \x^t_{u,j} < \sum_{\ell \leq |S|}
    \x^t_{p(u),\ell} - 1 \leq \sum_{\ell \leq |S - \{(u,j)\}|}
    \x^t_{p(u),\ell}. \] The strict inequality uses that $\x^t_{u,j} > 1$, and
  the last inequality uses the fact that $\x^t_{p(u),\ell} \leq 1$ by our
  choice of $u$. But this violates constraint~(\ref{eq:ks-subsets}) of the
  convex program, yielding a contradiction.

  Secondly, $\x^{t-1}_{r_t,1} \geq \delta$ (by induction $\x^{t-1} \in P_{\delta}$), so if
  $\x^t_{r_t, 1} < \delta$ then its value has
  decreased. Then by~(\ref{eq:kktprime-ks}) we must have
  $\gamma_t > 0$, which would imply $\x^t_{r_t, 1} = \delta$ by
  complementary slackness~(\ref{eq:ks-req-cs}), a contradiction.

  For the last claim, for all non-$r_t$ leaves, (\ref{eq:kkt-ks}) has no
  $b_{u,j}$ terms. Thus $\x^t_{u,j} \geq \x^{t-1}_{u,j} \geq
  \delta$.
\end{proof}

\begin{restatable}[Monotonicity]{lemma}{ksMono}
  \label{lem:mono}
  $\x^t_{u,j} \leq \x^t_{u,j+1}$.
\end{restatable}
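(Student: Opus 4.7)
The plan is to prove monotonicity by induction on $t$: assuming $\x^{t-1}_{u,\cdot}$ is non-decreasing for every $u$, I will show the same holds for $\x^t$. The base case $\x^0$ is monotone by construction, and for the root Claim~\ref{clm:root-vals} already gives $\x^t_{\rootvtx,j} = \mathbf{1}_{(j > h)}$, which is monotone. For leaves there is only a single atom, so the claim is vacuous, and all the content lies in the inductive step for a non-root, non-leaf $u$.

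Toward a contradiction, suppose $\x^t_{u,j} > \x^t_{u,j+1}$ for some such $u$ and index $j$. Let $\tilde\x$ denote the solution obtained from $\x^t$ by swapping the values at the atoms $(u,j)$ and $(u,j+1)$. The goal is to show $\tilde\x \in P_t$ and that $\Ddiv{\tilde\x}{\x^{t-1}} \leq \Ddiv{\x^t}{\x^{t-1}}$, which will contradict uniqueness/optimality of the projection. For feasibility, constraints~(\ref{eq:ks-root}) and~(\ref{eq:ks-req}) are untouched since they involve only the root and the leaf $r_t$, respectively. The constraints~(\ref{eq:ks-subsets}) whose parent is $u$ involve only the prefix sum $\sum_{\ell \leq |S|} x_{u,\ell}$, which is either unchanged (when $|S| \neq j$) or strictly decreases (when $|S|=j$, since $\x^t_{u,j}$ is replaced by the smaller $\x^t_{u,j+1}$), so they all continue to hold.

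The only delicate case is a sibling constraint indexed by $S' \subseteq \chi_{p(u)}$ with $(u,j) \in S'$ and $(u,j+1) \notin S'$; here the right-hand side $\sum_{(v,\ell) \in S'} x_{v,\ell}$ decreases by $\x^t_{u,j}-\x^t_{u,j+1}$ under the swap. To recover feasibility I use the companion subset $S'' := (S'\setminus\{(u,j)\})\cup\{(u,j+1)\}\subseteq \chi_{p(u)}$, whose constraint is also part of the polytope. Feasibility of $\x^t$ for $S''$ gives $\sum_{\ell \leq |S'|}\x^t_{p(u),\ell} \leq \x^t(S') - \x^t_{u,j}+\x^t_{u,j+1}$, which is precisely what is needed for $\tilde\x$ to satisfy the $S'$ constraint. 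Hence $\tilde\x \in P_t$. For the divergence comparison, the shifted KL divergence is coordinate-separable and its linear terms depend only on the sum of the two swapped coordinates, which is invariant. The change therefore reduces to the rearrangement identity $f(a_1,b_1)+f(a_2,b_2)-f(a_1,b_2)-f(a_2,b_1)=(a_1-a_2)\log(b_2/b_1)$ for $f(a,b):=a\log(a/b)$, applied with $a_1=\t\x^t_{u,j+1}<a_2=\t\x^t_{u,j}$ and $b_1=\t\x^{t-1}_{u,j} \leq b_2=\t\x^{t-1}_{u,j+1}$ (the latter by the inductive hypothesis). This yields $\Ddiv{\tilde\x}{\x^{t-1}} \leq \Ddiv{\x^t}{\x^{t-1}}$. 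If the inductive inequality is strict the swap strictly lowers the divergence, contradicting optimality; otherwise $\tilde\x$ is also optimal, and strict convexity of the shifted KL divergence forces $\tilde\x = \x^t$, contradicting $\x^t_{u,j} \neq \x^t_{u,j+1}$.

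The main obstacle I expect is the feasibility check for sibling constraints, where the swap naively seems to break an inequality. The key point is that the polytope explicitly includes a constraint for \emph{every} subset $S \subseteq \chi_{p(u)}$, so $S'$ and its swap-companion $S''$ are both present, and one can transfer feasibility across the swap at no cost. Once this is in hand, the rearrangement-plus-uniqueness finish is routine.
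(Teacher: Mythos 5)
Your proof is correct, but it takes a genuinely different route from the paper's. The paper argues dually: from the KKT conditions~(\ref{eq:kkt-ks}), $w_u \log (\t\x^t_{u,j}/\t\x^{t-1}_{u,j}) = a_{u,j}-b_{u,j}$ with $b_{u,j}\ge b_{u,j+1}$, so together with the inductive hypothesis $\x^{t-1}_{u,j}\le \x^{t-1}_{u,j+1}$ a violation $\x^t_{u,j}>\x^t_{u,j+1}$ forces $a_{u,j}>a_{u,j+1}$, hence some $\lambda_S>0$ with $(u,j)\in S$ and $(u,j+1)\notin S$; complementary slackness~(\ref{eq:tight}) makes $S$ tight, and feasibility of the swapped set $S'=(S\setminus\{(u,j)\})\cup\{(u,j+1)\}$ then gives $\x^t_{u,j+1}\ge \x^t_{u,j}$, a contradiction. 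You instead argue primally: swap the two coordinates, verify the swapped point stays in $P_t$ (your companion-set step is exactly the same structural fact — a constraint for \emph{every} $S\sse\chi_{p(u)}$ — but used for primal feasibility rather than via a dual multiplier), and show with the rearrangement identity and the inductive hypothesis that the Bregman objective does not increase, contradicting optimality, or uniqueness via strict convexity in the tie case. Your separate treatment of the root through Claim~\ref{clm:root-vals} is right and needed, since the swap could violate~(\ref{eq:ks-root}) there, and that claim's proof does not rely on monotonicity, so there is no circularity. What each approach buys: yours is more elementary and self-contained — no KKT certificates are needed for this lemma, and it exhibits directly that the projection is stable under sorting exchanges — at the price of the uniqueness/strict-convexity discussion (which, like the paper's division by $w_u$ in the KKT step, implicitly assumes positive node weights); the paper's version is shorter given that the KKT and complementary-slackness machinery is already set up and reused throughout \S\ref{sec:server}.
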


%\ksMono*

\begin{proof}
  %\cite[Lemma~3.6]{BCLLM17}
  Again assume that $\x^{t-1}_{u,j} \leq \x^{t-1}_{u,j+1}$.  In~(\ref{eq:kkt-ks})
  observe that $b_{u,j} \geq b_{u,j+1}$ because it sums up over more
  non-negative terms; this means $\x^t_{u,j}$ cannot be larger than $\x^t_{u,j+1}$ because of
  these terms. So we focus on the $a_{u,j}$ terms in~(\ref{eq:kkt-ks}).

  Suppose $\x^t_{u,j} > \x^t_{u,j+1}$, then it must be that
  $a_{u,j} > a_{u,j+1}$. So there is some $\lambda_S > 0$ with
  $(u,j) \in S$, yet $(u,j+1) \not\in S$. Consider
  $S' := (S - \{(u,j)\}) \cup \{(u,j+1)\}$. By constraint~(\ref{eq:ks-subsets})
  for the set $S'$,
  \[ \sum_{(v,\ell) \in S'} \x^t_{v,\ell} \geq \sum_{i \leq |S'|}
    \x^t_{p(u),i}= \sum_{i \leq |S|} \x^t_{p(u),i}
    \stackrel{\text{(\ref{eq:tight})}}{=} \sum_{(v,\ell) \in S}
    \x^t_{v,\ell}, \] where the last equality holds because $\lambda_S > 0$. This
  implies $\x^t_{u,j+1} \geq \x^t_{u,j}$, hence a contradiction.
\end{proof}

We say a set $S$ is \emph{tight} if
$\sum_{j \leq |S|} \x^t_{p(S), j} = \x^t(S) := \sum_{(v,\ell) \in S} \x^t_{v,\ell}$. % A
% node $u$ is \emph{tight at size $s$} and is \emph{tight at set $S$} if
% there exists a tight set $S \sse \chi_u$ of size $|S| = s$.
Let
$\calC_u$ denote the collection of tight sets in $\chi_u$, and
$\calC := \cup_u \calC_u$ be all the tight sets.

\begin{restatable}[Uncrossing]{lemma}{ksUncrossing}
  \label{lem:cross}
  For $S_1, S_2 \sse \chi_u$,
  $S_1, S_2 \in \calC \implies S_1 \cup S_2 \in \calC$. I.e., the union
  of tight sets with a common parent gives a tight set.
\end{restatable}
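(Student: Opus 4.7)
The plan is a standard uncrossing argument for tight constraints of the LP, leveraging the monotonicity property $x^t_{u,j} \leq x^t_{u,j+1}$ from Lemma~\ref{lem:mono}. Let $a = |S_1|$, $b = |S_2|$, $c = |S_1 \cap S_2|$, and $d = |S_1 \cup S_2|$, so $c + d = a + b$. The LP constraint~(\ref{eq:ks-subsets}) at $u$ gives
\[
\sum_{j \leq c} x^t_{u,j} \leq x^t(S_1 \cap S_2) \quad \text{and} \quad \sum_{j \leq d} x^t_{u,j} \leq x^t(S_1 \cup S_2),
\]
and the set identity $x^t(S_1 \cap S_2) + x^t(S_1 \cup S_2) = x^t(S_1) + x^t(S_2)$ holds trivially since each atom is counted the same number of times on both sides.

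The heart of the proof is the following monotonicity-based inequality:
\[
\sum_{j \leq c} x^t_{u,j} + \sum_{j \leq d} x^t_{u,j} \;\geq\; \sum_{j \leq a} x^t_{u,j} + \sum_{j \leq b} x^t_{u,j}.
\]
To establish this, I would count how many times each variable $x^t_{u,j}$ appears on each side. After cancellation, the difference of the two sides becomes $\sum_{j=b+1}^{d} x^t_{u,j} - \sum_{j=c+1}^{a} x^t_{u,j}$, where both sums contain exactly $a - c = d - b$ terms (assuming WLOG $a \leq b$, so $c \leq a \leq b \leq d$). Pairing the $i$-th term on each side, we have $b + i \geq a + i > c + i$, hence by Lemma~\ref{lem:mono} (monotonicity of $x^t_{u,\cdot}$) every term on the left dominates its partner on the right, giving the claimed inequality.

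Now I would chain everything: using tightness of $S_1$ and $S_2$,
\[
\sum_{j \leq c} x^t_{u,j} + \sum_{j \leq d} x^t_{u,j} \geq \sum_{j \leq a} x^t_{u,j} + \sum_{j \leq b} x^t_{u,j} = x^t(S_1) + x^t(S_2) = x^t(S_1 \cap S_2) + x^t(S_1 \cup S_2).
\]
Combined with the two LP constraint inequalities above, both of those must hold with equality, so both $S_1 \cap S_2$ and $S_1 \cup S_2$ are tight --- in particular $S_1 \cup S_2 \in \calC$, which is what we wanted (and as a bonus we also get the intersection).

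The main obstacle, such as it is, is the counting step justifying the first displayed inequality: one must be careful about the multiplicities with which each index $j$ appears on each side and pair up the leftover indices correctly so that monotonicity applies term by term. Once that combinatorial bookkeeping is done, the rest is simply combining three inequalities that are forced to be equalities.
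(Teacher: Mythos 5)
Your proof is correct, and it takes a somewhat different route than the paper's. The paper also reduces everything to Lemma~\ref{lem:mono} plus feasibility of~\eqref{eq:ks-subsets}, but its mechanism is value-based: it takes the maximum value $M$ over $S_1\cup S_2$ (say attained in $S_1$), uses tightness of $S_1$ and a swap argument to force every atom of $S_2\setminus S_1$ to have value exactly $M$, shows $\x^t_{u,|S_1|}\geq M$, and then uses monotonicity to lower-bound $\sum_{i\leq|S_1\cup S_2|}\x^t_{u,i}$ by $\x^t(S_1)+|S_2\setminus S_1|\cdot M=\x^t(S_1\cup S_2)$. Your argument is instead the classical supermodularity/counting form of uncrossing: with $g(S):=\x^t(S)-\sum_{j\leq|S|}\x^t_{u,j}\geq 0$, monotonicity of $j\mapsto \x^t_{u,j}$ gives $g(S_1\cap S_2)+g(S_1\cup S_2)\leq g(S_1)+g(S_2)=0$, forcing both terms to vanish. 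Your index-pairing step checks out (with $c\leq a\leq b\leq d$ and $a-c=d-b$, each leftover index on the union side dominates its partner, and the degenerate case $S_1\subseteq S_2$ leaves no leftover terms), and there is no circularity since Lemma~\ref{lem:mono} is proved independently of uncrossing. What your version buys: it is arguably cleaner, avoids reasoning about which set attains the maximum, and yields tightness of $S_1\cap S_2$ as a by-product (which the paper neither claims nor needs); the paper's version, in exchange, extracts the finer structural fact that all atoms of $S_2\setminus S_1$ share the common maximum value, which is not needed elsewhere.
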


\begin{proof}
  %\cite[Lemma~3.7]{BCLLM17}
  Let $M := \max_{(v,j) \in S_1 \cup S_2} \x^t_{v,j}$
  be the largest value in the union (say it is in $S_1$) and let
  $(v^*, j^*) \in S_1$ achieving this maximum value. Consider any
  $(v,j) \in S_2 \setminus S_1$: since the set $S_1$ is tight and
  $S' := (S_1 \cup \{(v, j)\})\setminus \{(v^*,j^*)\}$ satisfies the
  constraint~(\ref{eq:ks-subsets}), we infer that $\x^t_{v,j} = M$. From feasibility
  for the set $S_1 \setminus \{(v^*,j^*)\}$, we know that
  $\x^t_{u,|S_1|} \geq M$ so using monotonicity of Lemma~\ref{lem:mono},
  \[ \sum_{i \leq |S_1 \cup S_2|} \x^t_{u,i} \geq \x^t(S_1) +
    |S_2\setminus S_1| \cdot M = \x^t(S_1 \cup S_2).\] The converse direction
  (inequality) follows from feasibility, and hence $S_1 \cup S_2$ is also
  tight.
\end{proof}

\ksFlow*

\begin{proof}
  %\cite[Lemma~3.8]{BCLLM17}
  We assume these properties hold for $\x^{t-1}$, and show
 them for $\x^t$.  For each vertex $u$, let $S_u \sse \chi_u$ be the largest
  tight set with respect to $\x^t$ in $\calC_u$. By definition of tight sets, for each depth $d$,
  \begin{gather}
    \sum_{u \in V_d} \sum_{j \leq |S_u|} \x^t_{u,j} = \sum_{u \in V_d}
    \x^t(S_u). \label{eq:11a}
  \end{gather}
  We claim that
  \begin{gather}
    \sum_{u \in V_d} \sum_{j > |S_u|} \x^t_{u,j} \geq \sum_{(v,\ell) \in
      V_{d+1} \setminus (\cup_{u \in V_d} S_u)} \x^t_{v,\ell}. \label{eq:11b}
  \end{gather}
  By the constraints~(\ref{eq:ks-subsets}), we have
  $\sum_{u \in V_d} \sum_{j \leq |S_u|} \x^{t-1}_{u,j} \leq \sum_{u \in
    V_d} \x^{t-1}(S_u)$. Since we inductively assumed the
  $\x^{t-1}$-mass at each level of the tree was the same, collecting the
  terms not appearing in the above inequality gives us
  \begin{gather}
    \sum_{u \in V_d} \sum_{j > |S_u|} \x^{t-1}_{u,j} \geq \sum_{(v,\ell)
      \in V_{d+1} \setminus (\cup_{u \in V_d} S_u)} \x^{t-1}_{v,\ell}. \label{eq:11c}
  \end{gather}
  Now, for any $\x^t_{u,j}$ on the left hand side of~(\ref{eq:11b}), every set
  $T \sse \chi_u$ with $j \leq |T|$ is not tight, and has
  $\lambda_T = 0$. This means $b_{u,j} = 0$ in~(\ref{eq:kkt-ks}) and
  hence $\x^t_{u,j}$ is increasing, i.e.,
  $\x^t_{u,j} \geq \x^{t-1}_{u,j}$. Moreover, each term on the right
  side of~(\ref{eq:11b}) has $a_{v,\ell} = 0$ and is decreasing, i.e.,
  $\x^t_{u,j} \leq \x^{t-1}_{u,j}$: from the maximality of $S_u$ and Lemma \ref{lem:cross} all sets $T$ containing $(v,\ell)$ are non-tight and hence have $\lambda_T = 0$. Using these inequalities
  in~(\ref{eq:11c}) gives us~(\ref{eq:11b}). And together
  with~(\ref{eq:11a}) gives us
  $\sum_{u \in V_d} \sum_j \x^t_{u,j} \geq \sum_{u \in V_{d+1}} \sum_j
  \x^t_{u,j}$. From~(\ref{eq:ks-subsets}) we have the converse direction
  (inequalities)
  $\sum_j \x^t_{u,j} \leq \sum_{(v,\ell) \in \chi_u} \x^t_{v, \ell}$ for
  each node $u \in V_d$, so each such inequality must be tight.

  Since each leaf $u \neq r_t$ has $a_{u,1}$ terms but no $b_{u,1}$
  terms in~(\ref{eq:kkt-ks}), $\x^t_{u,1} \geq \x^{t-1}_{u,1}$. Since we
  just proved that the $\x^t$-value at each node equals the $\x^t$-value
  at its children, there is a ``flow'' of $\x^t$-measure from $r_t$ to
  all the other leaves. In particular, for each node not on the
  $\rootvtx$-$r_t$ path, $\sum_j \x^t_{u,j} \geq \sum_j \x^{t-1}_{u,j}$.
\end{proof}

\ksLevels*

\begin{proof}
  Expanding the first summation gives
  \begin{align*}
    \sum_{j} B^t_{u,j}
    = \sum_{j} \t\x^t_{u,j}  \sum_{T \sse \chi_u: |T| \geq j}
      \lambda_T = \sum_{T \sse \chi_u} \lambda_T \sum_{j \leq |T|}
      \t\x^t_{p(T), j}.
  \end{align*}
  By complementary slackness~(\ref{eq:tight}), we have
  $\lambda_T \sum_{j \leq |T|} \x^t_{p(T), j} = \lambda_T \sum_{(u,j)
    \in T} \x^t_{u, j} = \lambda_T \, \x^t(T)$. However, since there are
  $|T|$ terms on both sides, adding $\lambda_T |T|\d$ to both sides gives us the
  desired inequality for the shifted variables $\t\x$. Now summing up over
  all $T$ gives
  \begin{gather}
    \sum_{T \sse \chi_u} \lambda_T \sum_{j \leq |T|} \t\x^t_{p(T), j} =
    \sum_{T \sse \chi_u} \lambda_T \; \t\x^t(T). \label{eq:9}
  \end{gather}
  And the second summation in the statement of the lemma is
  \begin{gather}
    \sum_{(v,\ell) \in \chi_u} A^t_{v,\ell} = \sum_{(v,\ell)
      \in \chi_u} \t\x^t_{v,\ell} \sum_{S \sse \chi_u: (v,\ell) \in S}
    \lambda_S = \sum_{S \sse \chi_u} \lambda_S \sum_{(v,\ell) \in S}
    \t\x^t_{v,\ell}  = \sum_{S \sse \chi_u} \lambda_S \; \t\x^t(S).
    \label{eq:8}
  \end{gather}
  The two expressions are equal, hence the claim.
% For
%   the analogous statement about $\xt$, since every variable gets the
%   same shift $\delta$,~(\ref{eq:tight}) also holds for the $\xt$
%   variables.
\end{proof}

\subsection{Proof of Lemma~\ref{lem:second-bound}}
\label{sec:proof-secondbound}

Recall the statement of Lemma~\ref{lem:second-bound}, where $W(\x) :=
\sum_u w_u \sum_j \x_{u,j}$.
\lemSecond*

If $(\lambda, \gamma_t)$ are the optimal dual variables for the projection problem \eqref{eq:ks-root}-\eqref{eq:ks-req}, we can get an equivalent characterization of the  optimal solution
$\x^t$ as follows:

\begin{claim}
  \label{clm:equiv}
  Vector $\x^t$ is an optimal solution for the optimization problem
  \begin{gather}
    \min_{x \in P} \Ddiv{x}{\x^{t-1}}
  +\gamma_t x_{r_t}. \label{eq:12}
  \end{gather}
\end{claim}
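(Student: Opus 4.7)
\textbf{Proof plan for Claim~\ref{clm:equiv}.} The plan is to argue that $\x^t$ satisfies the KKT conditions of the convex program \eqref{eq:12}, with the very same Lagrange multipliers $\{\lambda_S\}$ that were used in the KKT system \eqref{eq:kkt-ks}--\eqref{eq:kktprime-ks} for the original projection. Since \eqref{eq:12} is a convex minimization over the polytope $P$ (the objective is strictly convex in $x$ because $\Ddiv{\cdot}{\x^{t-1}}$ is, plus a linear term, and $P$ is defined by affine inequalities), satisfying the KKT conditions together with primal feasibility is sufficient for optimality.

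First I will record that $\x^t \in P_t \subseteq P$ by construction, so primal feasibility for \eqref{eq:12} is immediate. Next I will write out the stationarity condition for \eqref{eq:12}: since the gradient of the additional term $\gamma_t x_{r_t}$ puts an extra $\gamma_t$ only into the $(r_t,1)$ coordinate, the stationarity equations for atoms $(u,j) \neq (r_t,1)$ read
\[
w_u \log \frac{\t\x^t_{u,j}}{\t\x^{t-1}_{u,j}} = a_{u,j} - b_{u,j},
\]
which are exactly \eqref{eq:kkt-ks} (and \eqref{eq:kktroot-ks} for the root), and the equation at $(r_t,1)$ reads
\[
w_{r_t} \log \frac{\t\x^t_{r_t,1}}{\t\x^{t-1}_{r_t,1}} = a_{r_t,1} - \gamma_t,
\]
which is exactly \eqref{eq:kktprime-ks}. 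Thus the same dual variables $\{\lambda_S\}$ certify stationarity for \eqref{eq:12}, with $\gamma_t$ now appearing as the coefficient of the linear penalty rather than as the multiplier of the constraint $x_{r_t} \leq \delta$.

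Finally I will observe that the complementary slackness conditions in \eqref{eq:tight} and \eqref{eq:ks-csroot} involve only the polytope constraints defining $P$ (the same ones in both problems), and they already hold for $\x^t$ with these multipliers; there are no complementary slackness conditions tied to $x_{r_t} \leq \delta$ in \eqref{eq:12} because that constraint no longer appears. Dual feasibility ($\lambda_S, \lambda_{\rootvtx,j}, \gamma_t \geq 0$) is also inherited. Hence $\x^t$ together with these multipliers satisfies the full KKT system for \eqref{eq:12}, and by convexity it is an optimal solution.

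The argument has no real obstacle; the only thing to watch is the bookkeeping at the single atom $(r_t,1)$, where the $-\gamma_t$ contribution switches conceptual roles (from Lagrange multiplier of an inequality to coefficient of a linear term in the objective). Once that is pointed out, the KKT conditions of the two problems coincide coordinate by coordinate, so the same $\x^t$ solves both.
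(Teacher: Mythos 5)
Your proposal is correct and is essentially the paper's own argument: the paper simply says to ``Lagrangify'' the constraint $x_{r_t}\leq\delta$ using the optimal multiplier $\gamma_t$, which is exactly what you do, only spelled out by checking that the same multipliers $\{\lambda_S\},\{\lambda_{\rootvtx,j}\}$ satisfy stationarity, dual feasibility, and complementary slackness for \eqref{eq:12}, with $\gamma_t$ reappearing as the coefficient of the linear term at the atom $(r_t,1)$. Since the objective is convex and the constraints affine, KKT sufficiency gives optimality, matching the paper's one-line appeal to the same fact.
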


\begin{proof}
  Recall that we defined $\x^t$ to be the minimizer of just the first term,
  subject to the constraints that $x \in P$ and $x \leq \delta$. If we now
  Lagrangify the second constraint, and use the fact that $\gamma_t$ is an
  optimal Lagrange multiplier, we get the equivalent problem
  $\min_{x \in P} \Ddiv{x}{\x^{t-1}} +\gamma_t x_{r_t}$. 
\end{proof}

Observe that $\Ddiv{x}{x'}$ is the Bregman divergence corresponding to
the strongly convex function
\[ h(x) := \sum_u w_u \sum_j \t{x}_{uj} \log \t{x}_{uj} . \]
Now, using the definition of Bregman divergences and some simple
algebra, we get that
\begin{align}
  \Ddiv{\y^t}{\x^{t-1}} - \Ddiv{\y^t}{\x^t}
  &= \Ddiv{\x^t}{\x^{t-1}} + \ip{ \gr
    h(\x^{t-1}) - \gr h(\x^t), \x^t  - \y^t } \notag\\
  &\geq \ip{ \gr
    h(\x^{t-1}) - \gr h(\x^t), \x^t - \y^t },  \label{eq:14}
\end{align}
where the inequality uses non-negativity of Bregman divergences.
Manipulating the linear terms, 
\begin{align}
  \Pdiv{\y^t}{\x^{t-1}} - \Pdiv{\y^t}{\x^t}
  &\geq \ip{ \gr
    h(\x^{t-1}) - \gr h(\x^t), \x^t - \y^t } + W(\x^t) - W(\x^{t-1}).  \label{eq:14a}
\end{align}
To prove Lemma~\ref{lem:second-bound}, we need to bound the inner
product term from below. For a point $u \in P$, define the normal cone at $u$ to be
$N_{P}(u) := \{d \mid \ip{d,v-u} \le 0 \;\forall v \in P\}$.

\begin{claim}
  \label{clm:normals}
  Let $\mathbf{e}_{r_t} \in \{0,1\}^N$ be the vector that has a $1$ in
  the coordinate corresponding to leaf $r_t$, and 0s otherwise. Then
  \[ \gr h(\x^{t-1}) - \gr h(\x^t) = \gamma_t \mathbf{e}_{r_t} + d \]
  where $d$ belongs to the normal cone $N_{P}(\x^t)$.
\end{claim}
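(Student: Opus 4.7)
The plan is to invoke the first-order optimality conditions for the equivalent optimization problem from Claim~\ref{clm:equiv}, namely $\min_{x \in P} \Ddiv{x}{\x^{t-1}} + \gamma_t x_{r_t}$. Since $\x^t$ is an optimal solution of this convex problem, the standard variational characterization of constrained minima says that the negative of the objective's gradient at $\x^t$ lies in the normal cone $N_P(\x^t)$, i.e., $-\nabla_x\big(\Ddiv{x}{\x^{t-1}} + \gamma_t x_{r_t}\big)\big|_{x=\x^t} \in N_P(\x^t)$.

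Next, I would compute these gradients explicitly. Since $\Ddiv{x}{\x^{t-1}} = h(x) - h(\x^{t-1}) - \ip{\gr h(\x^{t-1}), x - \x^{t-1}}$, the gradient with respect to $x$ is exactly $\gr h(x) - \gr h(\x^{t-1})$. The gradient of $\gamma_t x_{r_t}$ is $\gamma_t \mathbf{e}_{r_t}$. Plugging in $x = \x^t$, the optimality condition becomes
\[
-\big(\gr h(\x^t) - \gr h(\x^{t-1}) + \gamma_t \mathbf{e}_{r_t}\big) \;\in\; N_P(\x^t),
\]
which, after rearranging, reads
\[
\gr h(\x^{t-1}) - \gr h(\x^t) - \gamma_t \mathbf{e}_{r_t} \;\in\; N_P(\x^t).
\]
Defining $d := \gr h(\x^{t-1}) - \gr h(\x^t) - \gamma_t \mathbf{e}_{r_t}$ then gives the decomposition in the statement of the claim.

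The only subtle points to verify are: (i) that strong duality applies so that $\gamma_t$ is indeed a valid multiplier that makes Claim~\ref{clm:equiv} equivalent to the original projection problem (this is addressed earlier in the paper, since all constraints defining $P_t$ are affine and the problem is feasible); and (ii) that the first-order optimality condition really produces a vector in $N_P(\x^t)$, which is standard for convex minimization over a convex set when the objective is differentiable on a neighborhood of $\x^t$ (true here because $\t{\x}^t > 0$, so $h$ is smooth at $\x^t$). There is no real obstacle — the claim is essentially a rephrasing of the KKT stationarity condition in geometric language, and the main content is simply bookkeeping of the two gradient contributions.
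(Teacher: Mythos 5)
Your argument is correct and is essentially the paper's own proof: both apply the first-order optimality (normal cone) condition to the Lagrangified problem of Claim~\ref{clm:equiv}, compute the gradient of the objective as $\gr h(x) - \gr h(\x^{t-1}) + \gamma_t \mathbf{e}_{r_t}$, and rearrange. Your added remarks on strong duality and smoothness of $h$ at $\x^t$ (thanks to the $\delta$-shift) are fine and consistent with what the paper assumes.
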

\begin{proof}
  Since $\x^t$ solves the optimization problem~(\ref{eq:12}), the
  first-order optimality criteria implies that the gradient of the
  objective function, when evaluated at $\x^t$, belongs to the
  negative normal cone $- N_{P}(\x^t)$. Recall that the gradient at any
   point $x$ is
  \[ \gr\Ddiv{x}{\x^{t-1}} + \gamma_t \mathbf{e}_{r_t} = \gr h(x)
    - \gr h(\x^{t-1}) + \gamma_t \mathbf{e}_{r_t}, \] so
  $\gr h(\x^t) - \gr h(\x^{t-1}) + \gamma_t \mathbf{e}_{r_t} = -d$,
  where $d \in N_{P}(\x^t)$. Rearranging completes the proof.
\end{proof}

Substituting this expression into~(\ref{eq:14a}) implies that the inner
product term is
\begin{gather}
  \ip{ \gamma_t \mathbf{e}_{r_t} + d , \x^t - \y^t } \geq \ip {\gamma_t
    \mathbf{e}_{r_t}, \x^t - \y^t } = \gamma_t (\x^t_{r_t} - \y^t_{r_t}) =
 \gamma_t \delta.
\end{gather}
The inequality uses the definition of the normal cone $N_P(\x^t)$ and
that $\y^t \in P$.  The last equality uses  $\x^t_{r_t} = \delta$
and $\y^t_{r_t} = 0$. This proves Lemma~\ref{lem:second-bound}.

\subsection{Miscellaneous Lemmas}
\label{sec:misc-lemmas}

\begin{lemma}
  \label{lem:emd-to-lone}
  Consider vectors $\x,\x' \in P$, and let $\x,\x' \in \R^n$ be their respective restrictions to the leaf atoms. Then \[ d(\x, \x') \leq \,\Tlone{\x - \x'}. \] Moreover, if $\x,\x'$ are integer vectors, we get equality above.
%M: Changed the statement, I think we need at least flow conservation
%  For vectors $\x, \x' \in \R^N$, let $\x, \x' \in \R^n$ be the
%  restrictions of these vectors to just the leaf atoms. Then
%  \[ d(\x, \x') \leq \,\Tlone{\x - \x'}. \]
%  Moreover, if $\x, \x'$ are integer vectors and belong to $P$, we get
%  equality above.
\end{lemma}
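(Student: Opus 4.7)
The plan is to reduce the lemma to a node-by-node triangle inequality, after extracting one structural fact: for the vectors $\x \in P$ relevant here, the flow identity
\[\sum_j \x_{u,j} \;=\; L(u) \;:=\; \sum_{v \in \text{leaves}(T_u)} \x_{v,1}\]
holds at every node $u$. I would first prove the one-sided inequality $\sum_j \x_{u,j} \le L(u)$ for any $\x \in P$ by bottom-up induction on the tree: the constraint of $P$ with $S = \chi_u$ (all child atoms) gives $\sum_j \x_{u,j} \le \sum_{(v,\ell) \in \chi_u} \x_{v,\ell}$, which combined with the inductive hypothesis at each child (leaves being trivial, since a leaf has a single atom with $\x_{v,1} = L(v)$) yields $\sum_j \x_{u,j} \le L(u)$. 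Equality holds in the two regimes we need: for integer vectors under the canonical encoding $\y_{u,j} = \mathbf{1}\{|\text{servers in }T_u| < j\}$ it is immediate from the definition, and for the algorithm's iterates it is precisely the content of Lemma~\ref{lem:flow}.

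Granting the flow identity for both $\x$ and $\x'$, the main inequality follows from a single triangle inequality at each node:
\[|L(u) - L'(u)| \;=\; \Bigl|\sum_j (\x_{u,j} - \x'_{u,j})\Bigr| \;\le\; \sum_j |\x_{u,j} - \x'_{u,j}|.\]
Multiplying by $w_u$ and summing over $u \in V(T)$ gives exactly $d(\x, \x') \le \Tlone{\x - \x'}$.

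For equality in the integer case, I would observe that the canonical encoding makes each tuple $(\y_{u,1}, \ldots, \y_{u,|L_u|})$ a monotone $\{0,1\}$-sequence --- a block of zeros (one per server in $T_u$) followed by a block of ones (one per non-server), with switch point at position $|L_u| - L(u) + 1$. The $\ell_1$ distance between two such monotone sequences is exactly the gap between their switch points, which equals $|L(u) - L'(u)|$; so the triangle inequality above becomes an equality at every node, and summation preserves this. The only step that requires any thought is identifying the flow identity as the essential hypothesis --- everything else is one line of triangle inequality plus bookkeeping.
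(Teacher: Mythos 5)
Your proof is correct and follows essentially the same route as the paper's: invoke the per-node flow identity (Lemma~\ref{lem:flow} for the algorithm's iterates, the canonical encoding for integral solutions), apply the triangle inequality $|\sum_j(\x_{u,j}-\x'_{u,j})| \le \sum_j|\x_{u,j}-\x'_{u,j}|$ at each node, and use monotonicity of the coordinate tuples (the paper cites Lemma~\ref{lem:mono}, you derive it from the 0/1 block structure) to get equality in the integer case. If anything, you are more careful than the paper in flagging that arbitrary points of $P$ only satisfy the one-sided inequality $\sum_j \x_{u,j} \le L(u)$, so the flow identity must be supplied by the specific vectors to which the lemma is applied.
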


\begin{proof}
	From the flow conservation Lemma \ref{lem:flow}, for each node $u$ we have $\sum_j \x_{u,j} = \sum_{v \in leaves(T_u)} \x_{v,1} = \sum_{v \in leaves(T_u)} \x_{v,1}$, and the same holds for $\x'$ and $\x'$. Therefore
	\begin{align}
		d(\x,\x') &= \sum_u w_u\, \Bigg|\sum_{v \in leaves(T_u)} \x_{v,1} - \sum_{v \in leaves(T_u)} \x'_{v,1}\Bigg| \notag\\
			&=\sum_u w_u\, \Bigg|\sum_j (\x_{u,j} - \x'_{u,j})\Bigg| \notag\\
			&\le \sum_u w_u \sum_j |\x_{u,j} - \x'_{u,j}| = \Tlone{\x - \x'}, \label{eq:emdLone}
	\end{align}
	concluding the first part of the proof.
	
	For the second part, when $\x$ and $\x'$ are integral $|\sum_j (\x_{u,j} - \x'_{u,j})|$ equals $|\textrm{\#1's in $(\x_{u,j})_j$} - \textrm{\#1's in $(\x'_{u,j})_j$}|$. Moreover, by the monotonicity Lemma \ref{lem:mono}, $\sum_j |\x_{u,j} - \x'_{u,j}|$ equal the same quantity. Thus, inequality \eqref{eq:emdLone} holds at equality and hence $d(\x,\x') = \Tlone{\x-\x'}$. This concludes the proof.
\end{proof}

%%% Local Variables:
%%% mode: latex
%%% TeX-master: "main"
%%% End:

\end{document}

%%% Local Variables:
%%% mode: latex
%%% TeX-master: t
%%% End: